\definecolor{mycitecolor}{RGB}{71, 191, 38}
\definecolor{mylinkcolor}{RGB}{40, 115, 201}
\def\thm@space@setup{\thm@preskip=10pt \thm@postskip=0pt}
\newtheoremstyle{mystyle}%    % Name
  {}%                         % Space above
  {}%                         % Space below
  {}%                         % Body font
  {}%                         % Indent amount
  {\bfseries}%                % Theorem head font
  {.}%                        % Punctuation after theorem head
  { }%                        % Space after theorem head, ' ', or \newline
  {}%                         % Theorem head spec (can be left empty, meaning `normal')
\newtheorem{remark}{Remark}
\newtheorem{theorem}{Theorem}
\newtheorem{proposition}{Proposition}
\newtheorem{lemma}{Lemma}
\newtheorem{definition}{Definition}
\newtheorem{problem}{Problem}
\newtheorem{cor}{Corollary}
\DeclareMathOperator{\Image}{Im}
\DeclareMathOperator{\rank}{rank}
\DeclareMathOperator{\interior}{int}
\DeclareMathOperator{\Span}{span}
\DeclareMathOperator{\co}{co}
\DeclareMathOperator{\diag}{diag}
\title{Losing Control of your Network? Try Resilience Theory}
\author{Jean-Baptiste Bouvier\thanks{Postdoctoral Researcher in Aerospace Engineering, University of Illinois Urbana-Champaign, USA.}, Sai Pushpak Nandanoori\thanks{Staff Research Engineer, Pacific Northwest National Laboratory, Richland WA, USA.}, Melkior Ornik\thanks{Assistant Professor in Aerospace Engineering and Coordinated Science Laboratory, University of Illinois Urbana-Champaign, USA.} }
\begin{document}

%% Shrink space around figures.
\setlength{\textfloatsep}{0pt}
\setlength{\textfloatsep}{20pt plus 2pt minus 4pt}
\setlength{\textfloatsep}{10pt plus 2pt minus 4pt}
\setlength{\textfloatsep}{10pt plus 1pt minus 2pt}
\setlength{\dbltextfloatsep}{3pt}
\setlength{\intextsep}{5pt}
\setlength{\abovecaptionskip}{5pt}
\setlength{\belowcaptionskip}{3pt}
\setlength{\parskip}{4pt}
%% Shrink space around equations
\setlength{\abovedisplayskip}{3pt}
\setlength{\belowdisplayskip}{3pt}
\setlength\abovedisplayshortskip{3pt}
\setlength\belowdisplayshortskip{3pt}
%% Shrink space globally around enumerate and itemize
\setlist{nosep}

\date{}
\maketitle

\begin{abstract}
    Resilience of cyber-physical networks to unexpected failures is a critical need widely recognized across domains. For instance, power grids, telecommunication networks, transportation infrastructures and water treatment systems have all been subject to disruptive malfunctions and catastrophic cyber-attacks. Following such adverse events, we investigate scenarios where a node of a \emph{linear} network suffers a loss of control authority over some of its actuators. These actuators are not following the controller's commands and are instead producing undesirable outputs. The repercussions of such a loss of control can propagate and destabilize the whole network despite the malfunction occurring at a single node. To assess system vulnerability, we establish resilience conditions for networks with a subsystem enduring a loss of control authority over some of its actuators. Furthermore, we quantify the destabilizing impact on the overall network when such a malfunction perturbs a nonresilient subsystem. We illustrate our resilience conditions on two academic examples, on an islanded microgrid, and on the linearized IEEE 39-bus system.
\end{abstract}

\section{Introduction}

Resilience of cyber-physical networks to catastrophic events is a crucial challenge, widely recognized across government levels \citep{White_house, Europe} and research fields \citep{PNNL_resilience, ISCS}. Natural disasters, terrorist acts, and cyber-attacks all have the potential to paralyze the cyber-physical infrastructures upon which our society inconspicuously relies, such as power grids, telecommunication networks, sewage systems and transportation infrastructures \citep{security, actuator_attack, traffic}. 
Motivated by these issues, we investigate the resilience of linear networks to partial loss of control authority over their actuators. This class of malfunction, initially introduced in \citep{Melkior}, is characterized by some of the actuators producing uncontrolled and thus possibly undesirable inputs with their full capabilities \citep{IFAC}. This framework encompasses scenarios where actuators are taken over, for instance, by a cyber-attack \citep{security, actuator_attack, traffic}, and scenarios where the actuators become unresponsive or damaged, for instance, by a software bug \citep{ISS_thruster}.

Building on fault-detection and isolation theory \citep{actuators_measures} coupled with cyber-attack detection \citep{security} and state reconstruction methods \citep{actuator_attack}, we assume that the controller has real-time readings of the outputs of the malfunctioning actuators. Our objective is then to assess the network's resilient stabilizability in the face of these possibly undesirable inputs \citep{IFAC}.

Contrary to previous works \citep{IFAC, TAC, PNNL_resilience}, we consider actuators with bounded amplitude instead of $\mathcal{L}_2$ constraints for applicability purposes. This choice also prevents the direct use of work \citep{actuator_attack} which studies the observability properties of cyber-physical systems under unbounded adversarial attacks.

Previous works on resilience theory \citep{SIAM_CT, ECC_extended} quantified the degradation of the reachability capabilities of an isolated system enduring a partial loss of control authority over its actuators. When such a malfunctioning system is not isolated, but belongs instead to a network of interconnected systems, the loss of control can start a chain reaction capable of destabilizing the entire network. The main contribution of this work is to study these destabilizing repercussions.
Albeit using a different setting, works \citep{PNNL_resilience, network_cyber_attacks} also study the resilience of networks. Relying on observability and controllability, these works quantify the network's capabilities to detect a perturbed state and steer it back to its nominal value \citep{PNNL_resilience, network_cyber_attacks}. Because the approach of such papers does not model the perturbation it cannot handle a malfunctioning actuator producing undesirable inputs. Additionally, works \citep{PNNL_resilience, network_cyber_attacks} require $\mathcal{L}_2$ inputs, which is incompatible with numerous applications like a power grid where voltage and intensities  must remain in a specified range.

Traditionally, network resilience has been investigated through topological approaches \citep{sundaram2010distributed, leblanc2013resilient, como2012robust, xu2023interconnectedness, E2E_resilient_reachability} using the network graph to reach a consensus between all nodes \citep{dolev1986reaching, leblanc2013resilient, sundaram2010distributed}. In this setting, after a loss of control authority over $f$ nodes, at least $2f+1$ disjoint paths are required for two nodes to exchange reliable information \citep{sundaram2010distributed}. These works typically emphasize network architecture to the detriment of node dynamics, which are either unspecified \citep{dolev1986reaching, E2E_resilient_reachability, xu2023interconnectedness}, or restricted to a weighted average of neighbor states \citep{leblanc2013resilient, sundaram2010distributed}, whereas we focus on networks of control systems with generic linear dynamics. Our control framework is also broader than the domain specific resilience studies focusing for instance on public transportation networks \citep{xu2023interconnectedness}, Internet routing problems \citep{E2E_resilient_reachability}, or fluid transport networks \citep{como2012robust}.

In line with previous works studying actuator attacks \citep{actuator_attack}, network cyber attacks \citep{network_cyber_attacks}, distributed consensus \citep{sundaram2010distributed, leblanc2013resilient}, and power networks stability \citep{bidram2013distributed, bidram2013secondary, xie2019distributed, Sai_IEEE_39}, we choose to focus on networks with linearized dynamics. The reader will realize that even with linear dynamics, the resilience of networks involves a copious amount of technical calculations.

The contributions of this work are threefold.
\begin{enumerate}
    \item We establish an \emph{equivalence} condition to characterize resilient linear networks. This condition ensures that the network as a whole is resiliently stabilizable despite the loss of control authority over some actuators.
    
    \item We \emph{quantify the resilience} of fully-actuated networks having lost control over a nonresilient subsystem. More precisely, we calculate the maximal magnitude of undesirable inputs that the nonresilient subsystem can withstand without destabilizing the rest of the network by comparing the magnitude of perturbations due to subsystem couplings and their individual stability.
    
    \item We extend the resilience quantification to \emph{underactuated} networks losing control over a nonresilient subsystem. In this scenario, the malfunctioning subsystem prevents network stabilization but a feedback controller can maintain the network state within bounds.
\end{enumerate}

The remainder of this paper is organized as follows.
Section~\ref{sec: network prelim} introduces the network dynamics and states our problems of interest.
Building on prior resilience work \citep{ECC_extended}, Section~\ref{sec: resilient} establishes stabilizability conditions for resilient linear networks. 
Section~\ref{sec: nonresilient} quantifies the resilient stabilizability of networks where a loss of control authority impacts a nonresilient subsystem.
% We specify these resilience conditions to the case of networks of two systems with identical internal dynamics in Section~\ref{sec:2 systems}. 
We illustrate our work on two academic examples, on an islanded microgrid, and on the linearized IEEE 39-bus system in Section~\ref{sec: example}. 
Finally, Appendix~\ref{apx} gathers supporting lemmata.

\textit{Notation:} We denote the integer interval from $a$ to $b$, inclusive, with $[\![a,b]\!]$. For a set $\Lambda \subseteq \mathbb{C}$, we say that $Re(\Lambda) \leq 0$ (resp. $Re(\Lambda) = 0$) if the real part of each $\lambda \in \Lambda$ verifies $Re(\lambda) \leq 0$ (resp. $Re(\lambda) = 0$).
The norm of a matrix $A$ is $\|A\| := \underset{x \neq 0}{\sup} \frac{\|Ax\|}{\|x\|} = \underset{\|x\| = 1}{\max} \|Ax\|$, its image is $\Image(A)$, and the set of its eigenvalues is $\Lambda(A)$.
If $A$ is positive definite, denoted $A \succ 0$, then its extremal eigenvalues are $\lambda_{min}^A$ and $\lambda_{max}^A$ and $A$ generates a vector norm $\|x\|_A := \sqrt{x^\top Ax}$. 
The controllability matrix of pair $(A,B)$ is $\mathcal{C}(A,B) := \big[B\, AB\, \hdots\, A^{n-1}B\big]$.
For a matrix $B \in \mathbb{R}^{n \times m}$ and a set $\mathcal{U} \subseteq \mathbb{R}^m$ we use $B\mathcal{U}$ to denote the set $B\mathcal{U} := \big\{ Bu : u \in \mathcal{U} \big\} \subseteq \mathbb{R}^n$. 
The block diagonal matrix composed of matrices $A_1, \hdots, A_n$ is denoted by $\diag(A_1, \hdots, A_n)$. The zero matrix of size $n \times m$ is denoted by $0_{n, m}$, the identity matrix of size $n$ is $I_n$, and the vector of ones is $\mathbf{1}_n \in \mathbb{R}^n$.
The convex hull of a set $\mathcal{Z}$ is denoted by $\co(\mathcal{Z})$, its dimension by $\dim(\mathcal{Z})$, its boundary by $\partial \mathcal{Z}$, its interior by $\interior(\mathcal{Z})$, and its orthogonal complement by $\mathcal{Z}^\perp$. 
The set of time functions taking value in $\mathcal{Z}$ is denoted $\mathcal{F}(\mathcal{Z}) := \big\{ f : [0,+\infty) \to \mathcal{Z}\big\}$.
The Minkowski addition of sets $\mathcal{X}$ and $\mathcal{Y}$ in $\mathbb{R}^n$ is $\mathcal{X} \oplus \mathcal{Y} := \big\{ x + y : x \in \mathcal{X}, y \in \mathcal{Y} \big\}$ and their Minkowski difference is $\mathcal{X} \ominus \mathcal{Y} := \big\{ z \in \mathbb{R}^n : \{z\} \oplus \mathcal{Y} \subseteq \mathcal{X} \big\}$. 
The operator $\Span(\cdot)$ maps a set of vectors to their linear span.

\section{Networks preliminaries}\label{sec: network prelim}

In this section we introduce the network under study and our two problems of interest.
Inspired by \citep{Kalman_network}, we consider a network of $N$ linear subsystems of dynamics
\begin{subequations}\label{eq:N network L F}
\renewcommand{\theequation}{ \theparentequation{}-\arabic{equation} }
    \begin{align}
         \dot x_1(t) &= A_1 x_1(t) + \bar{B}_1 \bar{u}_1(t) + \hspace{-2mm} \displaystyle\sum_{k \, \in\, \mathcal{N}_1} \hspace{-2mm} L_{1,k} y_k(t), & y_1(t) &= F_1 x_1(t), & x_1(0) &= x_1^0 \in \mathbb{R}^{n_1}, \label{eq:subsys 1 L F} \\
         & \quad \vdots &\vdots & &\vdots & \nonumber \\
         \dot x_N(t) &= A_N x_N(t) + \bar{B}_N \bar{u}_N(t) + \hspace{-2mm} \displaystyle\sum_{k \, \in\, \mathcal{N}_N} \hspace{-2mm} L_{N,k} y_k(t), & y_N(t) &= F_N x_N(t), & x_N(0) &= x_N^0 \in \mathbb{R}^{n_N}, \tag{\theparentequation-N}\label{eq:subsys N L F}
    \end{align}
\end{subequations}
where $x_i \in \mathbb{R}^{n_i}$, $\bar{u}_i \in \mathbb{R}^{m_i}$ and $y_i \in \mathbb{R}^{q_i}$ are respectively the state, the control input and the output of subsystem $i \in [\![1,N]\!]$. Additionally, $\mathcal{N}_i \subseteq [\![1,N]\!]$ is the set of neighbors of subsystem $i$ with $i \notin \mathcal{N}_i$, while $A_i \in \mathbb{R}^{n_i \times n_i}$, $\bar{B}_i \in \mathbb{R}^{n_i \times m_i}$, $L_{i,k} \in \mathbb{R}^{n_i \times q_i}$ and $F_i \in \mathbb{R}^{q_i \times n_i}$ are constant matrices. The set of admissible control inputs for subsystem $i$ is the unit hypercube of $\mathbb{R}^{m_i}$, i.e., $\bar{u}_i(t) \in \bar{\mathcal{U}}_i := [-1, 1]^{m_i}$. To alleviate the notations, we introduce matrices $D_{i,k} := L_{i,k}F_k$ to represent the connection between subsystems $i$ and $k$ for $i \in [\![1,N]\!]$ and $k \in \mathcal{N}_i$. Then,
\begin{subequations}\label{eq:N network D}
\renewcommand{\theequation}{ \theparentequation{}-\arabic{equation} }
    \begin{align}
         \dot x_1(t) &= A_1 x_1(t) + \bar{B}_1 \bar{u}_1(t) + \displaystyle\sum_{k \, \in\, \mathcal{N}_1} D_{1,k} x_k(t), \qquad& x_1(0) &= x_1^0 \in \mathbb{R}^{n_1}, \label{eq:subsys 1 D} \\
         & \quad \vdots & &\quad \vdots  \nonumber \\
         \dot x_N(t) &= A_N x_N(t) + \bar{B}_N \bar{u}_N(t) + \displaystyle\sum_{k \, \in\, \mathcal{N}_N} D_{N,k} x_k(t), \qquad& x_N(0) &= x_N^0 \in \mathbb{R}^{n_N}. \tag{\theparentequation-N}\label{eq:subsys N D}
    \end{align}
\end{subequations}

Let us now define our notion of finite-time \emph{component stabilizability}.

\begin{definition}\label{def: tuple stabilizability}
    Tuple $(A, \bar{B}, \bar{\mathcal{U}})$ is \emph{stabilizable} (resp. \emph{controllable}) if there exists a time $T \geq 0$ and an admissible control signal $\bar{u} \in \mathcal{F}(\bar{\mathcal{U}})$ driving the state of system $\dot x(t) = A x(t) + \bar{B} \bar{u}(t)$ from any $x^0 \in \mathbb{R}^n$ to $x(T) = 0$ (resp. to any $x_{goal} \in \mathbb{R}^n$).
\end{definition}

Following Definition~\ref{def: tuple stabilizability}, the stabilizability and controllability of tuple $(A_i, \bar{B}_i, \bar{\mathcal{U}}_i)$ characterize subsystem (\ref{eq:N network D}-$i$) as if it was isolated from its neighbors. 
These are local properties from which we will want to derive the associated overall network properties. To do so, we define the network state $X$ and control input $\bar{u}$ as
\begin{equation*}
    X := \big(x_1, x_2, \hdots, x_N\big) \in \mathbb{R}^{n_\Sigma} \quad \text{and} \quad \bar{u}(t) := \big(\bar{u}_1(t), \hdots, \bar{u}_N(t) \big) \in \bar{\mathcal{U}} := \bar{\mathcal{U}}_1 \times \hdots \times \bar{\mathcal{U}}_N \subseteq \mathbb{R}^{m_\Sigma}
\end{equation*}
with $n_\Sigma := n_1 + \hdots + n_N$ and  $m_\Sigma := m_1 + \hdots + m_N$, respectively. Network dynamics \eqref{eq:N network D} can then be written more concisely as
\begin{equation}\label{eq: network X}
    \dot X(t) = (A + D) X(t) + \bar{B} \bar{u}(t), \qquad X(0) = X_0 = \big(x_1^0, \hdots, x_N^0 \big) \in \mathbb{R}^{n_\Sigma},
\end{equation}
with the constant matrices $A := \diag(A_1, \hdots, A_N)$, $\bar{B} := \diag\big(\bar{B}_1, \hdots, \bar{B}_N\big)$ and $D := \big( D_{i,j} \big)_{(i,j) \in [\![1,N]\!]}$ with $D_{i,k} = 0$ if $k \notin \mathcal{N}_i$. Since our objective is to investigate connected networks, we assume that $D \neq 0$.

Following, for instance, a software bug or an adversarial attack \citep{actuator_attack, network_cyber_attacks}, several subsystems of network \eqref{eq:N network D} suffer a loss of control authority. We combine their dynamics into a single malfunctioning subsystem by stacking their states. Then, following Lemma~\ref{lemma: multi losses}, we reorder the subsystems so that we can consider, without loss of generality, a loss of control authority affecting only subsystem~\eqref{eq:subsys N D} and a number $p_N \in [\![1,m_N]\!]$ of its $m_N$ actuators. We then split the nominal input $\bar{u}_N$ between the remaining controlled inputs $u_N \in \mathcal{F}(\mathcal{U}_N)$, $\mathcal{U}_N = [-1,1]^{m_N - p_N}$ and the uncontrolled and possibly undesirable inputs $w_N \in \mathcal{F}(\mathcal{W}_N)$, $\mathcal{W}_N = [-1,1]^{p_N}$. We split accordingly matrix $\bar{B}_N$ into $B_N \in \mathbb{R}^{n_N \times (m_N - p_N)}$ and $C_N \in \mathbb{R}^{n_N \times p_N}$, so that the dynamics of subsystem \eqref{eq:subsys N D} become
\begin{equation}\label{eq:split system N}
    \dot x_N(t) = A_N x_N(t) + B_N u_N(t) + C_N w_N(t) + \sum_{k\, \in\, \mathcal{N}_N} D_{N,k} x_k(t), \qquad x_N(0) = x_N^0 \in \mathbb{R}^{n_N}.
\end{equation}
We adopt the resilience framework of \citep{TAC, ECC} where controller $u_N(t)$ has real-time knowledge of the undesirable inputs $w_N(t)$ thanks to sensors located on each actuator. This assumption of real-time knowledge was relaxed in \citep{JGCD} by considering a controller inflicted by a constant actuation delay. Beyond this additional layer of complexity, the resilience conditions were extremely similar to those with immediate knowledge of the perturbations, which is why we make this simplifying assumption.

Our central objective is to study how the partial loss of control authority over actuators of subsystem \eqref{eq:subsys N D} affects the \emph{stabilizability} and the \emph{controllability} of the whole network. To adapt these properties to malfunctioning system~\eqref{eq:split system N}, we first need the notion of \emph{resilient reachability} introduced in \citep{IFAC}.

\begin{definition}\label{def: resilient reachability}
    A target $x_{goal} \in \mathbb{R}^n$ is \emph{resiliently reachable} from $x(0) \in \mathbb{R}^n$ by malfunctioning system $\dot x(t) = A x(t) + B u(t) + C w(t)$ if for all $w \in \mathcal{F}(\mathcal{W})$, there exists $T \geq 0$ and $u \in \mathcal{F}(\mathcal{U})$ such that $u(t)$ only depends on $w(t)$ and the solution exists, is unique and $x(T) = x_{goal}$.
\end{definition}

\begin{definition}\label{def: tuple resilience}
    Tuple $(A, B, C, \mathcal{U}, \mathcal{W})$ is \emph{resiliently stabilizable} (resp. \emph{resilient}) if $0 \in \mathbb{R}^n$ (resp. every $x_{goal} \in \mathbb{R}^n$) is resiliently reachable from any $x(0) \in \mathbb{R}^n$ by malfunctioning system $\dot x(t) = A x(t) + B u(t) + C w(t)$.
\end{definition}

Network dynamics \eqref{eq: network X} are also impacted by the loss of control authority in subsystem \eqref{eq:subsys N D}. We define the network control input $u(t) := \big(\bar{u}_1(t), \hdots, \bar{u}_{N-1}(t), u_N(t) \big) \in \mathcal{U} := \bar{\mathcal{U}}_1 \times \hdots \times \bar{\mathcal{U}}_{N-1} \times \mathcal{U}_N \subseteq \mathbb{R}^{m_\Sigma - p_N}$. Network dynamics \eqref{eq: network X} then become
\begin{equation}\label{eq: network X split}
    \dot X(t) = (A + D) X(t) + B u(t) + C w_N(t), \qquad X(0) = X_0 = \big(x_1^0, \hdots, x_N^0 \big) \in \mathbb{R}^{n_\Sigma},
\end{equation}
with the constant matrices  $B = \diag\big(\bar{B}_1, \hdots, \bar{B}_{N-1}, B_N\big)$ and $C = \left(\begin{smallmatrix} 0_{(n_\Sigma-n_N) \times p_N} \\ C_N \end{smallmatrix}\right)$.

\begin{definition}\label{def: network resilience}
     Network~\eqref{eq: network X split} is \emph{resiliently stabilizable} (resp. \emph{resilient}) if tuple $\big( A+D, B, C, \mathcal{U}, \mathcal{W}_N\big)$ is resiliently stabilizable (resp. resilient).
\end{definition}

We are now led to the following problems of interest.

\begin{problem}\label{prob: resilient stabilizability}
    Assuming that $(A_N, B_N, C_N, \mathcal{U}_N, \mathcal{W}_N)$ is resiliently stabilizable and $(A_i, \bar{B}_i, \bar{\mathcal{U}}_i)$ is stabilizable for $i \in [\![1, N-1]\!]$, under what conditions is network \eqref{eq: network X split} resiliently stabilizable?
\end{problem}

\begin{problem}\label{prob: resilience}
    Assuming that $(A_N, B_N, C_N, \mathcal{U}_N, \mathcal{W}_N)$ is resilient and $(A_i, \bar{B}_i, \bar{\mathcal{U}}_i)$ is controllable for $i \in [\![1, N-1]\!]$, under what conditions is network \eqref{eq: network X split} resilient?
\end{problem}

Note that the resilience framework for network~\eqref{eq: network X split} allows its control input $u(t)$ to depend on undesirable input $w_N(t)$, which presupposes that all subsystems are aware of the attack.

After investigating the ideal cases of Problems~\ref{prob: resilient stabilizability} and \ref{prob: resilience} where tuple $(A_N, B_N, C_N, \mathcal{U}_N, \mathcal{W}_N)$ is resilient, we will consider the more problematic scenario where it is not resilient and study whether the other subsystems of the network are stabilizable or controllable despite the perturbations arising from the coupling with malfunctioning subsystem~\eqref{eq:split system N}.
Let $\chi(t)$ denote the combined state of all other subsystems, i.e., $\chi(t) := \big(x_1(t), \hdots, x_{N-1}(t) \big)$
Then,
\begin{equation}\label{eq: X N-1}
    \dot \chi(t) = \hat{A} \chi(t) + \hat{B} \hat{u}(t) + \hat{D} \chi(t) + D_{-,N} x_N(t),
\end{equation}
with $\chi_0 := \big(x_1^0, \hdots, x_{N-1}^0 \big)$, $\hat{A} := \diag \big(A_1, \hdots, A_{N-1}\big)$, $\hat{B} := \diag \big(\bar{B}_1, \hdots, \bar{B}_{N-1}\big)$, and $\hat{u}(t) := \big( \bar{u}_1(t), \hdots, \bar{u}_{N-1}(t) \big) \in \hat{\mathcal{U}} := \bar{\mathcal{U}}_1 \times \hdots \times \bar{\mathcal{U}}_{N-1} = [-1, 1]^{m_\Sigma - m_N}$. We also split matrix $D$ accordingly:
\begin{equation}\label{eq: notation D-}
    D = \left(\begin{array}{ccccc|c} 
    0 & D_{1,2} & \hdots & D_{1,N\text{-}2} & D_{1, N\text{-}1} & D_{1,N} \\
    D_{2,1} & 0 & & & D_{2,N\text{-}1} & D_{2,N}\\
    \vdots & & \ddots & & \vdots & \vdots \\
    D_{N\text{-}2, 1} & &  & 0 & D_{N\text{-}2, N\text{-}1} & D_{N\text{-}2,N} \\
    D_{N\text{-}1, 1} & D_{N\text{-}1,2} & \hdots & D_{N\text{-}1,N\text{-}2} & 0 & D_{N\text{-}1, N} \\ \hline
    D_{N, 1} & \hdots & & \hdots & D_{N, N\text{-}1} & 0 
    \end{array}\right) := \left(\begin{array}{cccc|c} & & & & \\  & & \hat{D} & & D_{-,N} \\ & & & & \\ \hline & & D_{N,-} & & 0 \end{array}\right).
\end{equation}
Then, the last row of $D$ without the last diagonal block is $D_{N,-}$, while the last column of $D$ without the last diagonal block is $D_{-,N}$.

\begin{definition}\label{def: subsystem stabilizability}
     System \eqref{eq: X N-1} is \emph{resiliently stabilizable} (resp. \emph{resilient}) if for every $X_0 \in \mathbb{R}^{n_\Sigma}$ (resp. and every $\chi_{goal} \in \mathbb{R}^{n_\Sigma - n_N}$) and every $w_N \in \mathcal{F}(\mathcal{W}_N)$ there exists $T \geq 0$ and $u \in \mathcal{F}(\mathcal{U})$ such that the solution to the \emph{entire network} \eqref{eq: network X split} exists, is unique and $\chi(T) = 0$ (resp. $\chi(T) = \chi_{goal}$).
\end{definition}

Definition~\ref{def: subsystem stabilizability} considers the joint stabilizability of subsystems $1$ to $N-1$ despite any undesirable input $w_N$ perturbing their combined state $\chi$ through the coupling term $D_{-,N} x_N$ in \eqref{eq: X N-1}. The resilient stabilizability of subsystem~\eqref{eq: X N-1} depends on the initial state $X_0$ of network \eqref{eq: network X split} and on the undesirable input $w_N$ perturbing state $\chi$ through the coupling term $D_{-,N} x_N$ in \eqref{eq: X N-1}.
If stabilizing state $\chi$ is impossible, the next best objective would be to maintain it around the origin.

\begin{definition}\label{def: resilient boundedness}
     System~\eqref{eq: X N-1} is \emph{resiliently bounded} if for every $X_0 \in \mathbb{R}^{n_\Sigma}$ and every $w_N \in \mathcal{F}(\mathcal{W}_N)$ there exists $b \geq 0$ and $u \in \mathcal{F}(\mathcal{U})$ such that the solution to the \emph{entire network} \eqref{eq: network X split} exists, is unique, and $\|\chi(t)\| \leq b$ for all $t \geq 0$.
\end{definition}
We can then state our third and fourth problems of interest.

\begin{problem}\label{prob: nonresilient stabilizability}
    Assuming that $(A_N, B_N, C_N, \mathcal{U}_N, \mathcal{W}_N)$ is \emph{not} resiliently stabilizable and $(A_i, \bar{B}_i, \bar{\mathcal{U}}_i)$ is stabilizable for $i \in [\![1, N-1]\!]$, under what conditions is system~\eqref{eq: X N-1} resiliently stabilizable or resiliently bounded?
\end{problem}

\begin{problem}\label{prob: nonresilient}
    Assuming that $(A_N, B_N, C_N, \mathcal{U}_N, \mathcal{W}_N)$ is \emph{not} resilient and $(A_i, \bar{B}_i, \bar{\mathcal{U}}_i)$ is controllable for $i \in [\![1, N-1]\!]$, under what conditions is system~\eqref{eq: X N-1} resilient?
\end{problem}

Note that Problems~\ref{prob: nonresilient stabilizability} and \ref{prob: nonresilient} do not try to resiliently control subsystem \eqref{eq:split system N} along with the other subsystems. Indeed, the only way to do so would rely on the coupling term $\sum D_{N,k} x_k$, which is going to $0$ as the other subsystems are getting stabilized. Therefore, malfunctioning network~\eqref{eq: network X split} is not resiliently stabilizable when tuple $(A_N, B_N, C_N, \mathcal{U}_N, \mathcal{W}_N)$ is not resiliently stabilizable.
We start by investigating Problem~\ref{prob: resilience}.

\section{Stabilizability of resilient networks}\label{sec: resilient}

In this section, we build on several background results from stabilizability and resilience theories to tackle Problems~\ref{prob: resilient stabilizability} and \ref{prob: resilience}.

\subsection{Background results}

We consider the linear time-invariant system
\begin{equation}\label{eq:initial ODE}
    \dot x(t) = Ax(t) + \bar{B} \bar{u}(t), \quad x(0) = x_0 \in \mathbb{R}^n, \quad \bar{u}(t) \in \bar{\mathcal{U}},
\end{equation}
with $A \in \mathbb{R}^{n \times n}$ and $\bar{B} \in \mathbb{R}^{n \times m}$ constant matrices and $\bar{\mathcal{U}} = [-1, 1]^m$. The controllability and stabilizability of system \eqref{eq:initial ODE} can be assessed with Corollaries~3.6 and 3.7 of Brammer \citep{Brammer}, restated here together as Theorem~\ref{thm:Brammer}.

\begin{theorem}[Brammer's conditions \citep{Brammer}]\label{thm:Brammer}
    If $\bar{\mathcal{U}} \cap \ker(\bar{B}) \neq \emptyset$ and $\interior(\co(\bar{\mathcal{U}})) \neq \emptyset$, then system \eqref{eq:initial ODE} is stabilizable (resp. controllable) if and only if $\rank\big( \mathcal{C}(A,\bar{B}) \big) = n$, $Re\big(\lambda(A)\big) \leq 0$ (resp. $=0$) and there is no real eigenvector $v$ of $A^\top$ satisfying $v^\top \bar{B} \bar{u} \leq 0$ for all $\bar{u} \in \bar{\mathcal{U}}$.
\end{theorem}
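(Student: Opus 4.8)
The plan is to reduce all three algebraic conditions to a single geometric statement about the null-controllable and reachable sets. Since $\bar{\mathcal{U}} \cap \ker(\bar B) \neq \emptyset$, the zero forcing $\bar B \bar u = 0$ is admissible, so the time-$T$ null-controllable sets $\mathcal{N}(T) := \int_0^T e^{-As}\bar B\, \bar{\mathcal{U}}\, ds$ are nested in $T$ and their union $\mathcal{N} := \bigcup_{T \geq 0}\mathcal{N}(T)$ is convex; because $\bar{\mathcal{U}} = [-1,1]^m$ is symmetric, system \eqref{eq:initial ODE} is stabilizable if and only if $\mathcal{N} = \mathbb{R}^n$. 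The reachable set $\mathcal{R} := \bigcup_{T \geq 0}\int_0^T e^{As}\bar B\, \bar{\mathcal{U}}\, ds$ is treated symmetrically, and I would show controllability is equivalent to $\mathcal{R} = \mathbb{R}^n$ \emph{together with} $Re(\Lambda(A)) = 0$, the extra condition being what keeps $e^{At}x_0$ bounded so that a fixed target remains reachable at arbitrarily large times. The hypothesis $\interior(\co(\bar{\mathcal{U}})) \neq \emptyset$ ensures these convex sets are full-dimensional, so that separating-hyperplane arguments apply without degeneracy.

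Next I would pass to support functions. A closed convex set equals $\mathbb{R}^n$ exactly when its support function is $+\infty$ in every nonzero direction, and a direct computation gives $\sup_{x \in \mathcal{N}} \eta^\top x = \int_0^\infty h_{\bar{\mathcal{U}}}\big(\bar B^\top e^{-A^\top s}\eta\big)\, ds$, where $h_{\bar{\mathcal{U}}}(z) = \sup_{u \in \bar{\mathcal{U}}} z^\top u = \|z\|_1 \geq 0$. Thus stabilizability becomes the clean analytic criterion that this integral diverges for every $\eta \neq 0$, and controllability the same criterion with $e^{A^\top s}$ in place of $e^{-A^\top s}$, augmented by $Re(\Lambda(A)) = 0$.

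For necessity I would construct an obstructing direction $\eta \neq 0$ whenever one of the conditions fails. If $\rank(\mathcal{C}(A,\bar{B})) < n$, any $\eta \perp \Image(\mathcal{C}(A,\bar{B}))$ makes the integrand vanish identically. If some $\lambda \in \Lambda(A)$ has $Re(\lambda) > 0$ (resp. $Re(\lambda) < 0$ for controllability), choosing $\eta$ in the associated left-eigenspace makes $\bar B^\top e^{-A^\top s}\eta$ (resp. $\bar B^\top e^{A^\top s}\eta$) decay exponentially, so the integral converges and $\mathcal{N}$ (resp. $\mathcal{R}$) lies in a half-space. Finally, if $A^\top v = \mu v$ with $v$ real and $v^\top \bar B \bar u \leq 0$ for all $\bar u \in \bar{\mathcal{U}}$, then $g(t) := v^\top x(t)$ satisfies $\dot g = \mu g + v^\top \bar B \bar u \leq \mu g$, and a comparison argument shows $g$ cannot be steered across the required value from every initial condition, producing a one-sided obstruction.

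The difficult direction is sufficiency: assuming all three conditions, I must show the integral diverges for every $\eta \neq 0$. Suppose instead it converges for some $\eta \neq 0$. The integrand is a nonnegative, piecewise-smooth function of the exponential polynomial $\bar B^\top e^{-A^\top s}\eta$, so convergence forces every non-decaying component to vanish; the rank and eigenvalue conditions then eliminate all modes except those on the imaginary axis. The crux is this boundary case. For an oscillatory mode $\lambda = i\omega$ with $\omega \neq 0$, the vector $\bar B^\top e^{-A^\top s}\eta$ rotates, so the nonnegative support function $\|\cdot\|_1$ cannot remain integrable unless that component is zero; the sole remaining possibility is a real mode, and it is ruled out precisely by the eigenvector condition, since otherwise the surviving term would reproduce exactly a $v^\top \bar B \bar u \leq 0$ half-space. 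I expect the main obstacle to be this asymptotic analysis of exponential polynomials together with the sign-sensitive handling of purely imaginary eigenvalues, which is exactly why the geometric eigenvector hypothesis, rather than the bare rank condition, is indispensable. Alternatively, one may invoke Brammer's Corollaries~3.6 and 3.7 directly, to which this statement specializes.
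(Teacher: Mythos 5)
The first thing to note is that the paper does not prove Theorem~\ref{thm:Brammer} at all: it is presented as a verbatim restatement of Corollaries~3.6 and 3.7 of \citep{Brammer}, so your closing fallback --- invoking Brammer directly --- coincides exactly with the paper's treatment. Judged as a citation, your proposal matches the paper; judged as the from-scratch argument that occupies most of your proposal, it has a genuine gap.

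The gap is this: by computing the support function as $h_{\bar{\mathcal{U}}}(z) = \|z\|_1$ you have hardwired $\bar{\mathcal{U}} = [-1,1]^m$, a set that is symmetric with $0$ in its interior. For such a set the eigenvector condition is \emph{vacuous} whenever the rank condition holds: applying $v^\top \bar{B}\bar{u} \leq 0$ to both $\bar{u}$ and $-\bar{u}$ forces $v^\top \bar{B} = 0$, so $v$ would be a real eigenvector of $A^\top$ orthogonal to $\Image(\bar{B})$, contradicting $\rank\big(\mathcal{C}(A,\bar{B})\big) = n$ by the PBH test. Consequently your sketch can at best reproduce Sontag's Theorem~\ref{thm:Sontag}, which the paper explicitly identifies as the specialization of Theorem~\ref{thm:Brammer} to $0 \in \interior(\bar{\mathcal{U}})$; the ``crux'' you describe --- a surviving real mode killed by the eigenvector hypothesis --- can never arise in your framework, because with $h_{\bar{\mathcal{U}}} = \|\cdot\|_1 \geq 0$ vanishing only at the origin, any surviving mode already contradicts the rank condition (a nonzero exponential polynomial with all exponents of nonnegative real part is never absolutely integrable). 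Brammer's theorem is nontrivial precisely for \emph{asymmetric} constraint sets (e.g.\ one-signed controls), where $h_{\bar{\mathcal{U}}}$ is no longer a norm and can vanish, or be negative, on a whole cone of directions; carrying that general, sign-indefinite support function through the sufficiency analysis is exactly the hard part your proposal leaves undone. A secondary imprecision: justifying the controllability reduction by saying $Re\big(\Lambda(A)\big) = 0$ ``keeps $e^{At}x_0$ bounded'' is false in general (a nilpotent Jordan block has $\Lambda(A) = \{0\}$ yet $e^{At}$ grows polynomially), so that step also needs a more careful comparison between the growth of $e^{AT}x^0$ and the expansion of the reachable sets.
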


When $0 \in \interior(\bar{\mathcal{U}})$, Theorem~\ref{thm:Brammer} boils down to Sontag's stabilizability condition for systems with bounded inputs \citep{Sontag} as the eigenvector criteria can be removed.

\begin{theorem}[Sontag's condition \citep{Sontag}]\label{thm:Sontag}
    If $0 \in \interior(\bar{\mathcal{U}})$, then system \eqref{eq:initial ODE} is stabilizable (resp. controllable) if and only if $\rank\big( \mathcal{C}(A,\bar{B}) \big) = n$ and $Re\big(\lambda(A)\big) \leq 0$ (resp. $=0$).
\end{theorem}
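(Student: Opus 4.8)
The plan is to deduce Sontag's condition directly from Brammer's Theorem~\ref{thm:Brammer}, showing that the eigenvector criterion becomes vacuous once $0 \in \interior(\bar{\mathcal{U}})$. First I would check that the two standing hypotheses of Theorem~\ref{thm:Brammer} are met. Since $\bar{B}\,0 = 0$ and $0 \in \bar{\mathcal{U}}$, we have $0 \in \bar{\mathcal{U}} \cap \ker(\bar{B})$, so this set is nonempty; and because $0 \in \interior(\bar{\mathcal{U}}) \subseteq \interior(\co(\bar{\mathcal{U}}))$, the second hypothesis holds as well. Thus Theorem~\ref{thm:Brammer} applies verbatim, and stabilizability (resp. controllability) is equivalent to the three conditions: $\rank(\mathcal{C}(A,\bar{B})) = n$, $Re(\lambda(A)) \leq 0$ (resp. $=0$), and the absence of a real eigenvector $v$ of $A^\top$ with $v^\top \bar{B}\bar{u} \leq 0$ for all $\bar{u} \in \bar{\mathcal{U}}$.

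The core of the proof is to show that, given the rank condition, the eigenvector criterion is automatically satisfied, so it can be dropped. Suppose, for contradiction, that a real eigenvector $v$ of $A^\top$ satisfies $v^\top \bar{B}\bar{u} \leq 0$ for all $\bar{u} \in \bar{\mathcal{U}}$. Writing $z := \bar{B}^\top v$, this reads $z^\top \bar{u} \leq 0$ for all $\bar{u} \in \bar{\mathcal{U}}$. Since $0 \in \interior(\bar{\mathcal{U}})$, the set $\bar{\mathcal{U}}$ contains a ball $B(0,\varepsilon)$; if $z \neq 0$, choosing $\bar{u} = \tfrac{\varepsilon}{2}\,z/\|z\| \in \bar{\mathcal{U}}$ yields $z^\top \bar{u} = \tfrac{\varepsilon}{2}\|z\| > 0$, a contradiction. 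Hence $z = \bar{B}^\top v = 0$, i.e. $v^\top \bar{B} = 0$.

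I would then close the argument with a PBH-style computation. Let $\lambda$ be the (necessarily real) eigenvalue associated with the real eigenvector $v$, so $v^\top A = \lambda v^\top$. Then $v^\top A^k \bar{B} = \lambda^k\, v^\top \bar{B} = 0$ for every $k \geq 0$, whence $v^\top \mathcal{C}(A,\bar{B}) = v^\top\big[\bar{B}\ A\bar{B}\ \cdots\ A^{n-1}\bar{B}\big] = 0$. As $v \neq 0$, this forces $\rank(\mathcal{C}(A,\bar{B})) < n$, contradicting the rank condition. Therefore no such eigenvector exists, and Brammer's third condition is redundant under $\rank(\mathcal{C}(A,\bar{B})) = n$.

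Assembling the pieces gives both implications. For the forward direction, if the system is stabilizable (resp. controllable), Theorem~\ref{thm:Brammer} immediately yields the rank and eigenvalue conditions. For the converse, the rank and eigenvalue conditions together with the redundancy just established supply all three of Brammer's requirements, hence stabilizability (resp. controllability); the controllability case is identical save for replacing $Re(\lambda(A)) \leq 0$ by $Re(\lambda(A)) = 0$ throughout. I expect the only delicate step to be the elimination of the eigenvector criterion, specifically the need to treat the two cases $\bar{B}^\top v \neq 0$ and $\bar{B}^\top v = 0$ separately and to invoke the rank hypothesis precisely to rule out the latter; everything else is a routine specialization of the already-cited result.
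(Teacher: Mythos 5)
Your proof is correct and takes essentially the same route the paper intends: the paper states this result as a citation of Sontag and justifies it only by remarking that Theorem~\ref{thm:Brammer} ``boils down'' to it when $0 \in \interior(\bar{\mathcal{U}})$ because the eigenvector criterion can be removed. Your argument simply makes that remark rigorous --- verifying Brammer's two standing hypotheses and then showing via $\bar{B}^\top v = 0$ and the PBH-style computation that the eigenvector condition is already implied by the rank condition --- which is exactly the reduction the paper has in mind.
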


After a loss of control authority over $p$ of the $m$ actuators of system \eqref{eq:initial ODE}, the input signal $\bar{u}$ is split between the remaining controlled inputs $u \in \mathcal{F}(\mathcal{U})$, $\mathcal{U} = [-1, 1]^{m-p}$ and the uncontrolled and possibly undesirable inputs $w \in \mathcal{F}(\mathcal{W})$, $\mathcal{W} = [-1, 1]^p$. Matrix $\bar{B}$ is accordingly split into two constant matrices $B \in \mathbb{R}^{n \times (m-p)}$ and $C \in \mathbb{R}^{n \times p}$ so that the system dynamics become
\begin{equation}\label{eq:splitted ODE}
    \dot x(t) = Ax(t) + Bu(t) + Cw(t), \quad x(0) = x_0 \in \mathbb{R}^n, \quad u(t) \in \mathcal{U}, \quad w(t) \in \mathcal{W}.
\end{equation}

Resilience conditions established in \citep{ECC_extended} use H\'ajek's approach \citep{Hajek} and hence require the following system associated to dynamics~\eqref{eq:splitted ODE}
\begin{equation}\label{eq: Hajek}
    \dot x(t) = Ax(t) + z(t), \qquad x(0) = x_0 \in \mathbb{R}^n, \qquad z(t) \in \mathcal{Z},
\end{equation}
where $\mathcal{Z} \subseteq \mathbb{R}^n$ is the Minkowski difference between the set of admissible control inputs $B\mathcal{U} := \big\{ Bu : u \in \mathcal{U}\big\}$ and the opposite of the set of undesirable inputs $C\mathcal{W} := \big\{ Cw : w \in \mathcal{W}\big\}$, i.e., 
\begin{align*}
    \mathcal{Z} &:= \big[ B\mathcal{U} \ominus (-C\mathcal{W}) \big] \cap B\mathcal{U} \\
    &= \big\{ z \in B\mathcal{U} : \{z\} \oplus (- C\mathcal{W}) \subseteq B\mathcal{U} \big\}\\
    &= \big\{ z \in B\mathcal{U} : z - Cw \in B\mathcal{U}\ \text{for all}\ w \in \mathcal{W} \big\}.
\end{align*}
Informally, $\mathcal{Z}$ represents the control available after counteracting any undesirable input. 
The first resilience condition established in \citep{ECC_extended} is as follows.

\begin{proposition}\label{prop: resilience Z}
    If $\interior(\mathcal{Z}) \neq \emptyset$, then system \eqref{eq:splitted ODE} is resiliently stabilizable (resp. resilient) if and only if $Re\big(\lambda(A)\big) \leq 0$ (resp. $= 0$).
\end{proposition}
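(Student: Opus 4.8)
The plan is to reduce the resilient control of the split system \eqref{eq:splitted ODE} to the standard (single-player) control of the associated H\'ajek system \eqref{eq: Hajek}, and then to read off the spectral condition from Sontag's Theorem~\ref{thm:Sontag}. The starting observation is a geometric one about $\mathcal{Z}$: since $\mathcal{U} = [-1,1]^{m-p}$ and $\mathcal{W} = [-1,1]^p$ are convex and symmetric about the origin, so are $B\mathcal{U}$ and $C\mathcal{W}$; the Minkowski difference $B\mathcal{U} \ominus (-C\mathcal{W})$ and the intersection with $B\mathcal{U}$ both preserve convexity and central symmetry, hence $\mathcal{Z}$ is convex and symmetric. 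Consequently, as soon as $\interior(\mathcal{Z}) \neq \emptyset$, picking any $z_0 \in \interior(\mathcal{Z})$ yields $-z_0 \in \interior(\mathcal{Z})$, and convexity places the midpoint $0$ in $\interior(\mathcal{Z})$. Thus the hypothesis upgrades for free to $0 \in \interior(\mathcal{Z})$, and the inclusion $\interior(\mathcal{Z}) \subseteq \interior(B\mathcal{U})$ forces $\rank(B) = n$.

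For sufficiency I would apply Sontag's Theorem~\ref{thm:Sontag} to system \eqref{eq: Hajek}, viewed as $\dot{x} = Ax + I_n z$ with input constraint set $\mathcal{Z}$. Its controllability matrix $\mathcal{C}(A, I_n) = [I_n\ A\ \cdots\ A^{n-1}]$ has rank $n$ automatically, and $0 \in \interior(\mathcal{Z})$, so Theorem~\ref{thm:Sontag} gives that \eqref{eq: Hajek} is stabilizable (resp. controllable) exactly when $Re(\lambda(A)) \leq 0$ (resp. $= 0$). It then remains to transfer a control of \eqref{eq: Hajek} into a resilient control of \eqref{eq:splitted ODE}. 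Given a fixed signal $z \in \mathcal{F}(\mathcal{Z})$ steering \eqref{eq: Hajek} to the target in time $T$, and confronted with an arbitrary $w \in \mathcal{F}(\mathcal{W})$, the definition of $\mathcal{Z}$ guarantees $z(t) - Cw(t) \in B\mathcal{U}$ pointwise, so the fiber $\{u \in \mathcal{U} : Bu = z(t) - Cw(t)\}$ is nonempty; a measurable selection $u(t)$ then satisfies $Bu(t) + Cw(t) = z(t)$, making the trajectory of \eqref{eq:splitted ODE} coincide with that of \eqref{eq: Hajek} and hence reach the target at $T$, with $u(t)$ depending only on $w(t)$. This is exactly resilient reachability in the sense of Definitions~\ref{def: resilient reachability}--\ref{def: tuple resilience}.

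For necessity I would avoid the converse reduction altogether and instead instantiate the adversary at $w \equiv 0 \in \mathcal{F}(\mathcal{W})$. Resilient stabilizability (resp. resilience) then provides, for this particular disturbance, an admissible $u \in \mathcal{F}(\mathcal{U})$ driving $\dot{x} = Ax + Bu$ to $0$ (resp. to any $x_{goal}$) from every $x_0$, i.e., $(A, B, \mathcal{U})$ is stabilizable (resp. controllable). Since $\rank(B) = n \geq 1$ we have $0 \in \interior(\mathcal{U})$, so the necessity direction of Theorem~\ref{thm:Sontag} yields $Re(\lambda(A)) \leq 0$ (resp. $= 0$), closing the equivalence.

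The main obstacle is the faithful transfer in the sufficiency step: one must argue that a single open-loop signal $z(\cdot)$ for \eqref{eq: Hajek} converts, uniformly over all admissible disturbances, into an instantaneous disturbance-feedback $u(t) = u(t, w(t))$ for \eqref{eq:splitted ODE}. The delicate points are the measurable selection of $u(\cdot)$ from the pointwise nonempty, closed, convex fibers $\{u \in \mathcal{U} : Bu = z(t) - Cw(t)\}$ (a Filippov-type selection using measurability of $z(\cdot)$ and $w(\cdot)$), and the existence and uniqueness of the resulting Carath\'eodory solution---both standard for linear dynamics with measurable bounded inputs, but needing to be invoked explicitly. Notably, it is the convexity-plus-symmetry argument promoting $\interior(\mathcal{Z}) \neq \emptyset$ to $0 \in \interior(\mathcal{Z})$ that makes Sontag's condition, rather than the full eigenvector test of Brammer's Theorem~\ref{thm:Brammer}, the relevant criterion for \eqref{eq: Hajek}.
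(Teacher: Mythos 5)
Your proof is correct, and it follows the route the paper itself points to: Proposition~\ref{prop: resilience Z} appears in the paper without proof, imported from \citep{ECC_extended}, whose argument the paper describes as H\'ajek's reduction to the auxiliary system \eqref{eq: Hajek} with input set $\mathcal{Z}$ --- which is exactly your sufficiency construction (fix an open-loop $z \in \mathcal{F}(\mathcal{Z})$ steering \eqref{eq: Hajek}, then for any disturbance $w$ select $u(t)$ with $Bu(t) = z(t) - Cw(t)$, so that the trajectory of \eqref{eq:splitted ODE} coincides with that of \eqref{eq: Hajek}). Two points where your write-up goes beyond a bare reconstruction are worth noting. First, the convexity-plus-symmetry argument upgrading $\interior(\mathcal{Z}) \neq \emptyset$ to $0 \in \interior(\mathcal{Z})$ is the right observation: it is what lets you invoke Sontag's Theorem~\ref{thm:Sontag} for the auxiliary system instead of carrying Brammer's eigenvector test from Theorem~\ref{thm:Brammer} (equivalently, symmetry of $\mathcal{Z}$ kills that test directly, since $v^\top z \leq 0$ for all $z$ in a symmetric full-dimensional set forces $v = 0$). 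Second, your necessity argument via the single disturbance $w \equiv 0$, reducing resilient stabilizability to ordinary constrained stabilizability of $(A, B, \mathcal{U})$ and then applying the ``only if'' half of Theorem~\ref{thm:Sontag}, is more elementary than invoking the converse direction of H\'ajek's duality; it buys a shorter proof at no cost, because the spectral condition is already necessary against the zero disturbance. One small caveat to make explicit: Theorem~\ref{thm:Sontag} is stated in the paper for the hypercube input set, so applying it to \eqref{eq: Hajek} with the constraint set $\mathcal{Z}$ (and to $(A, B, \mathcal{U})$ in the necessity step) tacitly uses Sontag's result in its general form, valid for any input constraint set containing $0$ in its interior; this is consistent with how the paper uses these background theorems, but a careful version of your proof should say so.
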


The main issue with Proposition~\ref{prop: resilience Z} is the requirement that $\interior(\mathcal{Z}) \neq \emptyset$ in $\mathbb{R}^n$, i.e., $\mathcal{Z}$ must be of dimension $n$, which implies that matrices $B$ and $\bar{B}$ must be full rank. To remove this restrictive requirement, the work \citep{ECC_extended} relied on a matrix $Z \in \mathbb{R}^{n \times r}$ with $r := \dim(\mathcal{Z})$ such that $\Image(Z) = \Span(\mathcal{Z})$.

\begin{theorem}[Necessary and sufficient condition \citep{ECC_extended}]\label{thm: N&S resilience}
    System \eqref{eq:splitted ODE} is resiliently stabilizable (resp. resilient) if and only if $Re\big(\lambda(A)\big) \leq 0$ (resp. $= 0$), $\rank\big( \mathcal{C}(A, Z) \big) = n$ and there is no real eigenvector $v$ of $A^\top$ satisfying $v^\top z \leq 0$ for all $z \in \mathcal{Z}$.
\end{theorem}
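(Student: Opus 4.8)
The plan is to reduce the resilient reachability properties of the malfunctioning system \eqref{eq:splitted ODE} to the ordinary reachability properties of the H\'ajek auxiliary system \eqref{eq: Hajek}, and then to apply Brammer's conditions (Theorem~\ref{thm:Brammer}) to the latter. First I would invoke H\'ajek's equivalence \citep{Hajek}, exactly as it underlies Proposition~\ref{prop: resilience Z} in \citep{ECC_extended}: system \eqref{eq:splitted ODE} is resiliently stabilizable (resp. resilient) if and only if system \eqref{eq: Hajek} is stabilizable (resp. controllable). The forward direction is transparent from the definition of $\mathcal{Z}$: whenever $z \in \mathcal{Z}$ one has $z - Cw \in B\mathcal{U}$ for every $w \in \mathcal{W}$, so the controller may pick $u$ with $Bu = z - Cw$ and realize $\dot x = Ax + z$ irrespective of the undesirable input; the reverse inclusion, asserting that $\mathcal{Z}$ is exactly the control one can \emph{guarantee}, is the substance of H\'ajek's argument and is quoted from \citep{ECC_extended}.

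Next I would recast \eqref{eq: Hajek} as a standard bounded-input system amenable to Theorem~\ref{thm:Brammer}. Since the $r$ columns of $Z$ span the $r$-dimensional subspace $\Span(\mathcal{Z})$, they are linearly independent, so $Z$ has full column rank, $\ker(Z) = \{0\}$, and $Z$ is a linear isomorphism from $\mathbb{R}^r$ onto $\Image(Z) = \Span(\mathcal{Z})$. Setting $\tilde{\mathcal{U}} := \{\tilde u \in \mathbb{R}^r : Z\tilde u \in \mathcal{Z}\}$ and using injectivity of $Z$ together with $\mathcal{Z} \subseteq \Span(\mathcal{Z})$, I get $\mathcal{Z} = Z\tilde{\mathcal{U}}$, so \eqref{eq: Hajek} reads $\dot x = Ax + Z\tilde u$ with $\tilde u \in \tilde{\mathcal{U}}$.

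I then must verify Brammer's two standing assumptions for this system. For $\interior(\co(\tilde{\mathcal{U}})) \neq \emptyset$: the set $\tilde{\mathcal{U}} = Z^{-1}(\mathcal{Z})$ is convex (preimage of the convex set $\mathcal{Z}$ under a linear map), and because $\mathcal{Z}$ is $r$-dimensional inside $\Span(\mathcal{Z})$ and $Z^{-1}$ maps $\Span(\mathcal{Z})$ isomorphically onto $\mathbb{R}^r$, $\tilde{\mathcal{U}}$ is $r$-dimensional and hence has nonempty interior. For the kernel condition $\tilde{\mathcal{U}} \cap \ker(Z) \neq \emptyset$, which here amounts to $0 \in \mathcal{Z}$, the key observation is that $\mathcal{Z}$ is convex and centrally symmetric: $B\mathcal{U}$ and $C\mathcal{W}$ are symmetric because $\mathcal{U}$ and $\mathcal{W}$ are cubes, whence a short check gives $z \in \mathcal{Z} \Rightarrow -z \in \mathcal{Z}$, while $\mathcal{Z}$ is convex as the intersection of a Minkowski difference with $B\mathcal{U}$. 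A nonempty symmetric convex set contains the midpoint $\tfrac{1}{2}\big(z + (-z)\big) = 0$, so $0 \in \mathcal{Z}$. (The degenerate case $\mathcal{Z} = \emptyset$ gives $r = 0$ and $\rank(\mathcal{C}(A,Z)) = 0 \neq n$, so both the stated condition and resilience fail, which is consistent.)

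Finally, I would apply Theorem~\ref{thm:Brammer} to $\dot x = Ax + Z\tilde u$: it is stabilizable (resp. controllable) if and only if $\rank\big(\mathcal{C}(A,Z)\big) = n$, $Re\big(\lambda(A)\big) \leq 0$ (resp. $= 0$), and no real eigenvector $v$ of $A^\top$ satisfies $v^\top Z\tilde u \leq 0$ for all $\tilde u \in \tilde{\mathcal{U}}$. Since $Z\tilde{\mathcal{U}} = \mathcal{Z}$, this last clause reads precisely ``no real eigenvector $v$ of $A^\top$ with $v^\top z \leq 0$ for all $z \in \mathcal{Z}$'', and chaining this with the H\'ajek equivalence yields the theorem. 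I expect the main obstacle to be the kernel assumption $0 \in \mathcal{Z}$: the symmetry-plus-convexity argument is clean but must be executed carefully, and one should also confirm that H\'ajek's reduction remains valid for a lower-dimensional $\mathcal{Z}$, the full-dimensional instance being the one already recorded in Proposition~\ref{prop: resilience Z}.
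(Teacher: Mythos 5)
This theorem is stated in the paper as a cited result from \citep{ECC_extended} with no proof given here, but your reconstruction follows precisely the route that underlies it and the paper's surrounding exposition: H\'ajek's reduction of resilient reachability of \eqref{eq:splitted ODE} to ordinary reachability of the Minkowski-difference system \eqref{eq: Hajek}, followed by Brammer's conditions (Theorem~\ref{thm:Brammer}) applied to $\dot x = Ax + Z\tilde{u}$ with $\tilde{\mathcal{U}} = Z^{-1}(\mathcal{Z})$. Your verification of Brammer's standing hypotheses is also correct --- full column rank of $Z$ makes $\tilde{\mathcal{U}}$ an $r$-dimensional convex set with nonempty interior, and the symmetry-plus-convexity argument showing $0 \in \mathcal{Z}$ whenever $\mathcal{Z} \neq \emptyset$ is exactly what the kernel condition requires --- so the proof is sound and matches the intended argument.
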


\begin{cor}\label{cor: N&S resilient stabilizability}
    If $\dim(\mathcal{Z}) = \rank(B)$, then system \eqref{eq:splitted ODE} is resiliently stabilizable (resp. resilient) if and only if system \eqref{eq:initial ODE} is stabilizable (resp. controllable).
\end{cor}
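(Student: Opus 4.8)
The plan is to read off both sides of the claimed equivalence from the characterizations already established and to show that, under the hypothesis $\dim(\mathcal{Z}) = \rank(B)$, the three conditions of Theorem~\ref{thm: N&S resilience} for system~\eqref{eq:splitted ODE} reduce to the two conditions that Theorem~\ref{thm:Sontag} (applicable since $0 \in \interior(\bar{\mathcal{U}})$ because $\bar{\mathcal{U}} = [-1,1]^m$) gives for system~\eqref{eq:initial ODE}. The eigenvalue requirements $Re(\lambda(A)) \leq 0$ (resp. $= 0$) appear verbatim on both sides, so the work reduces to matching the two rank conditions and discharging the eigenvector condition of Theorem~\ref{thm: N&S resilience}.

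First I would extract the structural content of the hypothesis. Since $\mathcal{Z} \subseteq B\mathcal{U} \subseteq \Image(B)$, the span of $\mathcal{Z}$ lies in $\Image(B)$; as $\dim(\mathcal{Z}) = \rank(B) = \dim(\Image(B))$, equality holds, so $\Image(Z) = \Span(\mathcal{Z}) = \Image(B)$. The key observation is that the hypothesis also forces $\Image(C) \subseteq \Image(B)$: because $\dim(\mathcal{Z}) = \rank(B)$ the set $\mathcal{Z}$ is nonempty, and any $z \in \mathcal{Z}$ satisfies $z - Cw \in B\mathcal{U} \subseteq \Image(B)$ for every $w \in \mathcal{W}$ while $z \in \Image(B)$, whence $Cw \in \Image(B)$ for all $w$; since $\mathcal{W} = [-1,1]^p$ spans $\mathbb{R}^p$, this yields $\Image(C) \subseteq \Image(B)$ and therefore $\Image(\bar{B}) = \Image(B) + \Image(C) = \Image(B) = \Image(Z)$. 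Because the controllability subspace $\Image\big(\mathcal{C}(A, \cdot)\big)$ depends only on the image of the input matrix, I conclude $\rank\big(\mathcal{C}(A, Z)\big) = \rank\big(\mathcal{C}(A, \bar{B})\big)$, so the two rank conditions coincide.

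It remains to discharge the eigenvector condition. Here I would use that $\mathcal{Z}$ is convex and symmetric about the origin, since $B\mathcal{U}$ and $C\mathcal{W}$ are symmetric and both the Minkowski difference and the intersection preserve symmetry. Consequently, if a real eigenvector $v$ of $A^\top$ satisfied $v^\top z \leq 0$ for all $z \in \mathcal{Z}$, then replacing $z$ by $-z$ would give $v^\top z = 0$ on all of $\Span(\mathcal{Z}) = \Image(Z)$, i.e. $v^\top Z = 0$; the standard PBH argument (from $A^\top v = \lambda v$ one gets $v^\top A^k Z = \lambda^k v^\top Z = 0$) then forces $v^\top \mathcal{C}(A, Z) = 0$, contradicting $\rank\big(\mathcal{C}(A, Z)\big) = n$. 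Thus whenever the rank condition holds the eigenvector condition is automatically satisfied, and Theorem~\ref{thm: N&S resilience} collapses to ``$Re(\lambda(A)) \leq 0$ and $\rank\big(\mathcal{C}(A, Z)\big) = n$''. Combined with the rank identity above and Theorem~\ref{thm:Sontag}, this is exactly the stabilizability of system~\eqref{eq:initial ODE}; the resilient/controllable statement follows identically with $Re(\lambda(A)) = 0$.

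I expect the main obstacle to be the second structural fact, $\Image(C) \subseteq \Image(B)$: it is what lets the full matrix $\bar{B}$ be replaced by $Z$ without changing the controllability subspace, and without it a disturbance direction outside $\Image(B)$ could render \eqref{eq:initial ODE} stabilizable while \eqref{eq:splitted ODE} fails to be resiliently stabilizable. I would also check the degenerate case $\rank(B) = 0$ separately, where the hypothesis forces $C = 0$ and hence $\bar{B} = 0$, so the equivalence holds trivially.
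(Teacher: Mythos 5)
Your proof is correct and complete. One thing to flag up front: the paper never proves this corollary itself --- it is restated as a background result imported from \citep{ECC_extended} --- so your argument can only be judged on its own terms, and it stands. The route you take is the natural one: under $\dim(\mathcal{Z}) = \rank(B)$, the three conditions of Theorem~\ref{thm: N&S resilience} collapse to Sontag's two conditions for \eqref{eq:initial ODE}. Your two structural observations carry all the nontrivial content. First, $\Span(\mathcal{Z}) = \Image(B)$ by the dimension count, since $\mathcal{Z} \subseteq B\mathcal{U} \subseteq \Image(B)$. Second --- and this is the step a careless reader would skip --- nonemptiness of $\mathcal{Z}$ forces $\Image(C) \subseteq \Image(B)$, so that $\Image(Z) = \Image(B) = \Image(\bar{B})$ and hence $\rank\big(\mathcal{C}(A,Z)\big) = \rank\big(\mathcal{C}(A,\bar{B})\big)$; without this, one could only relate \eqref{eq:splitted ODE} to the pair $(A,B)$, not to the original system \eqref{eq:initial ODE}, and the stated equivalence would not follow. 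Your discharge of the eigenvector condition via the central symmetry of $\mathcal{Z}$ (so that $v^\top z \leq 0$ on $\mathcal{Z}$ forces $v^\top Z = 0$, contradicting the rank condition) is exactly the mechanism by which Brammer's Theorem~\ref{thm:Brammer} reduces to Sontag's Theorem~\ref{thm:Sontag} when $0 \in \interior(\bar{\mathcal{U}})$, and it is applied legitimately here because $\mathcal{Z}$ is convex and symmetric, hence contains the origin once nonempty (which also justifies identifying its dimension with that of its linear span). The separate treatment of the degenerate case $\rank(B) = 0$ is consistent as well: both sides of the equivalence fail there, since both controllability matrices have rank zero.
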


Notice how the resilience conditions only differ from the resilient stabilizability ones by further restricting the eigenvalues of $A$. Because of the similarity between these two concepts, from now on we will only focus on resilient stabilizability.

\subsection{Stabilizability results}\label{subsec:N stabilizability}

Malfunctioning network~\eqref{eq: network X split} can only be resiliently stabilizable if the network was stabilizable before the malfunction. We then start by investigating the finite-time stabilizability of initial network~\eqref{eq: network X}.
Since $0 \in \interior(\bar{\mathcal{U}})$, a direct application of Theorem~\ref{thm:Sontag} yields the following result.

\begin{proposition}\label{prop: N stab A+D}
    Network \eqref{eq: network X} is stabilizable if and only if $\rank\big( \mathcal{C}( A+D, \bar{B}) \big) = n_\Sigma$ and $Re(\lambda(A+D)) \leq 0$.
\end{proposition}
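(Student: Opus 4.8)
The plan is to recognize network dynamics \eqref{eq: network X} as an instance of the generic linear system \eqref{eq:initial ODE} and then invoke Sontag's condition (Theorem~\ref{thm:Sontag}) directly. Concretely, I would identify the pair playing the role of $(A, \bar{B})$ in \eqref{eq:initial ODE} to be $(A+D, \bar{B})$, the state dimension $n$ to be $n_\Sigma$, and the admissible input set to be the unit hypercube $\bar{\mathcal{U}} = [-1,1]^{m_\Sigma}$. This matches the structure of \eqref{eq: network X} verbatim, since that system reads $\dot X = (A+D)X + \bar{B}\bar{u}$ with $\bar{u} \in \bar{\mathcal{U}}$.

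The only hypothesis of Theorem~\ref{thm:Sontag} that must be checked is $0 \in \interior(\bar{\mathcal{U}})$. Since $\bar{\mathcal{U}} = [-1,1]^{m_\Sigma}$, its interior is the open hypercube $(-1,1)^{m_\Sigma}$, which plainly contains the origin; hence the hypothesis holds automatically, and none of the eigenvector conditions of the more general Brammer criterion (Theorem~\ref{thm:Brammer}) are needed. This is exactly the observation flagged in the sentence preceding the statement.

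With the hypothesis verified, Theorem~\ref{thm:Sontag} applies directly and asserts that \eqref{eq: network X} is stabilizable if and only if $\rank\big(\mathcal{C}(A+D, \bar{B})\big) = n_\Sigma$ and $Re\big(\lambda(A+D)\big) \leq 0$, which is precisely the claim. I do not anticipate any obstacle: the entire content of the proposition is carried by Theorem~\ref{thm:Sontag}, and the only work is to align the network data with the hypotheses of that theorem and to remark that the unit hypercube is a neighborhood of the origin. The corresponding controllability characterization, obtained by replacing $Re\big(\lambda(A+D)\big) \leq 0$ with $Re\big(\lambda(A+D)\big) = 0$, would follow by the identical argument, although it is not asserted here.
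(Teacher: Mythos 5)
Your proposal is correct and matches the paper's argument exactly: the paper likewise observes that $0 \in \interior(\bar{\mathcal{U}})$ (since $\bar{\mathcal{U}} = \bar{\mathcal{U}}_1 \times \hdots \times \bar{\mathcal{U}}_N = [-1,1]^{m_\Sigma}$) and obtains the proposition as a direct application of Theorem~\ref{thm:Sontag} to the pair $(A+D, \bar{B})$ with state dimension $n_\Sigma$. No gaps; this is precisely why the paper states the result without a separate proof.
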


Since Problem~\ref{prob: resilience} aims at relating the resilient stabilizability of malfunctioning network~\eqref{eq: network X split} to that of its subsystems, a preliminary step in this direction is to relate the stabilizability of initial network~\eqref{eq: network X} to that of its subsystems unlike Proposition~\ref{prop: N stab A+D}. We will then establish several sufficient conditions for stabilizability by studying the rank and eigenvalue conditions of Proposition~\ref{prop: N stab A+D}.

First, note that having $\rank\big( \mathcal{C}(A_i, \bar{B}_i) \big) = n_i$ for all $i \in [\![1,N]\!]$ does not necessarily imply $\rank\big( \mathcal{C}( A+D, \bar{B}) \big) = n_\Sigma$, even for matrices $D$ with a small norm compared to that of $A$. We need a stronger condition on matrix $D$ to ensure that the coupling between subsystems does not alter their stabilizability. We could use the Popov-Belevitch-Hautus controllability test \citep{uncontrollability} stating the equivalence between $\rank\big( \mathcal{C}( A+D, \bar{B}) \big) = n_\Sigma$ and $\rank\big( A+D - sI, \bar{B} \big) = n_\Sigma$ for all $s \in \mathbb{C}$. In turn, this condition is equivalent to verifying whether $\bar{B}x \neq 0$ for all eigenvectors $x$ of $A+D$. However, relating the eigenvectors of $A+D$ to those of $A$ is very complicated, as detailed in Corollary 7.2.6. of \citep{matrix_computations}.

Instead, we will prefer the \emph{distance to uncontrollability} of \citep{uncontrollability} defined as
\begin{equation*}
    \mu(A,\bar{B}) := \min\big\{ \| \Delta A, \Delta \bar{B}\| : (A+ \Delta A, \bar{B}+\Delta \bar{B})\ \text{is uncontrollable}\big\} = \min\big\{ \sigma_n \big( A-sI, \bar{B}\big) : s \in \mathbb{C} \big\}.
\end{equation*}
Since $\bar{B}$ is not affected by the coupling $D$, we do not need the perturbation $\Delta \bar{B}$ present in $\mu(A,\bar{B})$ and we define instead
\begin{equation*}
    \mu_{\bar{B}}(A) := \min\big\{ \| \Delta A\| : (A+ \Delta A, \bar{B})\ \text{is uncontrollable}\big\}.
\end{equation*}
Note that $\mu_{\bar{B}}(A) \geq \mu(A, \bar{B})$ as shown in Lemma~\ref{lemma: controllability radius}.

From \citep{real_stab_radius}, we also introduce the \emph{real stability radius} of $A$
\begin{equation*}
    r_{\mathbb{R}}(A) := \inf\big\{ \|D\| : D \in \mathbb{R}^{n \times n}\ \text{and}\ A+D\ \text{is unstable} \big\}.
\end{equation*}
To approximate $r_{\mathbb{R}}(A)$ numerous lower bounds are provided in \citep{real_stab_radius}. We will now derive sufficient stabilizability conditions for network~\eqref{eq: network X} with the following statements.

\begin{proposition}\label{prop: stabilizability conditions}
    \begin{enumerate}[label=(\alph*)]
        \item If $\rank(\bar{B}_i) = n_i$ for all $i \in [\![1,N]\!]$, then $\rank\big(\mathcal{C}(A+D,\bar{B})\big) = n_\Sigma$ for all $D \in \mathbb{R}^{n \times n}$.
        \item If there exists a matrix $F \in \mathbb{R}^{m_\Sigma \times n_\Sigma}$ such that $D = \bar{B}F$ and pairs $(A_i, \bar{B}_i)$ are controllable, then $\rank\big(\mathcal{C}(A+D,\bar{B})\big) = n_\Sigma$.
        \item If $\|D\| < \mu_{\bar{B}}(A)$, then $\rank\big(\mathcal{C}(A+D,\bar{B})\big) = n_\Sigma$.
        \item If $\|D\| < r_{\mathbb{R}}(A)$, then $Re\big(\lambda(A+D)\big) \leq 0$.
    \end{enumerate}
\end{proposition}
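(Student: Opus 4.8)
The four claims decouple into an easy triple $(a),(c),(d)$ that follow almost directly from the relevant definitions, and a more substantial claim $(b)$ that rests on feedback-invariance of controllability; I would treat them in that order. For $(a)$, I would observe that $\bar{B} = \diag(\bar{B}_1,\ldots,\bar{B}_N)$ has full row rank $n_\Sigma$ exactly when each $\bar{B}_i$ has full row rank $n_i$. Since $\bar{B}$ is the leading block column of $\mathcal{C}(A+D,\bar{B})$, we get $\rank\big(\mathcal{C}(A+D,\bar{B})\big) \geq \rank(\bar{B}) = n_\Sigma$ for every $D$, hence equality.

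For $(c)$ and $(d)$ I would argue straight from the definitions of the two radii. For $(c)$: the hypothesis $\|D\| < \mu_{\bar{B}}(A)$ together with $\|D\| \geq 0$ forces $\mu_{\bar{B}}(A) > 0$, so $(A,\bar{B})$ is controllable; and since $\mu_{\bar{B}}(A)$ is by definition the smallest norm of an additive perturbation $\Delta A$ destroying controllability, the choice $\Delta A = D$ leaves $(A+D,\bar{B})$ controllable, i.e., $\rank\big(\mathcal{C}(A+D,\bar{B})\big) = n_\Sigma$. For $(d)$: I would argue by contraposition. If $A+D$ had an eigenvalue with positive real part it would be unstable, so $\|D\|$ would lie in the set whose infimum defines $r_{\mathbb{R}}(A)$, giving $r_{\mathbb{R}}(A) \leq \|D\|$ and contradicting the hypothesis; therefore $Re\big(\lambda(A+D)\big) \leq 0$. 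Eigenvalues on the imaginary axis remain permitted, which matches the non-strict conclusion.

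The real work is $(b)$, which I would prove in two steps. First, I would establish that the block-diagonal pair $(A,\bar{B})$ is controllable whenever every $(A_i,\bar{B}_i)$ is: because $A$ and $\bar{B}$ are block diagonal, each power $A^k\bar{B}$ is block diagonal, so $\mathcal{C}(A,\bar{B})$ splits into block rows of the form $\big[\bar{B}_i\ A_i\bar{B}_i\ \cdots\big]$ living in orthogonal coordinate subspaces; each such block row has rank $n_i$ by controllability of $(A_i,\bar{B}_i)$, the extra powers beyond $n_i-1$ being harmless by Cayley--Hamilton, and summing over $i$ yields rank $n_\Sigma$. Second, I would invoke feedback-invariance of the reachable subspace $\mathcal{R} := \Image\big(\mathcal{C}(A,\bar{B})\big)$, which is $A$-invariant and contains $\Image(\bar{B})$: an induction on $k$ shows every column of $(A+\bar{B}F)^k\bar{B}$ lies in $\mathcal{R}$, using $(A+\bar{B}F)v = Av + \bar{B}(Fv)$ with $Av \in \mathcal{R}$ and $\bar{B}(Fv) \in \Image(\bar{B}) \subseteq \mathcal{R}$. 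Hence $\Image\big(\mathcal{C}(A+\bar{B}F,\bar{B})\big) \subseteq \mathcal{R}$, and applying the same argument to the closed-loop pair with feedback $-F$ gives the reverse inclusion, so the two reachable subspaces coincide and $(A+D,\bar{B}) = (A+\bar{B}F,\bar{B})$ is controllable.

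The main obstacle here is purely bookkeeping rather than conceptual: making the feedback-invariance induction airtight and handling the Cayley--Hamilton truncation cleanly in the block-diagonal rank count. No deeper idea is needed beyond these two standard facts, and I would present $(b)$ as the combination of ``block-diagonal controllability'' and ``invariance of controllability under state feedback $D = \bar{B}F$.''
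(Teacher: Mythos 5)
Your proof is correct and takes essentially the same route as the paper's: parts (a), (c), and (d) follow directly from the full rank of $\bar{B}$ and the definitions of $\mu_{\bar{B}}(A)$ and $r_{\mathbb{R}}(A)$, and part (b) combines feedback invariance of controllability under $D = \bar{B}F$ with block-diagonal rank additivity. The only difference is one of self-containedness: you prove the two standard facts in (b) from scratch (feedback invariance via the reachable subspace $\Image\big(\mathcal{C}(A,\bar{B})\big)$, and the block-diagonal rank sum via Cayley--Hamilton), whereas the paper cites the literature for the former and asserts the latter.
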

\begin{proof}
    \begin{enumerate}[label=(\alph*)]
        \item Assume that $\rank(\bar{B}_i) = n_i$. Because $\bar{B} = \diag(\bar{B}_1, \hdots, \bar{B}_N)$, we have $\rank(\bar{B}) = n_\Sigma$, which yields $\rank\big( \mathcal{C}( A+D, \bar{B}) \big) = \rank\big( \bar{B}, (A+D)\bar{B},\hdots \big) = n_\Sigma$.
        
        \item If $D$ can be written as state feedback $D = \bar{B}F$, then $\rank\big( \mathcal{C}(A+D, \bar{B}) \big) = \rank\big( \mathcal{C}(A, \bar{B}) \big)$ \citep{multivariable_control}. Since $A$ and $\bar{B}$ are block diagonal matrices,
        \begin{equation*}
            \rank\big( \mathcal{C}(A, \bar{B}) \big) = \sum_{i=1}^N \rank\big( \mathcal{C}(A_i, \bar{B}_i) \big) = \sum_{i=1}^N n_i = n_\Sigma,
        \end{equation*}
        % $\rank\big( \mathcal{C}(A, \bar{B}) \big) = \sum_{i=1}^N \rank\big( \mathcal{C}(A_i, \bar{B}_i) \big) = \sum_{i=1}^N n_i = n_\Sigma$
        where $\rank\big( \mathcal{C}(A_i, \bar{B}_i) = n_i$ comes from the controllability of pair $(A_i, \bar{B}_i)$.
        
        \item By definition of $\mu_{\bar{B}}(A)$, $\|D\| < \mu_{\bar{B}}(A)$ leads to the controllability of pair $(A+D, \bar{B})$, i.e., to $\rank\big(\mathcal{C}(A+D,\bar{B})\big) = n_\Sigma$.

        \item By definition of the stability radius $\|D\| < r_{\mathbb{R}}(A)$ leads to the stability of $A+D$, i.e., $Re(\lambda(A+D)) \leq 0$.
    \end{enumerate}
\end{proof}

Combining statements (a), (b) or (c)  of Proposition~\ref{prop: stabilizability conditions} with its statement (d) yields three different sufficient stabilizability conditions for network~\eqref{eq: network X} thanks to Proposition~\ref{prop: N stab A+D}.

Note that having $\mu_{\bar{B}}(A) > 0$ and $r_{\mathbb{R}}(A) > 0$ implicitly requires the stabilizability of all the tuples $\big(A_i, \bar{B}_i, \bar{\mathcal{U}}_i\big)$. Indeed, $\mu_{\bar{B}}(A) > 0$ requires $(A, \bar{B})$ to be controllable, i.e., each tuple $(A_i, \bar{B}_i)$ must be controllable because of the diagonal structure of $A$ and $\bar{B}$. Similarly, $r_{\mathbb{R}}(A) > 0$ requires $Re\big(\lambda(A) \big) \leq 0$, but $\lambda(A) = \lambda(A_1) \cup \hdots \cup \lambda(A_N)$ because $A = \diag(A_1, \hdots, A_N)$, hence $Re\big(\lambda(A_i)\big) \leq 0$. To sum up, $\mu_{\bar{B}}(A) > 0$ and $r_{\mathbb{R}}(A) > 0$ require $\rank(A_i, \bar{B}_i) = n_i$ and $Re\big(\lambda(A_i)\big) \leq 0$, which are exactly the stabilizability conditions of Sontag for tuple  $\big(A_i, \bar{B}_i, \bar{\mathcal{U}}_i\big)$ stated in Theorem~\ref{thm:Sontag}.

Proposition~\ref{prop: stabilizability conditions} provides several stabilizability conditions for network~\eqref{eq: network X}. We will now address Problem~\ref{prob: resilient stabilizability} by studying the network dynamics after enduring a partial loss of control authority.

\subsection{Resilient stabilizability results}\label{subsec: res stabilizability}

Inspired by the work completed before Proposition~\ref{prop: resilience Z}, we define the input sets of the network $B\mathcal{U} := \big\{ Bu : u \in \mathcal{U} \big\}$, $C\mathcal{W} := \big\{ Cw_N : w_N \in \mathcal{W}_N \big\}$ and their Minkowski difference $\mathcal{Z} := B\mathcal{U} \ominus (-C\mathcal{W}) \subseteq \mathbb{R}^{n_\Sigma}$. 
Similarly, we introduce the input sets of each subsystems $\bar{B}_i \bar{\mathcal{U}}_i := \big\{ \bar{B}_i \bar{u}_i : \bar{u}_i \in \bar{\mathcal{U}}_i \big\}$ for $i \in [\![1, N-1]\!]$, $B_N \mathcal{U}_N := \big\{ B_N u_N : u_N \in \mathcal{U}_N \big\}$, $C_N \mathcal{W}_N := \big\{ C_N w_N : w_N \in \mathcal{W}_N \big\}$ and their Minkowski difference $\mathcal{Z}_N := B_N \mathcal{U}_N \ominus (-C_N \mathcal{W}_N)$. These sets are all linked together with the following result. 

\begin{proposition}\label{prop: Z cross product}
    The resilient control set of network \eqref{eq: network X split} is the Cartesian product of the input sets of its subsystems: $\mathcal{Z} = \bar{B}_1 \bar{\mathcal{U}}_1 \times \hdots \times \bar{B}_{N-1} \bar{\mathcal{U}}_{N-1} \times \mathcal{Z}_N$.
\end{proposition}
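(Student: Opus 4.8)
The plan is to exploit the block-diagonal structure of $B$ together with the sparsity of $C$ in order to reduce the $\mathbb{R}^{n_\Sigma}$-valued Minkowski difference to a blockwise computation. First I would factor the two ingredient sets. Since $B = \diag(\bar{B}_1, \hdots, \bar{B}_{N-1}, B_N)$ and $\mathcal{U} = \bar{\mathcal{U}}_1 \times \hdots \times \bar{\mathcal{U}}_{N-1} \times \mathcal{U}_N$, applying $B$ block by block gives
\[
    B\mathcal{U} = \bar{B}_1\bar{\mathcal{U}}_1 \times \hdots \times \bar{B}_{N-1}\bar{\mathcal{U}}_{N-1} \times B_N\mathcal{U}_N.
\]
Because $C = \left(\begin{smallmatrix} 0_{(n_\Sigma-n_N) \times p_N} \\ C_N \end{smallmatrix}\right)$ annihilates the first $n_\Sigma - n_N$ coordinates, its image set likewise factors, with trivial factors in all but the last block:
\[
    -C\mathcal{W} = \{0_{n_1}\} \times \hdots \times \{0_{n_{N-1}}\} \times (-C_N\mathcal{W}_N).
\]

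The key step is then to observe that the Minkowski difference distributes over Cartesian products. For product sets $\mathcal{X} = \mathcal{X}_1 \times \mathcal{X}_2$ and $\mathcal{Y} = \mathcal{Y}_1 \times \mathcal{Y}_2$ of matching dimensions, a point $z = (z_1, z_2)$ satisfies $\{z\} \oplus \mathcal{Y} \subseteq \mathcal{X}$ if and only if $z_j + y_j \in \mathcal{X}_j$ for every $y_j \in \mathcal{Y}_j$ and each $j \in \{1,2\}$, since membership in a product is equivalent to blockwise membership. Hence $\mathcal{X} \ominus \mathcal{Y} = (\mathcal{X}_1 \ominus \mathcal{Y}_1) \times (\mathcal{X}_2 \ominus \mathcal{Y}_2)$. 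Applying this identity across the $N$ blocks to the factorizations above yields
\[
    \mathcal{Z} = \big(\bar{B}_1\bar{\mathcal{U}}_1 \ominus \{0\}\big) \times \hdots \times \big(\bar{B}_{N-1}\bar{\mathcal{U}}_{N-1} \ominus \{0\}\big) \times \big(B_N\mathcal{U}_N \ominus (-C_N\mathcal{W}_N)\big).
\]

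Finally I would simplify the factors. The Minkowski difference of any set with the singleton $\{0\}$ is the set itself, so the first $N-1$ factors collapse to $\bar{B}_i\bar{\mathcal{U}}_i$, while the last factor is by definition $\mathcal{Z}_N$; this gives the claimed identity. I do not expect a genuine obstacle: the whole argument rests on the blockwise reading of the defining condition $z - C w_N \in B\mathcal{U}$ for all $w_N \in \mathcal{W}_N$, which decouples precisely because $Cw_N$ perturbs only the last block. The only point meriting a line of care is that the product factorization of $\ominus$ requires no convexity or boundedness assumptions — the membership characterization above holds verbatim for arbitrary sets — so it applies directly to the hypercube images at hand.
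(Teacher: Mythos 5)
Your proof is correct, but it is organized differently from the paper's. The paper proves the identity by a direct double inclusion: it takes $z = (z_1,\hdots,z_N) \in \mathcal{Z}$, applies the defining condition $z - Cw_N \in B\mathcal{U}$ for all $w_N \in \mathcal{W}_N$, and reads off blockwise that $z_i \in \bar{B}_i\bar{\mathcal{U}}_i$ and $z_N \in \mathcal{Z}_N$; then it verifies the converse inclusion by exhibiting, for each $w_N$, the admissible $u$ that witnesses $z - Cw_N \in B\mathcal{U}$. You instead factor both ingredient sets, $B\mathcal{U} = \bar{B}_1\bar{\mathcal{U}}_1 \times \hdots \times \bar{B}_{N-1}\bar{\mathcal{U}}_{N-1} \times B_N\mathcal{U}_N$ and $-C\mathcal{W} = \{0\} \times \hdots \times \{0\} \times (-C_N\mathcal{W}_N)$, and invoke a general distributivity identity $(\mathcal{X}_1 \times \mathcal{X}_2) \ominus (\mathcal{Y}_1 \times \mathcal{Y}_2) = (\mathcal{X}_1 \ominus \mathcal{Y}_1) \times (\mathcal{X}_2 \ominus \mathcal{Y}_2)$, together with $\mathcal{X} \ominus \{0\} = \mathcal{X}$. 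The underlying computation is the same blockwise reading of $z - Cw_N \in B\mathcal{U}$, so the two arguments are logically equivalent, but your version isolates a reusable lemma: it would apply verbatim if several subsystems had malfunctioning actuators (giving $\mathcal{Z} = \mathcal{Z}_1 \times \hdots \times \mathcal{Z}_N$ directly), whereas the paper's element chase is concrete and avoids stating auxiliary machinery. One small caveat you gloss over: the forward direction of your blockwise equivalence, and hence the distributivity identity, requires each factor $\mathcal{Y}_j$ to be nonempty (if some $\mathcal{Y}_j = \emptyset$ the left-hand side is all of the space while the right-hand side need not be); this holds here since the factors are $\{0\}$ and $-C_N\mathcal{W}_N \ni 0$, so it does not affect your conclusion, but it is the one hypothesis beyond "arbitrary sets" that the identity genuinely needs.
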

\begin{proof}
    We prove this equality by showing both inclusions.

    Take $z = (z_1, \hdots, z_N) \in \mathcal{Z}$. We want to show that $z_i \in \bar{B}_i \bar{\mathcal{U}}_i$ for $i \in [\![1, N-1]\!]$ and that $z_N \in \mathcal{Z}_N$. Let $w_N \in \mathcal{W}_N$. Since $z \in \mathcal{Z}$, there exists $u = (\bar{u}_1, \hdots, \bar{u}_{N-1}, u_N) \in \mathcal{U} = \bar{\mathcal{U}}_1 \times \hdots \times \bar{\mathcal{U}}_{N-1} \times \mathcal{U}_N$ such that 
    \begin{equation*}
        z - Cw_N = Bu = \begin{pmatrix} \bar{B}_1 \bar{u}_1 \\ \vdots \\ \bar{B}_{N-1} \bar{u}_{N-1} \\ B_N u_N \end{pmatrix} = \begin{pmatrix} z_1 \\ \vdots \\ z_{N-1} \\ z_N - C_N w_N \end{pmatrix}
    \end{equation*}
    Then, $z_i \in \bar{B}_i \bar{\mathcal{U}}_i$ for $i \in [\![1, N-1]\!]$. Additionally, for all $w_N \in \mathcal{W}_N$ we have $z_N - C_N w_N \in B_N \mathcal{U}_N$, i.e., $z_N \in \mathcal{Z}_N$.
    
    On the other hand, let $\bar{u}_i \in \bar{\mathcal{U}}_i$ for $i \in [\![1, N-1]\!]$, $z_N \in \mathcal{Z}_N$ and define $z = \big( \bar{B}_1 \bar{u}_1, \hdots, \bar{B}_{N-1}\bar{u}_{N-1}, z_N \big)$. We want to show that $z \in \mathcal{Z}$, so we take some $w_N \in \mathcal{W}_N$. Since $z_N \in \mathcal{Z}_N$, there exists $u_N \in \mathcal{U}_N$ such that $z_N - C_N w_N = B_N u_N$. Then, 
    \begin{equation*}
        z - Cw_N = \begin{pmatrix} \bar{B}_1 \bar{u}_1 \\ \vdots \\ \bar{B}_{N-1} \bar{u}_{N-1} \\ z_N \end{pmatrix} - \begin{pmatrix} 0 \\ \vdots \\ 0 \\ C_N \end{pmatrix} w_N = \begin{pmatrix} \bar{B}_1 \bar{u}_1 \\ \vdots \\ \bar{B}_{N-1} \bar{u}_{N-1} \\ B_N u_N \end{pmatrix} \in B\mathcal{U},\quad \text{so}\ z \in \mathcal{Z}.
    \end{equation*}
\end{proof}

Let us now address Problem~\ref{prob: resilient stabilizability} by considering the case where $(A_N, B_N, C_N, \mathcal{U}_N, \mathcal{W}_N)$ is resiliently stabilizable. Using Proposition~\ref{prop: resilience Z} we derive a sufficient condition for resilient stabilizability.

\begin{proposition}\label{prop: N res stab}
     If $\rank(\bar{B}_i) = n_i$ for all $i \in [\![1, N-1]\!]$, $\interior(\mathcal{Z}_N) \neq \emptyset$ and $\|D\| < r_{\mathbb{R}}(A)$, then network \eqref{eq: network X} is resiliently stabilizable.
\end{proposition}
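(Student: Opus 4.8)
The plan is to recognize the malfunctioning network~\eqref{eq: network X split} as an instance of the split system~\eqref{eq:splitted ODE}, but with dynamics matrix $A+D$ in place of $A$, and then to invoke Proposition~\ref{prop: resilience Z}. By Definition~\ref{def: network resilience}, the resilient stabilizability of the network is precisely the resilient stabilizability of the tuple $(A+D, B, C, \mathcal{U}, \mathcal{W}_N)$, and Proposition~\ref{prop: resilience Z} characterizes this through two requirements: $\interior(\mathcal{Z}) \neq \emptyset$ in $\mathbb{R}^{n_\Sigma}$ and $Re(\lambda(A+D)) \leq 0$. I would verify both under the stated hypotheses.

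The eigenvalue requirement is immediate: since $\|D\| < r_{\mathbb{R}}(A)$, Proposition~\ref{prop: stabilizability conditions}(d) yields $Re(\lambda(A+D)) \leq 0$. For the interior requirement I would rely on the Cartesian product structure $\mathcal{Z} = \bar{B}_1 \bar{\mathcal{U}}_1 \times \cdots \times \bar{B}_{N-1}\bar{\mathcal{U}}_{N-1} \times \mathcal{Z}_N$ furnished by Proposition~\ref{prop: Z cross product}. Because the product has finitely many factors, its interior equals the product of the factors' interiors, so it suffices that each factor be full-dimensional in its own space. The last factor $\mathcal{Z}_N$ has nonempty interior by hypothesis. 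For each $i \in [\![1,N-1]\!]$, the assumption $\rank(\bar{B}_i) = n_i$ makes the linear map $\bar{B}_i : \mathbb{R}^{m_i} \to \mathbb{R}^{n_i}$ surjective, hence open, so the image $\bar{B}_i \bar{\mathcal{U}}_i = \bar{B}_i [-1,1]^{m_i}$ of the full-dimensional hypercube contains an open neighborhood of $0$ in $\mathbb{R}^{n_i}$; thus $\interior(\bar{B}_i \bar{\mathcal{U}}_i) \neq \emptyset$. Assembling the factors gives $\interior(\mathcal{Z}) \neq \emptyset$.

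With both conditions in hand, Proposition~\ref{prop: resilience Z} delivers the resilient stabilizability of network~\eqref{eq: network X split}, which completes the argument. I do not expect a substantive obstacle here, as the statement is essentially a synthesis of three earlier results; the only steps demanding minor care are the identity between the interior of a finite product and the product of the interiors, and the claim that a full-row-rank matrix maps the hypercube onto a full-dimensional set. Both are elementary consequences of the openness of surjective linear maps.
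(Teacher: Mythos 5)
Your proof is correct and follows essentially the same route as the paper's: both establish $\interior(\mathcal{Z}) \neq \emptyset$ via the product structure from Proposition~\ref{prop: Z cross product} together with $\rank(\bar{B}_i) = n_i$, obtain $Re(\lambda(A+D)) \leq 0$ from $\|D\| < r_{\mathbb{R}}(A)$, and conclude with Proposition~\ref{prop: resilience Z}. The only difference is that you spell out the elementary justifications (openness of surjective linear maps, interior of a finite Cartesian product) that the paper leaves implicit.
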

\begin{proof}
    Since $\rank(\bar{B}_i) = n_i$, $\interior(\bar{B}_i \bar{\mathcal{U}}_i) \neq \emptyset$, so that according to Proposition~\ref{prop: Z cross product} we have $\interior(\mathcal{Z}) \neq \emptyset$. By assumption, we have $\|D\| < r_{\mathbb{R}}(A)$, i.e., $Re(\lambda(A+D)) \leq 0$. Then, Proposition~\ref{prop: resilience Z} states that network \eqref{eq: network X} is resiliently stabilizable.
\end{proof}

Proposition~\ref{prop: N res stab} provides a straightforward resilient stabilizability condition for the network in a case that is similar to Proposition~\ref{prop: stabilizability conditions}~(a). As mentioned after Proposition~\ref{prop: resilience Z}, the condition $\interior(\mathcal{Z}_N) \neq \emptyset$ requires $\rank(B_N) = \rank(\bar{B}_N) = n_N$. 
Then, Proposition~\ref{prop: N res stab} requires all $\bar{B}_i$ to be full rank, which is very restrictive and not necessary for stabilizability. Instead, we want to use Theorem~\ref{thm: N&S resilience} to derive a less restrictive resilient stabilizability condition for the network. To use this theorem, we must first build a matrix $Z \in \mathbb{R}^{n_\Sigma \times r_\Sigma}$ with $r_\Sigma = \dim(\mathcal{Z})$ and satisfying $\Image(Z) = \Span(\mathcal{Z})$. In practice, matrix $Z$ is built by collating $r_\Sigma$ linearly independent vectors from set $\mathcal{Z}$.

\begin{proposition}\label{prop: N res stab Z}
    If $\|D\| < \min\big\{ r_{\mathbb{R}}(A), \mu_Z(A) \big\}$ and there is no real eigenvector $v$ of $(A+D)^\top$ satisfying $v^\top z \leq 0$ for all $z \in \mathcal{Z}$, then network \eqref{eq: network X split} is resiliently stabilizable.
\end{proposition}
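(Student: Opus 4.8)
The plan is to invoke Theorem~\ref{thm: N&S resilience} applied to the full network system \eqref{eq: network X split} with dynamics matrix $A+D$, input matrix $B$, perturbation matrix $C$, and the resilient control set $\mathcal{Z}$. That theorem states that network \eqref{eq: network X split} is resiliently stabilizable if and only if three conditions hold: (i) $Re\big(\lambda(A+D)\big) \leq 0$; (ii) $\rank\big(\mathcal{C}(A+D, Z)\big) = n_\Sigma$, where $Z$ is the matrix with $\Image(Z) = \Span(\mathcal{Z})$ built just before the statement; and (iii) there is no real eigenvector $v$ of $(A+D)^\top$ satisfying $v^\top z \leq 0$ for all $z \in \mathcal{Z}$. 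The hypotheses of the proposition are tailored to supply exactly these three conditions, so the proof is a matter of discharging each one in turn.

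First I would handle the eigenvalue condition (i): the assumption $\|D\| < r_{\mathbb{R}}(A)$ gives $Re\big(\lambda(A+D)\big) \leq 0$ directly by the definition of the real stability radius, exactly as in Proposition~\ref{prop: stabilizability conditions}~(d). Next I would address the rank condition (ii) using the hypothesis $\|D\| < \mu_Z(A)$. Here the key observation is that $\mu_Z(A)$ is the distance-to-uncontrollability quantity $\mu_{\bar{B}}(A)$ from the preceding text, but computed with the matrix $Z$ in place of $\bar{B}$: it is the smallest perturbation $\|\Delta A\|$ making $(A+\Delta A, Z)$ uncontrollable. By the same reasoning as Proposition~\ref{prop: stabilizability conditions}~(c), the bound $\|D\| < \mu_Z(A)$ forces the pair $(A+D, Z)$ to remain controllable, i.e. $\rank\big(\mathcal{C}(A+D, Z)\big) = n_\Sigma$. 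Finally, the eigenvector condition (iii) is assumed outright in the statement, so it simply transfers to the theorem. Collecting the three verified conditions and invoking Theorem~\ref{thm: N&S resilience} completes the argument.

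The main subtlety I anticipate is making the substitution $\mathcal{Z} = B\mathcal{U} \ominus (-C\mathcal{W})$ and the matrix $Z$ rigorous at the network level. Proposition~\ref{prop: Z cross product} guarantees that $\mathcal{Z}$ has the clean Cartesian-product structure $\bar{B}_1\bar{\mathcal{U}}_1 \times \cdots \times \bar{B}_{N-1}\bar{\mathcal{U}}_{N-1} \times \mathcal{Z}_N$, which ensures $\Span(\mathcal{Z})$ is well-defined and that the matrix $Z$ built by collating $r_\Sigma = \dim(\mathcal{Z})$ linearly independent vectors of $\mathcal{Z}$ is the correct object to feed into Theorem~\ref{thm: N&S resilience}. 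The only genuine gap to watch is whether the quantity $\mu_Z(A)$ appearing in the hypothesis is precisely the distance-to-uncontrollability of the pair $(A, Z)$ in the sense of $\mu_{\bar{B}}(A)$ with $\bar{B}$ replaced by $Z$; I would state this identification explicitly, since $Z$ is not the full input matrix $B$ but a basis for $\Span(\mathcal{Z})$, and the controllability of $(A+D, Z)$ rather than $(A+D, B)$ is exactly what Theorem~\ref{thm: N&S resilience} requires. Once that identification is in place, the proof is a short three-line verification with no heavy computation.
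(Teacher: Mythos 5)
Your proposal is correct and follows essentially the same route as the paper's own proof: apply Theorem~\ref{thm: N&S resilience} to network~\eqref{eq: network X split}, discharge the eigenvalue condition via $\|D\| < r_{\mathbb{R}}(A)$, the rank condition via $\|D\| < \mu_Z(A)$ (understood exactly as you say, as the distance to uncontrollability of the pair $(A,Z)$), and take the eigenvector condition by assumption. Your explicit identification of $\mu_Z(A)$ and the appeal to Proposition~\ref{prop: Z cross product} for the structure of $\mathcal{Z}$ merely spell out details the paper leaves implicit; there is no substantive difference.
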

\begin{proof}
    We apply Theorem~\ref{thm: N&S resilience} to network~\eqref{eq: network X split} and obtain that it is resiliently stabilizable if and only if $Re\big(\lambda(A+D)\big) \leq 0$, $\rank\big( \mathcal{C}(A+D, Z) \big) = n_\Sigma$ and there is no real eigenvector $v$ of $(A+D)^\top$ satisfying $v^\top z \leq 0$ for all $z \in \mathcal{Z}$. The eigenvalue and rank conditions are satisfied thanks to $\|D\| < \min\{ r_{\mathbb{R}}(A), \mu_Z(A) \}$, while the eigenvector condition is verified by assumption.
\end{proof}

As before, the fact that $(A_i, \bar{B}_i, \bar{\mathcal{U}}_i)$ are stabilizable and that $(A_N, B_N, C_N, \mathcal{U}_N, \mathcal{W}_N)$ is resiliently stabilizable, are implied by the conditions of Proposition~\ref{prop: N res stab Z}. 

When $\mathcal{Z}$ is not of full dimension, the eigenvector condition of Proposition~\ref{prop: N res stab Z} is difficult to verify. Indeed, the space $\mathcal{Z}^\perp$ is non-trivial and thus might encompass a real eigenvector of $A+D$ even if none of the eigenvectors of $A$ are part of $\mathcal{Z}^\perp$.
Intuitively, when $D$ is small, the eigenvectors of $A+D$ should be `close' to those of $A$. This intuition is formalized in Corollary 7.2.6. of \citep{matrix_computations}, but the complexity of its statement prevents the derivation of a simple condition to be verified by $A$ and $D$.
Thus, we choose to remain with Propositions~\ref{prop: N res stab} and \ref{prop: N res stab Z} as our solutions to Problem~\ref{prob: resilient stabilizability}.

\section{Stabilizability of nonresilient networks}\label{sec: nonresilient}

In this section, we address Problem~\ref{prob: nonresilient stabilizability} by studying the network-wide repercussions resulting from the partial loss of control authority in nonresilient subsystem~\eqref{eq:split system N}.

We now study the eventuality where $(A_N, B_N, C_N, \mathcal{U}_N, \mathcal{W}_N)$ is not resiliently stabilizable. Following Proposition~\ref{prop: resilience Z}, we consider the case where $-C_N \mathcal{W}_N \nsubseteq B_N \mathcal{U}_N$, i.e., subsystem~\eqref{eq:split system N} lost an actuator to which it is not resilient.
Then, there are some undesirable inputs $w_N$ that no control input $u_N$ can overcome. Such undesirable inputs $w_N$ can prevent stabilizability of subsystem $N$ as demonstrated in Lemma~6 of \citep{ECC_extended}. 

To evaluate the resilient stabilizability of network~\eqref{eq: network X split}, we need to study the worst-case scenario where $w_N$ is the most destabilizing undesirable input for subsystem~\eqref{eq:split system N}. If $A_N$ is not Hurwitz, these destabilizing inputs $w_N$ can drive the state $x_N$ to infinity. In this situation, coupling terms $D_{i,N} x_N$ impacting subsystems~(\ref{eq:N network D}-i) can become unbounded preventing to stabilize these other subsystems.
We will then focus on the case where $A_N$ is Hurwitz, so that the state $x_N$ cannot be forced to diverge by $w_N$. Then, coupling term $D_{-,N} x_N$ perturbing subsystem~\eqref{eq: X N-1} is bounded and might be counteracted if controller $\hat{B} \hat{u}$ is strong enough. 

To address Problem~\ref{prob: nonresilient stabilizability}, we will quantify the maximal degree of non-resilience of subsystem \eqref{eq:split system N} despite which subsystem~\eqref{eq: X N-1} remain resiliently stabilizable in the sense of Definition~\ref{def: subsystem stabilizability}.

We start by calculating how far can $w_N$ force state $x_N$ despite the best $u_N$ and the Hurwitzness of $A_N$.

\begin{proposition}\label{prop: steady state xN}
    If $A_N$ is Hurwitz and $-C_N \mathcal{W}_N \nsubseteq B_N \mathcal{U}_N$, then for all $t \geq 0$ the following holds:
    \begin{equation}\label{eq: x_N bound incomplete}
        \|x_N(t)\|_{P_N} \leq e^{-\alpha_N t} \left( \|x_N(0)\|_{P_N} + \int_0^t e^{\alpha_N \tau} \beta_N(\tau)\, d\tau \right) ,
    \end{equation}
    for all $P_N = P_N^\top \succ 0$ and $Q_N \succ 0$ such that $A_N^\top P_N + P_N A_N = -Q_N$ and with 
    \begin{equation*}
        \alpha_N := \frac{\lambda_{min}^{Q_N}}{2\lambda_{max}^{P_N}},\ \beta_N(t) := z_{max}^{P_N} + \|D_{N,-} \chi(t)\|_{P_N},\ z_{max}^{P_N} := \underset{w_N\, \in\, \mathcal{W}_N}{\max} \Big\{ \underset{u_N\, \in\, \mathcal{U}_N}{\min} \|C_N w_N + B_N u_N\|_{P_N} \Big\}.
    \end{equation*}
\end{proposition}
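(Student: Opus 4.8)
The plan is to use the quadratic Lyapunov function $V(x_N) := \|x_N\|_{P_N}^2 = x_N^\top P_N x_N$ associated with the Lyapunov equation $A_N^\top P_N + P_N A_N = -Q_N$, whose solution $P_N \succ 0$ exists and is unique for every $Q_N \succ 0$ precisely because $A_N$ is Hurwitz. First I would rewrite the dynamics of subsystem~\eqref{eq:split system N} in the compact form $\dot x_N(t) = A_N x_N(t) + f(t)$, where $f(t) := B_N u_N(t) + C_N w_N(t) + D_{N,-}\chi(t)$ collects the controlled input, the undesirable input, and the coupling from the other subsystems, using $\sum_{k \in \mathcal{N}_N} D_{N,k} x_k = D_{N,-}\chi$.

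Next I would differentiate $V$ along trajectories. Substituting the Lyapunov equation gives $\dot V = -x_N^\top Q_N x_N + 2 x_N^\top P_N f$. For the quadratic term I would use $x_N^\top Q_N x_N \geq \lambda_{min}^{Q_N}\|x_N\|^2$ together with $\|x_N\|^2 \geq V/\lambda_{max}^{P_N}$ to obtain $-x_N^\top Q_N x_N \leq -2\alpha_N V$ with $\alpha_N = \lambda_{min}^{Q_N}/(2\lambda_{max}^{P_N})$. For the cross term I would apply the Cauchy--Schwarz inequality in the inner product $\langle a,b\rangle_{P_N} := a^\top P_N b$ induced by $P_N$, yielding $2 x_N^\top P_N f \leq 2\|x_N\|_{P_N}\|f\|_{P_N}$.

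To bound $\|f\|_{P_N}$ I would split it by the triangle inequality into $\|B_N u_N + C_N w_N\|_{P_N} + \|D_{N,-}\chi\|_{P_N}$ and exploit the controller's freedom: since $\mathcal{U}_N$ is compact, for each value $w_N(t)$ the controller can select $u_N(t) \in \mathcal{U}_N$ achieving the inner minimum defining $z_{max}^{P_N}$, so that $\|B_N u_N(t) + C_N w_N(t)\|_{P_N} \leq z_{max}^{P_N}$ for every admissible $w_N$. This gives $\|f(t)\|_{P_N} \leq \beta_N(t)$ and hence the differential inequality $\dot V \leq -2\alpha_N V + 2\sqrt{V}\,\beta_N(t)$.

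Finally, writing $g(t) := \sqrt{V(t)} = \|x_N(t)\|_{P_N}$ and using $\dot V = 2 g\dot g$, the inequality reduces at points where $g > 0$ to the linear comparison inequality $\dot g \leq -\alpha_N g + \beta_N(t)$; the comparison lemma then integrates it into the claimed bound $g(t) \leq e^{-\alpha_N t}\big(g(0) + \int_0^t e^{\alpha_N\tau}\beta_N(\tau)\,d\tau\big)$. I expect the main obstacle to be the rigorous passage from $V$ to $g = \sqrt{V}$: the square root is not differentiable where $x_N = 0$, so the differential inequality for $g$ must be justified carefully (for instance by working with $\sqrt{V+\varepsilon}$ and letting $\varepsilon \to 0$, or via a Dini-derivative argument), and one must confirm that the optimal pointwise selection of $u_N$ yields a measurable control for which the Carath\'eodory solution of the entire network exists and is unique.
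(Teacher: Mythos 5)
Your proposal matches the paper's proof essentially step for step: the same Lyapunov function $V = \|x_N\|_{P_N}^2$, the same Cauchy--Schwarz bound in the $P_N$-inner product (the paper applies it to the input and coupling terms separately rather than to their sum followed by the triangle inequality, which is equivalent), the same choice of $u_N$ minimizing $\|B_N u_N + C_N w_N\|_{P_N}$ to get the bound by $z_{max}^{P_N}$, the same passage to $y_N = \|x_N\|_{P_N}$ with the restriction to $y_N > 0$, and the same invocation of Khalil's Comparison Lemma to integrate $\dot y_N \leq -\alpha_N y_N + \beta_N(t)$. The regularity caveats you flag (non-differentiability of the square root at the origin, measurability of the pointwise-optimal control) are also left implicit in the paper, so your argument is at least as careful as the original.
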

\begin{proof}
    Since $A_N$ is Hurwitz, there exist a symmetric $P_N \succ 0$ and $Q_N \succ 0$ such that $A_N^\top P_N + P_N A_N = -Q_N$ according to Lyapunov theory \citep{Kalman}. Let us consider any such pair $(P_N, Q_N)$. Then, inspired by Example~15 of \citep{Kalman}, we study the $P_N$-norm of $x_N$, i.e., $x_N^\top P_N x_N = \|x_N\|_{P_N}^2$ when state $x_N$ is following dynamics~\eqref{eq:split system N}. We obtain
    \begin{align*}
        \frac{d}{dt} \|x_N(t) \|_{P_N}^2 &= \dot x_N ^\top P_N x_N + x_N^\top P_N \dot x_N \\
        &= x_N^\top \big(A_N^\top P_N + P_N A_N \big) x_N + 2 x_N^\top P_N (B_N u_N + C_N w_N) + 2 x_N^\top P_N \sum_{i=1}^{N-1} D_{N,i} x_i.
    \end{align*}
    With the notation of \eqref{eq: notation D-}, we have $\sum_{i=1}^{N-1} D_{N,i} x_i = D_{N,-} \chi$. Since $P_N \succ 0$, the Cauchy-Schwarz inequality \citep{matrix_computations} as stated in Lemma~\ref{lemma: CS} yields 
    \begin{equation*}
        x_N^\top P_N D_{N,-} \chi \leq \|x_N\|_{P_N} \|D_{N,-} \chi\|_{P_N} \hspace{2mm} \text{and} \quad x_N^\top P_N (B_Nu_N + C_Nw_N) \leq \|x_N\|_{P_N} \|B_N u_N + C_N w_N\|_{P_N}.
    \end{equation*}
    We will demonstrate the stabilizing property of the control $u_N$ minimizing $\|B_N u_N + C_N w_N\|_{P_N}$ when $w_N$ is chosen to maximize this norm. By definition these choices of $u_N$ and $w_N$ yield $\|B_N u_N + C_N w_N\|_{P_N} \leq z_{max}^{P_N}$. Then,
    \begin{equation*}
        \frac{d}{dt} \|x_N(t) \|_{P_N}^2 \leq -x_N^\top Q_N x_N + 2 \|x_N\|_{P_N} \left( z_{max}^{P_N} + \|D_{N,-} \chi\|_{P_N} \right). 
    \end{equation*}
    Since $Q_N \succ 0$, we have $-x_N^\top Q_N x_N \leq -\lambda_{min}^{Q_N} x_N^\top x_N$ \citep{Khalil} and $\|x_N\|_{P_N}^2 \leq \lambda_{max}^{P_N} x_N^\top x_N$ leads to $-x_N^\top x_N \leq \frac{-1}{\lambda_{max}^{P_N}}\|x_N\|_{P_N}^2$. Hence, we obtain
    \begin{align*}
        \frac{d}{dt} \|x_N(t)\|_{P_N}^2 &\leq -\frac{\lambda_{min}^{Q_N}}{\lambda_{max}^{P_N}} \|x_N\|_{P_N}^2 + 2 \|x_N\|_{P_N} \left( z_{max}^{P_N} + \|D_{N,-} \chi\|_{P_N} \right) \\
        &\leq -2\alpha_N \|x_N(t)\|_{P_N}^2 + 2\beta_N(t) \|x_N(t)\|_{P_N},
    \end{align*}
    by definition of $\alpha_N$ and $\beta_N$.
    We introduce $y_N(t) := \|x_N(t)\|_{P_N}$, so that we have
    \begin{equation*}
        \frac{d}{dt} y_N^2(t) = 2y_N(t) \dot y_N(t) \leq -2\alpha_N y_N(t)^2 + 2\beta_N(t) y_N(t).
    \end{equation*}
    For $y_N(t) > 0$, we then have $\dot y_N(t) \leq -\alpha_N y_N(t) + \beta_N(t)$. 
    
    We now introduce the function $f_N(t, s) := -\alpha_N s + \beta_N(t)$. The solution of the differential equation $\dot s(t) = f_N\big(t, s(t) \big)$, $s(0) = \|x_N(0)\|_{P_N}$ is $s(t) = e^{-\alpha_N t} \left( \|x_N(0)\|_{P_N} + \int_{0}^t e^{\alpha_N \tau} \beta_N(\tau)\, d\tau \right)$. Since $f_N(t, s)$ is Lipschitz in $s$ and continuous in $t$, $\dot y_N(t) \leq f_N\big( t, y_N(t) \big)$ and $y_N(0) = z(0)$, the Comparison Lemma of \citep{Khalil} states that $y_N(t) \leq s(t)$ for all $t \geq 0$, hence \eqref{eq: x_N bound incomplete} holds.
\end{proof}

Note that the definition of $z_{max}^{P_N}$ in Proposition~\ref{prop: steady state xN} implies that $w_N$ is chosen first and the controller $u_N$ reacts optimally to it. The objective function not being concave-convex, there is an information imbalance giving an advantage to the second player. If we wanted instead the undesirable input to react optimally to any controller, we could use $z' := \underset{u_N\, \in\, \mathcal{U}_N}{\min} \, \underset{w_N\, \in\, \mathcal{W}_N}{\max} \big\|C_N w_N + B_N u_N\big\|_{P_N}$ in place of $z_{max}^{P_N}$. Note that $z' \geq z_{max}^{P_N}$.

Since $A_N$ is Hurwitz, we can bound the steady-state value of the state $x_N$ despite undesirable inputs that cannot be counteracted. We will now study the impact of $x_N$ on the rest of the network, whose dynamics follow \eqref{eq: X N-1}.
Recall that $\chi = \big( x_1, \hdots, x_{N-1} \big)$ and $\hat{D}$ was defined in \eqref{eq: notation D-}.
Dynamics~\eqref{eq: X N-1} are perturbed by the term $D_{-,N} x_N(t)$ bounded in Proposition~\ref{prop: steady state xN}. We can then evaluate how term $D_{-,N} x_N(t)$ impacts $\chi(t)$ by building on Proposition~\ref{prop: steady state xN} and reusing $P_N$, $Q_N$, $\alpha_N$, $\beta_N$ and $z_{max}^{P_N}$.
We will first investigate the scenario where $\hat{B}$ is full rank before requiring only controllability of pair $\big( \hat{A} + \hat{D}, \hat{B}\big)$.

\subsection{Fully-actuated networks}\label{subsec: fully-actuated}

In this section we assume that the combined control matrix of the first $N-1$ subsystems $\hat{B}$ is full rank.

\begin{proposition}\label{prop: steady state X}
    If $\hat{A} + \hat{D}$ and $A_N$ are Hurwitz, $\hat{B}$ is full rank, and $C_N \mathcal{W}_N \nsubseteq B_N \mathcal{U}_N$, then for any $\hat{P} \succ 0$ and $\hat{Q} \succ 0$ such that $(\hat{A}+\hat{D})^\top \hat{P} + \hat{P} (\hat{A}+\hat{D}) = -\hat{Q}$ let us define the constants $b_{min}^{\hat{P}} := \underset{\hat{u}\, \in\, \partial \hat{\mathcal{U}}}{\min} \big\{ \|\hat{B} \hat{u}\|_{\hat{P}} \big\}$, $\alpha := \frac{\lambda_{min}^{\hat{Q}}}{2\lambda_{max}^{\hat{P}}}$, $\gamma := \sqrt{ \frac{\max\lambda(D_{-,N}^\top \hat{P} D_{-,N})}{\lambda_{min}^{P_N}} }$ and $\gamma_N := \sqrt{ \frac{ \max \lambda(D_{N,-}^\top P_N D_{N,-})}{\lambda_{min}^{\hat{P}}} }$. \break
    If $\alpha \alpha_N \neq \gamma \gamma_N$, then there exist $h_\pm \in \mathbb{R}$ and $r_\pm \in \mathbb{R}$ such that
     \begin{equation}\label{eq: X bound}
        \|\chi(t)\|_{\hat{P}} \leq \max\left\{0,\ \frac{\gamma z_{max}^{P_N} - \alpha_N b_{min}^{\hat{P}}}{ \alpha \alpha_N - \gamma \gamma_N} + h_+ e^{(r_+ - \alpha_N)t} + h_- e^{(r_- - \alpha_N)t} \right\}.
    \end{equation}
    If $\alpha \alpha_N = \gamma \gamma_N$, there are constants $h_\pm \in \mathbb{R}$ such that
    \begin{equation}\label{eq: X bound degenerate}
        \|\chi(t)\|_{\hat{P}} \leq \max\left\{0,\ \frac{\gamma z_{max}^{P_N} - \alpha_N b_{min}^{\hat{P}}}{\alpha + \alpha_N} t + h_+ + h_- e^{-(\alpha + \alpha_N) t} \right\}.
    \end{equation}
\end{proposition}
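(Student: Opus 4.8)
The plan is to mirror the Lyapunov computation of Proposition~\ref{prop: steady state xN}, but now for the combined state $\chi$, whose dynamics~\eqref{eq: X N-1} we can actively steer because $\hat{B}$ is full rank. Differentiating $\|\chi\|_{\hat{P}}^2 = \chi^\top \hat{P}\chi$ along \eqref{eq: X N-1} and using $(\hat{A}+\hat{D})^\top \hat{P} + \hat{P}(\hat{A}+\hat{D}) = -\hat{Q}$ gives
\[
\tfrac{d}{dt}\|\chi\|_{\hat{P}}^2 = -\chi^\top \hat{Q}\chi + 2\chi^\top \hat{P}\hat{B}\hat{u} + 2\chi^\top \hat{P} D_{-,N} x_N .
\]
For the quadratic term I would reuse the estimate $-\chi^\top \hat{Q}\chi \leq -2\alpha\|\chi\|_{\hat{P}}^2$. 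For the coupling term, the Cauchy--Schwarz inequality in the $\hat{P}$-inner product (Lemma~\ref{lemma: CS}) together with $\|D_{-,N} x_N\|_{\hat{P}} \leq \gamma \|x_N\|_{P_N}$ yields $2\chi^\top \hat{P} D_{-,N} x_N \leq 2\gamma \|x_N\|_{P_N} \|\chi\|_{\hat{P}}$; the latter bound follows from $\|D_{-,N} x_N\|_{\hat{P}}^2 = x_N^\top D_{-,N}^\top \hat{P} D_{-,N} x_N \leq \max\lambda(D_{-,N}^\top \hat{P} D_{-,N})\,\|x_N\|^2$ combined with $\|x_N\|^2 \leq \|x_N\|_{P_N}^2 / \lambda_{min}^{P_N}$.

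The essential new ingredient is the control term. Since $\hat{B}$ is full rank and $\hat{\mathcal{U}}$ is a symmetric box with nonempty interior, $\hat{B}\hat{\mathcal{U}}$ is a centrally symmetric, full-dimensional polytope whose boundary is contained in $\hat{B}(\partial \hat{\mathcal{U}})$; hence it contains the $\hat{P}$-ball of radius $b_{min}^{\hat{P}} = \min_{\hat{u} \in \partial\hat{\mathcal{U}}} \|\hat{B}\hat{u}\|_{\hat{P}}$. Consequently, for every $\chi \neq 0$ there exists an admissible $\hat{u}$ with $\hat{B}\hat{u}$ pointing $\hat{P}$-antiparallel to $\chi$ and of $\hat{P}$-norm $b_{min}^{\hat{P}}$, giving $2\chi^\top \hat{P}\hat{B}\hat{u} \leq -2 b_{min}^{\hat{P}} \|\chi\|_{\hat{P}}$. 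Choosing $u_N$ exactly as in Proposition~\ref{prop: steady state xN} (independently, since $\hat{u}$ and $u_N$ are distinct blocks of $u$) and dividing by $2\|\chi\|_{\hat{P}}$ wherever $\|\chi\|_{\hat{P}} > 0$, I obtain the scalar differential inequality $\tfrac{d}{dt}\|\chi\|_{\hat{P}} \leq -\alpha\|\chi\|_{\hat{P}} - b_{min}^{\hat{P}} + \gamma\|x_N\|_{P_N}$.

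I would then couple this with the companion inequality $\tfrac{d}{dt}\|x_N\|_{P_N} \leq -\alpha_N \|x_N\|_{P_N} + z_{max}^{P_N} + \gamma_N \|\chi\|_{\hat{P}}$, read off from the proof of Proposition~\ref{prop: steady state xN} after bounding $\|D_{N,-}\chi\|_{P_N} \leq \gamma_N \|\chi\|_{\hat{P}}$. Writing $Y := \|\chi\|_{\hat{P}}$ and $y_N := \|x_N\|_{P_N}$, these form a two-dimensional linear system of differential inequalities whose off-diagonal coefficients $\gamma, \gamma_N$ are nonnegative. This cooperativity is precisely what lets a comparison principle (applied to the coupled pair, or equivalently to the single linear Volterra integro-differential inequality obtained by substituting the bound on $y_N$ into the inequality for $Y$) dominate $(Y, y_N)$ by the solution of the associated linear equalities. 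Eliminating $y_N$ turns the majorant $\bar{Y}$ for $Y$ into the constant-coefficient ODE
\[
\ddot{\bar{Y}} + (\alpha + \alpha_N)\dot{\bar{Y}} + (\alpha\alpha_N - \gamma\gamma_N)\bar{Y} = \gamma z_{max}^{P_N} - \alpha_N b_{min}^{\hat{P}},
\]
whose characteristic roots are $r_\pm - \alpha_N$ with $r_\pm$ solving $r^2 - (\alpha_N - \alpha)r - \gamma\gamma_N = 0$. When $\alpha\alpha_N \neq \gamma\gamma_N$ the particular solution is the constant $\tfrac{\gamma z_{max}^{P_N} - \alpha_N b_{min}^{\hat{P}}}{\alpha\alpha_N - \gamma\gamma_N}$ and the homogeneous part contributes $h_\pm e^{(r_\pm - \alpha_N)t}$, giving \eqref{eq: X bound}; when $\alpha\alpha_N = \gamma\gamma_N$ one root vanishes, the forcing becomes resonant, and a linear-in-$t$ particular solution with slope $\tfrac{\gamma z_{max}^{P_N} - \alpha_N b_{min}^{\hat{P}}}{\alpha + \alpha_N}$ appears alongside $h_+ + h_- e^{-(\alpha + \alpha_N)t}$, giving \eqref{eq: X bound degenerate}. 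The outer $\max\{0, \cdot\}$ records that $\|\chi\|_{\hat{P}} \geq 0$ and that the differential inequality only holds while $Y > 0$.

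I expect the two main obstacles to be the following. First, rigorously justifying the descent term $-b_{min}^{\hat{P}}$: one must show that the full-rank $\hat{B}$ genuinely realizes, at every state, an admissible control of guaranteed $\hat{P}$-norm antiparallel to $\chi$, which is where the inradius characterization of $b_{min}^{\hat{P}}$ for $\hat{B}\hat{\mathcal{U}}$ enters. Second, legitimizing the comparison step for the genuinely coupled pair $(Y, y_N)$, where the nonnegativity of $\gamma$ and $\gamma_N$ (cooperativity) is indispensable for the monotonicity that the vector comparison lemma requires, and where the behaviour at $Y = 0$ must be handled carefully to arrive at the clean $\max\{0, \cdot\}$ envelope.
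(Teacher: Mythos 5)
Your proposal is correct, and its skeleton coincides with the paper's proof: the same Lyapunov derivative of $\|\chi\|_{\hat{P}}^2$, the same inradius argument producing an admissible $\hat{u}$ with $\hat{B}\hat{u} = -b_{min}^{\hat{P}}\chi/\|\chi\|_{\hat{P}}$ (the paper simply cites Proposition~3 of \citep{ECC} for exactly the fact you prove via $\partial(\hat{B}\hat{\mathcal{U}}) \subseteq \hat{B}(\partial\hat{\mathcal{U}})$), the same constants $\alpha$, $\gamma$, $\gamma_N$ obtained through Lemmas~\ref{lemma: CS} and \ref{lemma: P norm change}, and ultimately the same terminal ODE: your constant-coefficient equation for $\bar{Y}$ is precisely the paper's equation \eqref{eq: non-homogeneous ODE} under the substitution $s(t) = e^{\alpha_N t}\bar{Y}(t)$, so your roots $r_\pm - \alpha_N$ and both particular solutions (constant in the generic case, linear-in-$t$ in the resonant case $\alpha\alpha_N = \gamma\gamma_N$) agree with Lemma~\ref{lemma: ODE calculations}. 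Where you genuinely differ is the organization of the comparison step. The paper substitutes the integral bound \eqref{eq: x_N bound incomplete} into the differential inequality for $\|\chi\|_{\hat{P}}$, multiplies by $e^{\alpha_N t}$, and invokes the scalar Comparison Lemma of \citep{Khalil} for $\dot{v} \leq f(t,v)$ --- even though its $f$ contains the Volterra term $\gamma\gamma_N \int_0^t s(\tau)\, d\tau$ and therefore depends on the whole history of the trajectory, not just on $s(t)$, so the cited lemma does not apply verbatim; what rescues the step is exactly the nonnegativity of the kernel $\gamma\gamma_N \geq 0$. Your formulation --- keeping the coupled pair $\big(\|\chi\|_{\hat{P}}, \|x_N\|_{P_N}\big)$ of differential inequalities and invoking a quasimonotone (cooperative, Kamke--M\"uller type) vector comparison principle before eliminating $y_N$ --- makes this monotonicity hypothesis explicit rather than implicit, and avoids the exponential change of variables altogether. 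The trade-off is that you must appeal to a vector comparison theorem the paper never states, whereas the paper needs only the scalar lemma it cites together with the already-proven Proposition~\ref{prop: steady state xN}; both routes require the same care at $\|\chi\|_{\hat{P}} = 0$ (Dini derivatives and the $\max\{0,\cdot\}$ envelope), which you correctly flag as the remaining technical work.
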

\begin{proof}
    Since $\hat{A} + \hat{D}$ is Hurwitz, there exist a symmetric $\hat{P} \succ 0$ and $\hat{Q} \succ 0$ such that $(\hat{A} + \hat{D})^\top \hat{P} + \hat{P} (\hat{A} + \hat{D}) = -\hat{Q}$ according to Lyapunov theory \citep{Kalman}. Following the same steps as in the proof of Proposition~\ref{prop: steady state xN} with $\chi^\top \hat{P} \chi = \|\chi\|_{\hat{P}}^2$ and $\chi$ following the dynamics \eqref{eq: X N-1}, we first obtain
    \begin{equation*}
        \frac{d}{dt} \|\chi(t) \|_{\hat{P}}^2 = \chi^\top \big((\hat{A}+\hat{D})^\top \hat{P} + \hat{P} (\hat{A}+\hat{D}) \big) \chi + 2 \chi^\top \hat{P} \hat{B} \hat{u} + 2 \chi^\top \hat{P} D_{-,N} x_N.
    \end{equation*}
    Because $\hat{B}$ is full rank, for all $\chi(t) \neq 0$ there exist $\hat{u}(t) \in \hat{\mathcal{U}}$ such that $\hat{B} \hat{u}(t) = - \frac{\chi(t)}{\|\chi(t)\|_{\hat{P}}} b_{min}^{\hat{P}}$, as shown in the proof of Proposition~3 of \citep{ECC}. Then,
    \begin{equation*}
        \chi(t)^\top \hat{P} \hat{B} \hat{u}(t) =  \frac{-\chi(t)^\top \hat{P} \chi(t)}{\|\chi(t)\|_{\hat{P}}}b_{min}^{\hat{P}} = -\|\chi(t)\|_{\hat{P}} b_{min}^{\hat{P}}.
    \end{equation*}
    Since $\|\cdot\|_{\hat{P}}$ is a norm, it verifies the Cauchy-Schwarz inequality \citep{matrix_computations} $\chi^\top \hat{P} D_{-,N} x_N \leq \|\chi\|_{\hat{P}} \|D_{-,N} x_N\|_{\hat{P}}$. Then,
    \begin{equation*}
        \frac{d}{dt} \|\chi(t) \|_{\hat{P}}^2 \leq -\chi(t)^\top \hat{Q} \chi(t) - 2\|\chi(t)\|_{\hat{P}} b_{min}^{\hat{P}} + 2 \|\chi(t)\|_{\hat{P}} \|D_{-,N} x_N(t)\|_{\hat{P}}.
    \end{equation*}
    Because $\hat{P} \succ 0$ and $\hat{Q} \succ 0$, we obtain $-\chi^\top \hat{Q} \chi \leq -\frac{\lambda_{min}^{\hat{Q}}}{\lambda_{max}^{\hat{P}}} \|\chi\|_{\hat{P}}^2$. 
    Since $\hat{P}^\top = \hat{P} \succ 0$ and $\hat{P}_N \succ 0$, Lemma~\ref{lemma: P norm change} states $\|D_{-,N} x_N\|_{\hat{P}} \leq \gamma \|x_N\|_{P_N}$.
    We now combine these inequalities into
    \begin{align*}
        \frac{d}{dt} \|\chi(t) \|_{\hat{P}}^2 \leq -\frac{\lambda_{min}^{\hat{Q}}}{\lambda_{max}^{\hat{P}}} \|\chi(t)\|_{\hat{P}}^2 + 2 \|\chi(t)\|_{\hat{P}} \left( \gamma \|x_N(t)\|_{P_N} - b_{min}^{\hat{P}} \right).
    \end{align*}
    Following Proposition~\ref{prop: steady state xN}, we also include bound \eqref{eq: x_N bound incomplete} on $\|x_N(t)\|_{P_N}$, which yields
    \begin{equation*}
        \frac{d}{dt} \|\chi(t) \|_{\hat{P}}^2 \leq -\frac{\lambda_{min}^{\hat{Q}}}{\lambda_{max}^{\hat{P}}} \|\chi(t)\|_{\hat{P}}^2 + 2 \|\chi(t)\|_{\hat{P}} \left( \gamma e^{-\alpha_N t} \left( \|x_N(0)\|_{P_N} + \int_0^t \hspace{-1mm} e^{\alpha_N \tau} \beta_N(\tau) d\tau  \right) - b_{min}^{\hat{P}}  \right).
    \end{equation*}
    Define $y(t) := \|\chi(t)\|_{\hat{P}}$ and $y_N(t) := \|x_N(t)\|_{P_N}$. Since $P_N^\top = P_N \succ 0$ and $\hat{P} \succ 0$, Lemma~\ref{lemma: P norm change} states $\|D_{N,-} \chi\|_{P_N} \leq \gamma_N \|\chi\|_{\hat{P}}$, which can be used in $\beta_N$ defined in Proposition~\ref{prop: steady state xN} as
    \begin{equation}\label{eq: beta_N}
        \beta_N(\tau) = z_{max}^{P_N} + \|D_{N,-} \chi(\tau)\|_{P_N} \leq z_{max}^{P_N} + \gamma_N \|\chi(\tau)\|_{\hat{P}} = z_{max}^{P_N} + \gamma_N y(\tau).
    \end{equation}
    We notice that $\frac{d}{dt} \|\chi(t) \|_{\hat{P}}^2 = 2 y(t) \dot y(t)$, which yields
    \begin{equation*}
        2 y(t) \dot y(t) \leq -2\alpha y(t)^2 + 2y(t)\left( \gamma y_N(0) e^{-\alpha_N t} + \gamma  e^{-\alpha_N t} \int_0^t \hspace{-1mm} e^{\alpha_N \tau} \big( z_{max}^{P_N} + \gamma_N y(\tau) \big)\, d\tau - b_{min}^{\hat{P}}\right).
    \end{equation*}
    For $y(t) > 0$ we can divide both sides of the inequality by $2y(t)$ and we calculate the following trivial integral
    \begin{equation*}
        e^{-\alpha_N t} \int_0^t \hspace{-1mm} e^{\alpha_N \tau}\, d\tau = e^{-\alpha_N t} \frac{e^{\alpha_N t} -1}{\alpha_N} = \frac{1 - e^{-\alpha_N t}}{\alpha_N},
    \end{equation*}
    so that the differential inequality becomes
    \begin{align*}
        \dot y(t) &\leq -\alpha y(t) + \gamma y_N(0) e^{-\alpha_N t} + \gamma z_{max}^{P_N} \frac{1 - e^{-\alpha_N t}}{\alpha_N} + \gamma \gamma_N e^{-\alpha_N t}  \int_0^t \hspace{-1mm} e^{\alpha_N \tau}   y(\tau)\, d\tau - b_{min}^{\hat{P}}, \\
        &\leq -\alpha y(t) + \frac{\gamma z_{max}^{P_N}}{\alpha_N} - b_{min}^{\hat{P}} + \gamma \left( y_N(0) - \frac{z_{max}^{P_N}}{\alpha_N} \right) e^{-\alpha_N t} + \gamma \gamma_N e^{-\alpha_N t} \int_0^t \hspace{-1mm} e^{\alpha_N \tau} y(\tau)\, d\tau.
    \end{align*}
    Now multiply both sides by $e^{\alpha_N t} > 0$ and define $v(t) = e^{\alpha_N t} y(t)$. Then, $\dot v(t) = \alpha_N v(t) + e^{\alpha_N t} \dot y(t)$, which leads to
    \begin{equation*}
        e^{\alpha_N t} \dot y(t) = \dot v(t) - \alpha_N v(t) \leq -\alpha v(t) + \left( \frac{\gamma z_{max}^{P_N}}{\alpha_N} - b_{min}^{\hat{P}} \right) e^{\alpha_N t} + \gamma \left( y_N(0) - \frac{z_{max}^{P_N}}{\alpha_N} \right) + \gamma \gamma_N \int_0^t \hspace{-1mm} v(\tau)\, d\tau.
    \end{equation*}
    We introduce the function 
    \begin{equation*}
        f\big(t, s(t) \big) := (\alpha_N - \alpha) s(t) + \left( \frac{\gamma z_{max}^{P_N}}{\alpha_N} - b_{min}^{\hat{P}} \right) e^{\alpha_N t} + \gamma \left( y_N(0) - \frac{z_{max}^{P_N}}{\alpha_N} \right) + \gamma \gamma_N \int_0^t \hspace{-1mm} s(\tau)\, d\tau,
    \end{equation*}
    so that $\dot v(t) \leq f\big( t, v(t) \big)$. Now we search for a solution to the differential equation $\dot s(t) = f\big(t, s(t) \big)$. Differentiating this equation yields
    \begin{align*}
        \ddot s(t) = \frac{d}{dt} f\big(t, s(t) \big) = (\alpha_N - \alpha) \dot s(t) + \left( \frac{\gamma z_{max}^{P_N}}{\alpha_N} - b_{min}^{\hat{P}} \right) \alpha_N e^{\alpha_N t} + 0 + \gamma \gamma_N s(t),
    \end{align*}
    i.e.,
    \begin{equation}\label{eq: non-homogeneous ODE}
        \ddot s(t) + (\alpha - \alpha_N) \dot s(t) - \gamma \gamma_N s(t) - \left( \gamma z_{max}^{P_N} - \alpha_N b_{min}^{\hat{P}} \right) e^{\alpha_N t} = 0.
    \end{equation}
    We need to distinguish two cases when solving this differential equation as detailed in Lemma~\ref{lemma: ODE calculations}. If $\alpha \alpha_N \neq \gamma \gamma_N$, then the solution of \eqref{eq: non-homogeneous ODE} is
    \begin{equation*}
        s(t) = p e^{\alpha_N t} + h_+ e^{r_+ t} + h_- e^{r_- t}
    \end{equation*}
    with
    \begin{equation}\label{eq: r_pm}
        p = \frac{\gamma z_{max}^{P_N} - \alpha_N b_{min}^{\hat{P}}}{ \alpha \alpha_N - \gamma \gamma_N}, \qquad r_{\pm} = \frac{1}{2} \Big( \alpha_N - \alpha \pm \sqrt{(\alpha - \alpha_N)^2 + 4\gamma \gamma_N} \Big),
    \end{equation}
    and $h_\pm \in \mathbb{R}$ are two constants to be determined. Now we can apply the Comparison Lemma of \citep{Khalil} stating that if $\dot s(t) = f\big( t, s(t) \big)$, $f$ is continuous in $t$ and locally Lipschitz in $s$ and $s(0) = v(0)$, then $\dot v(t) \leq f\big( t, v(t) \big)$ implies $v(t) \leq s(t)$ for all $t \geq 0$. Using $\|\chi(t)\|_{\hat{P}} = y(t) = e^{-\alpha_N t} v(t) \leq e^{-\alpha_N t} s(t)$, we finally obtain \eqref{eq: X bound}.
    To determine $h_\pm$, we use the initial conditions $s(0) = v(0) = y(0)$ and $\dot s(0) = f \big( 0, s(0) \big)$, which yield 
    \begin{equation*}
        h_{\pm} = \frac{(\alpha_N - \alpha - r_{\mp})\|\chi(0)\|_{\hat{P}} + \gamma \|x_N(0)\|_{P_N} -b_{min}^{\hat{P}} + (r_{\mp} - \alpha_N)p}{\pm\sqrt{(\alpha - \alpha_N)^2 + 4\gamma \gamma_N}}.
    \end{equation*}
    
    In the case $\alpha \alpha_N = \gamma \gamma_N$, the solution of \eqref{eq: non-homogeneous ODE} is
    \begin{equation*}
        s(t) = p t e^{\alpha_N t} + h_+ e^{\alpha_N t} + h_- e^{-\alpha t}
    \end{equation*}
    with
    \begin{align*}
        p = \frac{\gamma z_{max}^{P_N} - \alpha_N b_{min}^{\hat{P}}}{\alpha + \alpha_N}, \quad \text{and} \quad h_{\pm} = \frac{\frac{1}{2}\big(-\alpha_N - \alpha \pm 3(\alpha - \alpha_N) \big)\|\chi(0)\|_{\hat{P}} \mp \gamma \|x_N(0)\|_{P_N} \pm b_{min}^{\hat{P}} \pm p}{\alpha_N + \alpha},
    \end{align*}
    obtained from the initial conditions $s(0) = y(0)$ and $\dot s(0) = f\big(0, s(0) \big)$. Applying the Comparison Lemma of \citep{Khalil} as above, we obtain $\|\chi(t)\|_{\hat{P}} = y(t) = e^{-\alpha_N t} v(t) \leq e^{-\alpha_N t} s(t)$, which yields \eqref{eq: X bound degenerate}.
\end{proof}

We can now derive conditions for subsystem~\eqref{eq: X N-1} to be resiliently stabilizable despite the perturbations created by $x_N$. These conditions solve Problem~\ref{prob: nonresilient stabilizability} in the fully-actuated network scenario.

\begin{theorem}\label{thm: X stabilizable}
    If $\hat{A} + \hat{D}$ and $A_N$ are Hurwitz, $\hat{B}$ is full rank, and $C_N \mathcal{W}_N \nsubseteq B_N \mathcal{U}_N$, $\gamma \gamma_N \leq \alpha \alpha_N$ and $\gamma z_{max}^{P_N} < \alpha_N b_{min}^{\hat{P}}$, then subsystem~\eqref{eq: X N-1} is resiliently stabilizable in finite time.
\end{theorem}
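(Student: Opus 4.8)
The plan is to obtain Theorem~\ref{thm: X stabilizable} directly from the bounds of Proposition~\ref{prop: steady state X}, which already apply under the present hypotheses using the admissible state feedback $\hat{u}(t)$ satisfying $\hat{B}\hat{u}(t) = -\frac{\chi(t)}{\|\chi(t)\|_{\hat{P}}} b_{min}^{\hat{P}}$ constructed in its proof. The key idea is that the two hypotheses $\gamma \gamma_N \leq \alpha \alpha_N$ and $\gamma z_{max}^{P_N} < \alpha_N b_{min}^{\hat{P}}$ conspire to make the bracketed expression on the right-hand side of \eqref{eq: X bound} (or \eqref{eq: X bound degenerate}) a continuous function of $t$ that starts at $\|\chi(0)\|_{\hat{P}} \geq 0$ and becomes eventually strictly negative. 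Since this upper bound is wrapped inside a $\max\{0, \cdot\}$, the intermediate value theorem then furnishes a finite time $T$ at which the bound equals $0$, forcing $\|\chi(T)\|_{\hat{P}} = 0$, i.e. $\chi(T) = 0$. Because $z_{max}^{P_N}$ encodes the worst-case undesirable input, the same feedback works for every $w_N \in \mathcal{F}(\mathcal{W}_N)$, matching Definition~\ref{def: subsystem stabilizability}.

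I would split the argument along the dichotomy already present in Proposition~\ref{prop: steady state X}. In the nondegenerate case $\gamma \gamma_N < \alpha \alpha_N$, the constant term of \eqref{eq: X bound} is $p = \frac{\gamma z_{max}^{P_N} - \alpha_N b_{min}^{\hat{P}}}{\alpha \alpha_N - \gamma \gamma_N}$, whose numerator is negative by the second hypothesis and whose denominator is positive, so $p < 0$. The crucial computation is to show that the two exponents $r_\pm - \alpha_N$ are strictly negative, so that the exponential terms vanish and the bound converges to $p$. Using \eqref{eq: r_pm}, one has $r_+ - \alpha_N < 0$ iff $\sqrt{(\alpha - \alpha_N)^2 + 4\gamma \gamma_N} < \alpha + \alpha_N$; since both sides are positive, squaring reduces this to exactly $\gamma \gamma_N < \alpha \alpha_N$, which holds by hypothesis, and $r_- - \alpha_N < r_+ - \alpha_N < 0$ follows at once because $\alpha, \alpha_N > 0$. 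Hence the bracketed expression in \eqref{eq: X bound} tends to $p < 0$.

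The degenerate case $\gamma \gamma_N = \alpha \alpha_N$ is handled analogously with \eqref{eq: X bound degenerate}: the coefficient of the linear term is $\frac{\gamma z_{max}^{P_N} - \alpha_N b_{min}^{\hat{P}}}{\alpha + \alpha_N} < 0$ by the second hypothesis, while the remaining term $h_- e^{-(\alpha + \alpha_N)t}$ decays since $\alpha, \alpha_N > 0$; thus the bracketed expression diverges to $-\infty$ and again becomes negative at a finite time.

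The main obstacle I anticipate is not the algebra but the care needed to pass from \emph{the upper bound reaches $0$} to \emph{the state reaches $0$ at a finite time}. The comparison estimate behind \eqref{eq: X bound} only holds while $\chi(t) \neq 0$, since the feedback divides by $\|\chi(t)\|_{\hat{P}}$ and is discontinuous at the origin. I would therefore reason on the interval up to the first hitting time $T := \inf\{t \geq 0 : \chi(t) = 0\}$: on $[0, T)$ the feedback is admissible and the estimate applies, so $\|\chi(t)\|_{\hat{P}} \leq \max\{0, g(t)\}$ there, where $g$ denotes the bracketed expression above. Since $g$ is continuous, $g(0) = \|\chi(0)\|_{\hat{P}} \geq 0$, and $g(t)$ tends to $p < 0$ (resp. to $-\infty$ in the degenerate case), $g$ admits a finite zero, which bounds $T$ from above; at $t = T$ the estimate forces $\|\chi(T)\|_{\hat{P}} = 0$. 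As this holds for every $X_0$ and every $w_N$, subsystem~\eqref{eq: X N-1} is resiliently stabilizable in finite time.
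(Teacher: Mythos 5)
Your proposal is correct and takes essentially the same route as the paper's proof: the same case split on $\gamma \gamma_N < \alpha \alpha_N$ versus $\gamma \gamma_N = \alpha \alpha_N$, and the same key equivalence $r_+ - \alpha_N < 0 \iff \gamma \gamma_N < \alpha \alpha_N$ showing that the bounds \eqref{eq: X bound} and \eqref{eq: X bound degenerate} become negative in finite time. Your extra care in restricting the comparison estimate to the interval before the first hitting time of the origin (where the normalized feedback $\hat{B}\hat{u}(t) = -\frac{\chi(t)}{\|\chi(t)\|_{\hat{P}}} b_{min}^{\hat{P}}$ is undefined) is a welcome rigor point that the paper leaves implicit, but it does not alter the structure of the argument.
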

\begin{proof}
    Let us first consider the case $\gamma \gamma_N = \alpha \alpha_N$. Since $\alpha > 0$ and $\alpha_N > 0$, the exponential term in \eqref{eq: X bound degenerate} goes to zero asymptotically.
    By assumption $\gamma z_{max}^{P_N} - \alpha_N b_{min}^{\hat{P}} < 0$ and $\alpha + \alpha_N > 0$, so the ratio of these factors is negative. Because this ratio is multiplied by $t$ in \eqref{eq: X bound degenerate}, there exists some time $T \geq 0$ such that for all $t \geq T$
    \begin{equation}\label{eq: linear decrease}
        \frac{\gamma z_{max}^{P_N} - \alpha_N b_{min}^{\hat{P}}}{\alpha + \alpha_N} t + h_+ + h_- e^{-(\alpha+\alpha_N)t} \leq 0.
    \end{equation}
    Therefore, according to \eqref{eq: X bound degenerate}, subsystem \eqref{eq: X N-1} is resiliently stabilizable in finite time.
    
    Now consider the case $\gamma \gamma_N < \alpha \alpha_N$. We will show that this inequality is equivalent to $r_+ - \alpha_N < 0$, where $r_+$ is defined in \eqref{eq: r_pm}. Indeed,
    \begin{align*}
        r_+ - \alpha_N < 0  \iff & -\frac{1}{2}(\alpha_N + \alpha) + \frac{1}{2}\sqrt{(\alpha - \alpha_N)^2 + 4 \gamma \gamma_N} < 0 \iff (\alpha - \alpha_N)^2 + 4 \gamma \gamma_N < (\alpha_N + \alpha)^2 \\
        \iff & -2 \alpha \alpha_N + 4\gamma \gamma_N < 2\alpha \alpha_N \iff \gamma \gamma_N < \alpha \alpha_N.
    \end{align*}
    Since $r_- \leq r_+$, we also have $r_- - \alpha_N < 0$, so both exponential terms in \eqref{eq: X bound} converge to zero. Additionally, the fraction term in \eqref{eq: X bound} is negative, so the right-hand side of \eqref{eq: X bound} reaches zero in finite time. Therefore, subsystem~\eqref{eq: X N-1} is resiliently stabilizable in finite time.
\end{proof}

Let us now give some intuition concerning Theorem~\ref{thm: X stabilizable}.
Since $\gamma$ is proportional to the norm of the matrix $D_{-,N}$ which multiplies $x_N(t)$ in \eqref{eq: X N-1}, $\gamma$ quantifies the impact of nonresilient subsystem~\eqref{eq:split system N} of state $x_N(t)$ on the rest of the network~\eqref{eq: X N-1} of state $\chi(t)$.
Reciprocally, $\gamma_N$ quantifies the impact of $\chi(t)$ on $x_N(t)$.
On the other hand, $\alpha = \frac{\lambda_{min}^{\hat{Q}}}{2\lambda_{max}^{\hat{P}}}$ relates to the joint stability of the first $N-1$ subsystems of network~\eqref{eq: X N-1}, while $\alpha_N$ relates to the stability of malfunctioning subsystem~\eqref{eq:split system N}.
Therefore, condition $\gamma\gamma_N \leq \alpha \alpha_N$ follows the intuition that the magnitude of the perturbations arising from the coupling between subsystems~\eqref{eq: X N-1} and \eqref{eq:split system N} must be weaker than the stability of each of these subsystems.

We will now discuss the other stabilizability condition of Theorem~\ref{thm: X stabilizable}, namely $\gamma z_{max}^{P_N} < \alpha_N b_{min}^{\hat{P}}$. Since $z_{max}^{P_N}$ describes the magnitude of the destabilizing inputs in subsystem~\eqref{eq:split system N}, term $\gamma z_{max}^{P_N}$ quantifies the destabilizing influence of $w_N$ on the state of the rest of the network $\chi$. On the other hand, $b_{min}^{\hat{P}}$ relates to the magnitude of the stabilizing inputs in subsystem~\eqref{eq: X N-1} and $\alpha_N$ relates to the Hurwitzness of malfunctioning subsystem~\eqref{eq:split system N}.
Therefore, condition $\gamma z_{max}^{P_N} < \alpha_N b_{min}^{\hat{P}}$ carries the intuition that the stabilizing terms of the network must overcome the destabilizing ones.

Theorem~\ref{thm: X stabilizable} can also be used in an adversarial fashion, by identifying subsystems of the network which are not guaranteed to be resiliently stabilizable by Theorem~\ref{thm: X stabilizable}.

Since we have bounded the state $\chi$ of the first $N-1$ subsystems, we can now derive a closed-form bound on the state $x_N$ of the malfunctioning subsystem $N$. Indeed, the bound on $x_N$ derived in Proposition~\ref{prop: steady state xN} depends on $\chi(t)$ through the term $\beta_N(t)$.

\begin{proposition}\label{prop: full steady state xN}
    If $\hat{A} + \hat{D}$ and $A_N$ are Hurwitz, $\hat{B}$ is full rank, and $C_N \mathcal{W}_N \nsubseteq B_N \mathcal{U}_N$, we can bound the state of subsystem~\eqref{eq:split system N} as
    \begin{equation}\label{eq: x_N bound}
        \|x_N(t)\|_{P_N} \leq \left\{\def\arraystretch{2}\begin{array}{l}
            \max\left\{ 0,\ e^{-\alpha_N t} \|x_N(0)\|_{P_N} + M(t) \right\} \qquad \text{if} \quad \|\chi(t)\|_{\hat{P}} > 0, \\
            \frac{z_{max}^{P_N}}{\alpha_N} + \left( \|x_N(0)\|_{P_N} - \frac{z_{max}^{P_N}}{\alpha_N} \right) e^{-\alpha_N t} \qquad \text{otherwise,}
        \end{array} \right.
    \end{equation}
    with 
    \begin{equation}\label{eq: M}
        M(t) = \hspace{-1mm} \left\{ \hspace{-2mm} \def\arraystretch{2}\begin{array}{l}
             \frac{\alpha z_{max}^{P_N} - \gamma_N b_{min}^{\hat{P}} }{\alpha \alpha_N - \gamma \gamma_N}\big(1 - e^{-\alpha_N t} \big) + e^{-\alpha_N t} \left( \frac{\gamma_N h_+}{r_+}\big( e^{r_+ t} -1 \big) + \frac{\gamma_N h_-}{r_-} \big( e^{r_- t} -1\big)  \right) \quad \text{if} \ \alpha \alpha_N \neq \gamma \gamma_N,\\
             \frac{1-e^{-\alpha_N t}}{\alpha_N}\Big(\gamma_N h_+ + \frac{\alpha_N z_{max}^{P_N} + \gamma_N b_{min}^{\hat{P}}}{\alpha+\alpha_N} \Big) + \frac{\alpha z_{max}^{P_N} - \gamma_N b_{min}^{\hat{P}}}{\alpha + \alpha_N} t + \frac{\gamma_N h_-}{\alpha}\big(1 - e^{-\alpha t}\big)e^{-\alpha_N t} \hspace{2mm} \text{otherwise}.
        \end{array} \right. 
    \end{equation}
\end{proposition}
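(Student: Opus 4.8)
The plan is to feed the bound on $\|\chi\|_{\hat{P}}$ obtained in Proposition~\ref{prop: steady state X} back into the integral inequality of Proposition~\ref{prop: steady state xN}. Recall from the latter that $\|x_N(t)\|_{P_N} \leq e^{-\alpha_N t}\big(\|x_N(0)\|_{P_N} + \int_0^t e^{\alpha_N \tau}\beta_N(\tau)\,d\tau\big)$ with $\beta_N(\tau) = z_{max}^{P_N} + \|D_{N,-}\chi(\tau)\|_{P_N}$. First I would invoke Lemma~\ref{lemma: P norm change} to write $\|D_{N,-}\chi(\tau)\|_{P_N} \leq \gamma_N \|\chi(\tau)\|_{\hat{P}}$, exactly as in \eqref{eq: beta_N}, so that $\beta_N(\tau) \leq z_{max}^{P_N} + \gamma_N \|\chi(\tau)\|_{\hat{P}}$. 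Since $\gamma_N \geq 0$ and $e^{\alpha_N \tau} > 0$, this inequality and the ensuing $\chi$-bound propagate monotonically through the integral, so I may replace $\|\chi(\tau)\|_{\hat{P}}$ by its explicit upper bound from Proposition~\ref{prop: steady state X}.

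When $\|\chi(t)\|_{\hat{P}} > 0$ I would drop the outer $\max\{0,\cdot\}$ and use the explicit $\chi$-bound. In the generic case $\alpha\alpha_N \neq \gamma\gamma_N$ this bound equals $p + h_+ e^{(r_+-\alpha_N)\tau} + h_- e^{(r_--\alpha_N)\tau}$, so $e^{\alpha_N \tau}$ times it is a sum of pure exponentials $p\,e^{\alpha_N\tau}$, $h_+ e^{r_+\tau}$, $h_- e^{r_-\tau}$, each integrating elementarily. In the degenerate case $\alpha\alpha_N = \gamma\gamma_N$ the $\chi$-bound is affine plus a decaying exponential, so the extra integral $\int_0^t \tau e^{\alpha_N\tau}\,d\tau = \frac{t e^{\alpha_N t}}{\alpha_N} - \frac{e^{\alpha_N t}-1}{\alpha_N^2}$ appears and produces the linear-in-$t$ term visible in the second line of \eqref{eq: M}. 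Multiplying the resulting primitives by $\gamma_N e^{-\alpha_N t}$, adding the elementary term $z_{max}^{P_N}\frac{1-e^{-\alpha_N t}}{\alpha_N}$ coming from the $z_{max}^{P_N}$ part of $\beta_N$ together with the free term $e^{-\alpha_N t}\|x_N(0)\|_{P_N}$, and collecting the common factors yields $M(t)$ as stated in \eqref{eq: M}, whence the first branch of \eqref{eq: x_N bound}.

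The remaining ``otherwise'' branch treats the regime where the $\chi$-bound is zero: there $D_{N,-}\chi = 0$ forces $\beta_N(\tau) = z_{max}^{P_N}$, so the integral collapses to $z_{max}^{P_N}\frac{1-e^{-\alpha_N t}}{\alpha_N}$ and the bound reduces to $\frac{z_{max}^{P_N}}{\alpha_N} + \big(\|x_N(0)\|_{P_N} - \frac{z_{max}^{P_N}}{\alpha_N}\big)e^{-\alpha_N t}$, matching the second line of \eqref{eq: x_N bound}.

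The main obstacle is purely algebraic bookkeeping: reducing the accumulated primitives to the compact form \eqref{eq: M}. The decisive simplifications use the definition $p = \frac{\gamma z_{max}^{P_N} - \alpha_N b_{min}^{\hat{P}}}{\alpha\alpha_N - \gamma\gamma_N}$ to verify that $z_{max}^{P_N} + \gamma_N p = \frac{\alpha_N(\alpha z_{max}^{P_N} - \gamma_N b_{min}^{\hat{P}})}{\alpha\alpha_N - \gamma\gamma_N}$, which turns the $z_{max}^{P_N}$-contribution into the leading coefficient of $M(t)$; in the degenerate case one instead substitutes $\gamma\gamma_N = \alpha\alpha_N$ to convert $\gamma_N p/\alpha_N$ into $\frac{\alpha z_{max}^{P_N} - \gamma_N b_{min}^{\hat{P}}}{\alpha+\alpha_N}$, the coefficient of the linear term, and to combine the residual constants into $\frac{\alpha_N z_{max}^{P_N} + \gamma_N b_{min}^{\hat{P}}}{\alpha+\alpha_N}$. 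Some care is also needed in handling the case split itself, since the bound of Proposition~\ref{prop: steady state X} carries a $\max\{0,\cdot\}$; I would argue that the branch active at the evaluation time $t$ determines which expression governs $\beta_N$ through the coupling $D_{N,-}\chi$, with $\beta_N$ collapsing to $z_{max}^{P_N}$ precisely when the coupling term vanishes.
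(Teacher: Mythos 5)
Your proposal is correct and follows essentially the same route as the paper's own proof: substituting the bound \eqref{eq: beta_N} and the $\chi$-bounds of Proposition~\ref{prop: steady state X} into \eqref{eq: x_N bound incomplete}, integrating the resulting exponentials (with $\int_0^t \tau e^{\alpha_N \tau}d\tau$ in the degenerate case), and using the definition of $p$ for the identical algebraic simplifications, including the separate $\|\chi(t)\|_{\hat{P}} = 0$ branch. Your explicit flagging of the $\max\{0,\cdot\}$ subtlety in the case split is, if anything, slightly more careful than the paper's treatment, which handles it the same way you propose.
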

\begin{proof}
    We recall from Proposition~\ref{prop: steady state xN} that $\|x_N(t)\|_{P_N} \leq e^{-\alpha_N t} \left( \|x_N(0)\|_{P_N} + \int_0^t e^{\alpha_N \tau} \beta_N(\tau)\, d\tau \right)$ \eqref{eq: x_N bound incomplete}.
    Following \eqref{eq: beta_N}, we have $\beta_N(t) \leq z_{max}^{P_N} + \gamma_N \|\chi(t)\|_{\hat{P}}$. We can bound $\|\chi(t)\|_{\hat{P}}$ with \eqref{eq: X bound} or \eqref{eq: X bound degenerate} from Proposition~\ref{prop: steady state X} depending on the values of $\alpha \alpha_N$ and $\gamma \gamma_N$.
    
    We start with the case where $\alpha \alpha_N \neq \gamma \gamma_N$ and $\|\chi(t)\|_{\hat{P}} > 0$. Then, bound \eqref{eq: X bound} combined with \eqref{eq: beta_N} yields
    \begin{align*}
        \int_0^t e^{\alpha_N \tau} \beta_N(\tau)\, d\tau &\leq \int_0^t e^{\alpha_N \tau} \Big( z_{max}^{P_N} + \gamma_N p + \gamma_N h_+ e^{(r_+ - \alpha_N)\tau} + \gamma_N h_- e^{(r_- - \alpha_N)\tau} \Big) d\tau \\
        % &= \int_0^t e^{\alpha_N \tau} \big( z_{max}^{P_N} + \gamma_N p\big) d\tau + \gamma_N \int_0^t h_+ e^{r_+ \tau} + h_- e^{r_- \tau} d\tau \\
        &= \frac{e^{\alpha_N t} -1}{\alpha_N}\big( z_{max}^{P_N} + \gamma_N p\big) + \frac{\gamma_N h_+}{r_+}\big( e^{r_+ t} -1 \big) + \frac{\gamma_N h_-}{r_-} \big( e^{r_- t} -1\big).
    \end{align*}
    We replace $p$ in
    \begin{equation*}
        \frac{z_{max}^{P_N} + \gamma_N p}{\alpha_N} = \frac{\alpha z_{max}^{P_N} - \gamma_N b_{min}^{\hat{P}} }{\alpha \alpha_N - \gamma \gamma_N} \qquad \text{with} \qquad p = \frac{\gamma z_{max}^{P_N} - \alpha_N b_{min}^{\hat{P}}}{ \alpha \alpha_N - \gamma \gamma_N}.
    \end{equation*}
    Then, plugging the integral calculated above in \eqref{eq: x_N bound incomplete}, we obtain
    \begin{equation*}
        \|x_N(t)\|_{P_N} \hspace{-1mm} \leq e^{-\alpha_N t} \hspace{-1mm} \left( \hspace{-1mm} \|x_N(0)\|_{P_N} \hspace{-1mm} + \frac{\alpha z_{max}^{P_N} - \gamma_N b_{min}^{\hat{P}} }{\alpha \alpha_N - \gamma \gamma_N}\big(e^{\alpha_N t} -1\big) + \frac{\gamma_N h_+}{r_+}\big( e^{r_+ t} -1 \big) + \frac{\gamma_N h_-}{r_-} \big( e^{r_- t} -1\big)  \hspace{-1mm} \right) \hspace{-1mm},
    \end{equation*}
   which yields \eqref{eq: x_N bound}.
    
    When $\|\chi(t)\|_{\hat{P}} = 0$, bound \eqref{eq: X bound} is modified and yields
    \begin{align*}
        \|x_N(t)\|_{P_N} &\leq e^{-\alpha_N t} \left( \|x_N(0)\|_{P_N} + \int_0^t e^{\alpha_N \tau} z_{max}^{P_N}\, d\tau \right) = \frac{z_{max}^{P_N}}{\alpha_N} + \left( \|x_N(0)\|_{P_N} - \frac{z_{max}^{P_N}}{\alpha_N} \right) e^{-\alpha_N t}.
    \end{align*}

    We can now address the other case where $\alpha \alpha_N = \gamma \gamma_N$ and $\|\chi(t)\|_{\hat{P}} > 0$ is bounded by \eqref{eq: X bound degenerate}, which yields
    \begin{align*}
        \int_0^t e^{\alpha_N \tau} \beta_N(\tau)\, d\tau &\leq \int_0^t e^{\alpha_N \tau} \big( z_{max}^{P_N} + \gamma_N p \tau + \gamma_N h_+ + \gamma_N h_- e^{-(\alpha + \alpha_N)\tau} \big) d\tau \\
        &= \big(z_{max}^{P_N} + \gamma_N h_+\big) \int_0^t e^{\alpha_N \tau} \, d\tau + \gamma_N p \int_0^t \tau e^{\alpha_N \tau} \, d\tau + \gamma_N h_- \int_0^t e^{-\alpha \tau} \, d\tau \\
        &= \big(z_{max}^{P_N} + \gamma_N h_+\big) \frac{e^{\alpha_N t} -1}{\alpha_N} + \frac{\gamma_N p}{\alpha_N} \left( \frac{1 - e^{\alpha_N t}}{\alpha_N} + t e^{\alpha_N t} \right) + \gamma_N h_- \frac{e^{-\alpha t} -1}{-\alpha} \\
        &= \frac{e^{\alpha_N t} -1}{\alpha_N}\Big(z_{max}^{P_N} + \gamma_N h_+ - \frac{\gamma_N p}{\alpha_N} \Big) + \frac{\gamma_N p}{\alpha_N}t e^{\alpha_N t} + \frac{\gamma_N h_-}{\alpha}\big(1 - e^{-\alpha t} \big),
    \end{align*}
    where we calculated $\int_0^t \tau e^{\alpha_N \tau} \, d\tau$ using
    \begin{align*}
        \int_0^t \frac{d}{d\tau} \frac{\tau e^{\alpha_N \tau}}{\alpha_N} d\tau &= \Bigg[ \frac{\tau e^{\alpha_N \tau}}{\alpha_N} \Bigg]_0^t =  \frac{t e^{\alpha_N t}}{\alpha_N} - 0 = \int_0^t \frac{e^{\alpha_N \tau}}{\alpha_N} d\tau + \int_0^t \tau e^{\alpha_N \tau} d\tau \\
        &= \frac{e^{\alpha_N t}-1}{\alpha_N^2} + \int_0^t \tau e^{\alpha_N \tau} d\tau.
    \end{align*}
    
    Then, we replace $p$ with its definition:
    \begin{equation*}
        \frac{\gamma_N p}{\alpha_N} = \frac{\gamma_N \gamma z_{max}^{P_N} - \gamma_N \alpha_N b_{min}^{\hat{P}}}{\alpha_N(\alpha + \alpha_N)} = \frac{\alpha z_{max}^{P_N} - \gamma_N b_{min}^{\hat{P}}}{\alpha + \alpha_N} \qquad \text{thanks to} \qquad p = \frac{\gamma z_{max}^{P_N} - \alpha_N b_{min}^{\hat{P}}}{\alpha + \alpha_N}
    \end{equation*}
    and $\gamma \gamma_N = \alpha \alpha_N$. 
    Multiplying the upper bound on $\int_0^t e^{\alpha_N \tau} \beta_N(\tau)d\tau$ calculated previously by $e^{-\alpha_N t}$ yields
    \begin{align*}
        \int_0^t \hspace{-2mm} e^{\alpha_N (\tau-t)} \beta_N(\tau) d\tau &\leq \frac{1-e^{-\alpha_N t}}{\alpha_N}\Bigg( z_{max}^{P_N} + \gamma_N h_+ - \frac{\alpha z_{max}^{P_N} - \gamma_N b_{min}^{\hat{P}}}{\alpha + \alpha_N} \Bigg) + \frac{\alpha z_{max}^{P_N} - \gamma_N b_{min}^{\hat{P}}}{\alpha + \alpha_N} t \\
        &\quad + \frac{\gamma_N h_-}{\alpha}\big(1 - e^{-\alpha t}\big)e^{-\alpha_N t} \\
        & \hspace{-10mm} \leq \frac{1-e^{-\alpha_N t}}{\alpha_N}\Bigg( \gamma_N h_+ + \frac{\alpha_N z_{max}^{P_N} + \gamma_N b_{min}^{\hat{P}}}{\alpha + \alpha_N} \Bigg) + \frac{\alpha z_{max}^{P_N} - \gamma_N b_{min}^{\hat{P}}}{\alpha + \alpha_N} t + \frac{\gamma_N h_-}{\alpha}\big(1 - e^{-\alpha t}\big)e^{-\alpha_N t}
    \end{align*}
    We finally obtain \eqref{eq: x_N bound} thanks to \eqref{eq: x_N bound incomplete}.
\end{proof}

\begin{remark}\label{rmk: discontinuous switch}
    The switch between bounds \eqref{eq: x_N bound} is likely to be discontinuous. Indeed, the bound in \eqref{eq: x_N bound} valid for $\|\chi(t)\|_{\hat{P}} > 0$ relies on all the overapproximations of bound \eqref{eq: X bound}, whereas the case $\|\chi(t)\|_{\hat{P}} = 0$ is derived without these overapproximations.
\end{remark}

Thanks to Propositions~\ref{prop: steady state X} and \ref{prop: full steady state xN}, we now have a complete description of the network state after a nonresilient loss of control authority. These two results relied on the full rank assumption of $\hat{B}$, the control matrix of the unaffected part of the network. Because this assumption might be too restrictive, we will now employ a different approach to bound the states of an underactuated network.

\subsection{Underactuated networks}\label{subsec: nonresilient underactuated}

Let us now assume that $\hat{B}$ is not full rank, which prevents the use of Proposition~\ref{prop: steady state X}. Instead of the stabilizing control input of constant magnitude $\hat{B}\hat{u}(t) = - \frac{\chi(t)}{\|\chi(t)\|_{\hat{P}}} b_{min}^{\hat{P}}$ used in Proposition~\ref{prop: steady state X}, we will employ a linear control to bound network state $\chi$.

If pair $\big(\hat{A} + \hat{D}, \hat{B}\big)$ is controllable, there exist a matrix $K$ such that $\hat{A} + \hat{D} - \hat{B}K$ is Hurwitz. Then, for any $\hat{P} \succ 0$ and $\hat{Q} \succ 0$ such that $(\hat{A}+\hat{D}- \hat{B}K)^\top \hat{P} + \hat{P} (\hat{A}+\hat{D}- \hat{B}K) = -\hat{Q}$, we can define the same constants as in Proposition~\ref{prop: steady state X}, namely $\alpha = \frac{\lambda_{min}^{\hat{Q}}}{2\lambda_{max}^{\hat{P}}}$, $\gamma = \sqrt{ \frac{\lambda_{max}^{D_{-,N}^\top \hat{P} D_{-,N}}}{\lambda_{min}^{P_N}} }$ and $\gamma_N = \sqrt{ \frac{\lambda_{max}^{D_{N,-}^\top P_N D_{N,-}}}{\lambda_{min}^{\hat{P}}} }$.

\begin{proposition}\label{prop: steady state X underactuated}
    If pair $\big(\hat{A} + \hat{D}, \hat{B}\big)$ is controllable, $A_N$ is Hurwitz, $C_N \mathcal{W}_N \nsubseteq B_N \mathcal{U}_N$, $\gamma \gamma_N < \alpha \alpha_N$ and $\underset{t\, \geq\, 0}{\sup}\ b(t) \leq \frac{\sqrt{\lambda_{min}^{\hat{P}}}}{\|K\|}$, then system~\eqref{eq: X N-1} is resiliently bounded: $\|\chi(t)\|_{\hat{P}} \leq \max\left\{0,\ b(t) \right\}$ for all $t \geq 0$, with
    \begin{equation}\label{eq: b(t)}
         b(t) := p + h_+ e^{(r_+ - \alpha_N)t} + h_- e^{(r_- - \alpha_N)t}, 
    \end{equation}
    \begin{equation*}
        h_{\pm} = \frac{(\alpha_N - \alpha - r_{\mp})\|\chi(0)\|_{\hat{P}} + \gamma \|x_N(0)\|_{P_N} + (r_{\mp} - \alpha_N)p}{\pm\sqrt{(\alpha_N - \alpha)^2 + 4\gamma \gamma_N}} \quad \text{and} \quad p = \frac{\gamma z_{max}^{P_N}}{ \alpha \alpha_N - \gamma \gamma_N}.
    \end{equation*}
\end{proposition}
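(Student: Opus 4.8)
The plan is to apply the linear feedback $\hat{u}(t) = -K\chi(t)$, where $K$ is a gain rendering $\hat{A}+\hat{D}-\hat{B}K$ Hurwitz, so that system~\eqref{eq: X N-1} becomes the closed loop $\dot\chi = (\hat{A}+\hat{D}-\hat{B}K)\chi + D_{-,N}x_N$. I would then mimic the Lyapunov analysis of Proposition~\ref{prop: steady state X}: differentiating $\|\chi\|_{\hat{P}}^2$ and substituting the Lyapunov equation eliminates the drift, leaving $\frac{d}{dt}\|\chi\|_{\hat{P}}^2 = -\chi^\top\hat{Q}\chi + 2\chi^\top\hat{P}D_{-,N}x_N$. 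Bounding $-\chi^\top\hat{Q}\chi \leq -2\alpha\|\chi\|_{\hat{P}}^2$ and applying Cauchy--Schwarz together with Lemma~\ref{lemma: P norm change} gives, for $y := \|\chi\|_{\hat{P}} > 0$, the differential inequality $\dot{y} \leq -\alpha y + \gamma\|x_N\|_{P_N}$. The crucial difference from Proposition~\ref{prop: steady state X} is the absence of the $-b_{min}^{\hat{P}}$ term, since here the stabilizing effect comes from the feedback rather than from a constant-magnitude input; this is exactly why the particular solution loses its $\alpha_N b_{min}^{\hat{P}}$ contribution and reduces to $p = \gamma z_{max}^{P_N}/(\alpha\alpha_N - \gamma\gamma_N)$, which is positive thanks to $\gamma\gamma_N < \alpha\alpha_N$.

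From here the calculation is identical in structure to Proposition~\ref{prop: steady state X}. I would insert bound~\eqref{eq: x_N bound incomplete} on $\|x_N\|_{P_N}$, close the coupling via $\beta_N \leq z_{max}^{P_N} + \gamma_N y$ from~\eqref{eq: beta_N}, multiply through by $e^{\alpha_N t}$, and differentiate the resulting integro-differential inequality to obtain the same second-order linear ODE as~\eqref{eq: non-homogeneous ODE}, but with forcing $\gamma z_{max}^{P_N}e^{\alpha_N t}$ in place of $(\gamma z_{max}^{P_N} - \alpha_N b_{min}^{\hat{P}})e^{\alpha_N t}$. Solving this ODE, matching the initial conditions $s(0)=\|\chi(0)\|_{\hat{P}}$ and $\dot s(0)=f(0,s(0))$ to fix $h_\pm$, and invoking the Comparison Lemma of~\citep{Khalil} yields $\|\chi(t)\|_{\hat{P}} \leq b(t)$ with $b$, $r_\pm$ and $h_\pm$ as stated; the strict inequality $\gamma\gamma_N < \alpha\alpha_N$ guarantees $r_\pm - \alpha_N < 0$ by the same algebra used in Theorem~\ref{thm: X stabilizable}, so both exponentials in~\eqref{eq: b(t)} decay and $b(t) \to p$.

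The hard part is that this entire derivation presupposes that $\hat{u}(t) = -K\chi(t)$ is admissible, i.e.\ $\hat{u}(t) \in \hat{\mathcal{U}} = [-1,1]^{m_\Sigma - m_N}$, whereas admissibility is itself controlled by the very bound we seek. I would break this circularity with a continuation argument. Using $\|\hat{u}\|_\infty = \|K\chi\|_\infty \leq \|K\chi\|_2 \leq \|K\|\,\|\chi\|_2$ and $\|\chi\|_2 \leq \|\chi\|_{\hat{P}}/\sqrt{\lambda_{min}^{\hat{P}}}$, admissibility holds whenever $\|\chi(t)\|_{\hat{P}} \leq \sqrt{\lambda_{min}^{\hat{P}}}/\|K\|$. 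Since the construction forces $b(0) = \|\chi(0)\|_{\hat{P}}$, the hypothesis $\sup_{t\geq 0}b(t) \leq \sqrt{\lambda_{min}^{\hat{P}}}/\|K\|$ makes the control admissible at $t=0$. On any maximal interval where the control stays admissible the differential inequality is valid, so $\|\chi(t)\|_{\hat{P}} \leq b(t) \leq \sup_{s\geq 0}b(s) \leq \sqrt{\lambda_{min}^{\hat{P}}}/\|K\|$, which in turn keeps the control admissible; by continuity this interval cannot terminate at a finite time, so both admissibility and the bound persist for all $t \geq 0$.

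Finally, because $b(t)$ converges to the constant $p$ and is therefore bounded on $[0,\infty)$, the estimate $\|\chi(t)\|_{\hat{P}} \leq \max\{0,\, b(t)\}$ shows that $\|\chi(t)\|$ is uniformly bounded, which is precisely resilient boundedness in the sense of Definition~\ref{def: resilient boundedness}.
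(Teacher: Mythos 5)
Your proof is correct and, in its core, follows exactly the paper's route: the same feedback $\hat{u}(t) = -K\chi(t)$ with $K$ a Hurwitz-stabilizing gain, the same Lyapunov analysis inherited from Proposition~\ref{prop: steady state X} with the $b_{min}^{\hat{P}}$ term dropped, the same insertion of bound~\eqref{eq: x_N bound incomplete} and of $\beta_N \leq z_{max}^{P_N} + \gamma_N y$, the same second-order ODE with forcing $\gamma z_{max}^{P_N} e^{\alpha_N t}$, the Comparison Lemma, and the same constants $p$, $r_\pm$, $h_\pm$. The only genuine divergence is the treatment of admissibility. You view the whole derivation as conditional on $\hat{u}(t) \in \hat{\mathcal{U}}$ and resolve the resulting ``circularity'' by a continuation argument; the paper instead observes that the closed-loop system $\dot\chi = (\hat{A}+\hat{D}-\hat{B}K)\chi + D_{-,N}x_N$ is a well-defined ODE whose solution and Lyapunov estimate never use the constraint $-K\chi \in \hat{\mathcal{U}}$, derives the bound unconditionally, and then verifies a posteriori that $\|\hat{u}(t)\| \leq \|K\|\,\|\chi(t)\|_{\hat{P}}/\sqrt{\lambda_{min}^{\hat{P}}} \leq \|K\| \sup_{s \geq 0} b(s)/\sqrt{\lambda_{min}^{\hat{P}}} \leq 1$. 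So the circularity you worried about is not actually present, and the a posteriori check is shorter. Your continuation argument does reach the same conclusion, but be aware that, as written, it carries the standard open--closed subtlety: on the maximal admissible interval you only obtain the non-strict inequality $\|K\chi(t)\| \leq 1$, which by itself does not allow you to extend the interval past its right endpoint (you would need strictness, or a separate argument that the trajectory can be continued). The cleanest fix is precisely the paper's viewpoint: the closed-loop trajectory, and hence the bound $\|\chi(t)\|_{\hat{P}} \leq \max\{0, b(t)\}$, exists for all $t \geq 0$ independently of admissibility, which makes the continuation apparatus unnecessary.
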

\begin{proof}
    Since pair $\big(\hat{A} + \hat{D}, \hat{B}\big)$ is controllable, there exists a matrix $K$ such that $\hat{A} + \hat{D} - \hat{B}K$ is Hurwitz \citep{Khalil}. Then, there exists $\hat{P} \succ 0$ and $\hat{Q} \succ 0$ such that $(\hat{A} + \hat{D}- \hat{B}K)^\top \hat{P} + \hat{P} (\hat{A} + \hat{D}- \hat{B}K) = -\hat{Q}$ according to Lyapunov theory \citep{Kalman}. We will follow the same steps as in the proof of Proposition~\ref{prop: steady state X} with $\chi^\top \hat{P} \chi = \|\chi\|_{\hat{P}}^2$ and $\chi$ following the dynamics \eqref{eq: X N-1} with $\hat{u}(t) = -K\chi(t)$. Once we obtain bounds on $\chi(t)$ we will verify under which conditions is $\hat{u}$ admissible. We first obtain
    \begin{equation*}
        \frac{d}{dt} \|\chi(t)\|_{\hat{P}}^2 = \chi(t)^\top \big((\hat{A}+\hat{D}- \hat{B}K)^\top \hat{P} + \hat{P} (\hat{A}+\hat{D}- \hat{B}K) \big) \chi(t) + 2 \chi(t)^\top \hat{P} D_{-,N} x_N(t).
    \end{equation*}
    We then proceed as in Proposition~\ref{prop: steady state X}, but without the term $b_{min}^{\hat{P}}$. Since $\|\cdot\|_{\hat{P}}$ is a norm, it verifies the Cauchy-Schwarz inequality \citep{matrix_computations} $\chi^\top \hat{P} D_{-,N} x_N \leq \|\chi(t)\|_{\hat{P}} \|D_{-,N} x_N(t)\|_{\hat{P}}$. Then,
    \begin{equation*}
        \frac{d}{dt} \|\chi(t)\|_{\hat{P}}^2 \leq -\chi(t)^\top \hat{Q} \chi(t) + 2 \|\chi(t)\|_{\hat{P}} \|D_{-,N} x_N(t)\|_{\hat{P}}.
    \end{equation*}
    Because $\hat{P} \succ 0$ and $\hat{Q} \succ 0$, we obtain $-\chi^\top \hat{Q} \chi \leq -\frac{\lambda_{min}^{\hat{Q}}}{\lambda_{max}^{\hat{P}}} \|\chi\|_{\hat{P}}^2$. Since $\hat{P}^\top = \hat{P} \succ 0$ and $\hat{P}_N \succ 0$, Lemma~\ref{lemma: P norm change} states $\|D_{-,N} x_N\|_{\hat{P}} \leq \gamma \|x_N\|_{P_N}$.
    We now combine these inequalities into
    \begin{align*}
        \frac{d}{dt} \|\chi(t)\|_{\hat{P}}^2 \leq -\frac{\lambda_{min}^{\hat{Q}}}{\lambda_{max}^{\hat{P}}} \|\chi(t)\|_{\hat{P}}^2 + 2 \|\chi(t)\|_{\hat{P}} \gamma \|x_N(t)\|_{P_N}.
    \end{align*}
    Following Proposition~\ref{prop: steady state xN}, we also include bound \eqref{eq: x_N bound incomplete} on $\|x_N(t)\|_{P_N}$, which yields
    \begin{equation*}
        \frac{d}{dt} \|\chi(t) \|_{\hat{P}}^2 \leq -\frac{\lambda_{min}^{\hat{Q}}}{\lambda_{max}^{\hat{P}}} \|\chi(t)\|_{\hat{P}}^2 + 2 \|\chi(t)\|_{\hat{P}} \gamma e^{-\alpha_N t} \left( \|x_N(0)\|_{P_N} + \int_0^t \hspace{-1mm} e^{\alpha_N \tau} \big(z_{max}^{P_N} + \|D_{N,-} \chi(\tau)\|_{P_N}\big) d\tau  \right).
    \end{equation*}
    Since $P_N^\top = P_N \succ 0$ and $\hat{P} \succ 0$, Lemma~\ref{lemma: P norm change} states $\|D_{N,-} \chi\|_{P_N} \leq \gamma_N \|\chi\|_{\hat{P}}$. Define $y(t) := \|\chi(t)\|_{\hat{P}}$ and $y_N(t) := \|x_N(t)\|_{P_N}$. We notice that $\frac{d}{dt} \|\chi(t) \|_{\hat{P}}^2 = 2 y(t) \dot y(t)$, which yields
    \begin{equation*}
        2 y(t) \dot y(t) \leq -2\alpha y(t)^2 + 2y(t)\left( \gamma y_N(0) e^{-\alpha_N t} + \gamma  e^{-\alpha_N t} \int_0^t \hspace{-1mm} e^{\alpha_N \tau} \big( z_{max}^{P_N} + \gamma_N y(\tau) \big)\, d\tau \right).
    \end{equation*}
    For $y(t) > 0$ we can divide both sides of the inequality by $2y(t)$ so that the differential inequality becomes
    \begin{align*}
        \dot y(t) &\leq -\alpha y(t) + \gamma y_N(0) e^{-\alpha_N t} + \gamma z_{max}^{P_N} \frac{1 - e^{-\alpha_N t}}{\alpha_N} + \gamma \gamma_N e^{-\alpha_N t}  \int_0^t \hspace{-1mm} e^{\alpha_N \tau}   y(\tau)\, d\tau, \\
        &\leq -\alpha y(t) + \frac{\gamma z_{max}^{P_N}}{\alpha_N} + \gamma \left( y_N(0) - \frac{z_{max}^{P_N}}{\alpha_N} \right) e^{-\alpha_N t} + \gamma \gamma_N e^{-\alpha_N t} \int_0^t \hspace{-1mm} e^{\alpha_N \tau} y(\tau)\, d\tau.
    \end{align*}
    Now multiply both sides by $e^{\alpha_N t} > 0$ and define $v(t) = e^{\alpha_N t} y(t)$. Then, $\dot v(t) = \alpha_N v(t) + e^{\alpha_N t} \dot y(t)$, which leads to
    \begin{equation*}
        e^{\alpha_N t} \dot y(t) = \dot v(t) - \alpha_N v(t) \leq -\alpha v(t) + \frac{\gamma z_{max}^{P_N}}{\alpha_N} e^{\alpha_N t} + \gamma \left( y_N(0) - \frac{z_{max}^{P_N}}{\alpha_N} \right) + \gamma \gamma_N \int_0^t \hspace{-1mm} v(\tau)\, d\tau.
    \end{equation*}
    We introduce the function 
    \begin{equation*}
        g\big(t, s(t) \big) := (\alpha_N - \alpha) s(t) + \frac{\gamma z_{max}^{P_N}}{\alpha_N} e^{\alpha_N t} + \gamma \left( y_N(0) - \frac{z_{max}^{P_N}}{\alpha_N} \right) + \gamma \gamma_N \int_0^t \hspace{-1mm} s(\tau)\, d\tau,
    \end{equation*}
    so that $\dot v(t) \leq g\big( t, v(t) \big)$. Now we search for a solution to the differential equation $\dot s(t) = g\big(t, s(t) \big)$. Differentiating this equation yields
    \begin{equation*}
        \ddot s(t) = \frac{d}{dt} g\big(t, s(t) \big) = (\alpha_N - \alpha) \dot s(t) + \gamma z_{max}^{P_N} e^{\alpha_N t} + 0 + \gamma \gamma_N s(t).
    \end{equation*}
    If $\alpha \alpha_N \neq \gamma \gamma_N$, then the solution to this differential equation is $s(t) = pe^{\alpha_N t} + h_+ e^{r_+ t} + h_- e^{r_- t}$, with $r_{\pm} = \frac{1}{2} \big( \alpha_N - \alpha \pm \sqrt{(\alpha - \alpha_N)^2 + 4\gamma \gamma_N} \big)$, $p = \frac{\gamma z_{max}^{P_N}}{ \alpha \alpha_N - \gamma \gamma_N} > 0$ and two constants $h_\pm \in \mathbb{R}$. Using the Comparison Lemma of \citep{Khalil} we obtain
    \begin{equation}\label{eq: X bound not full rank}
        \|\chi(t)\|_{\hat{P}} \leq \max\left\{ 0,\ p + h_+ e^{(r_+ - \alpha_N)t} + h_- e^{(r_- - \alpha_N) t} \right\},
    \end{equation}
    for all $t \geq 0$ and the initial conditions are $s(0) = y(0)$ and $\dot s(0) = g(0, s(0))$, i.e.,
    \begin{equation*}
        p + h_+ + h_- = \|\chi(0)\|_{\hat{P}} \quad \text{and} \quad \alpha_N p + h_+ r_+ + h_- r_- = (\alpha_N - \alpha) \|\chi(0)\|_{\hat{P}} + \gamma \|x_N(0)\|_{P_N}.
    \end{equation*}
    These initial conditions can be solved to determine $h_\pm$.
    
    Bound \eqref{eq: X bound not full rank} is only valid when $\hat{u}(t) = -K\chi(t) \in \mathcal{U} = [-1, 1]^m$. For this control law to be admissible, we then need $\|\chi(t)\|_{\hat{P}} \leq \frac{\sqrt{\lambda_{min}^{\hat{P}}}}{\|K\|}$ at all times $t \geq 0$, since $\| \hat{u}(t)\| \leq \|K\| \|\chi(t)\| \leq \|K\| \frac{\|\chi(t)\|_{\hat{P}}}{\sqrt{\lambda_{min}^{\hat{P}}}}$.
    Since $\|\chi(t)\|_{\hat{P}} \leq \underset{t\, \geq\, 0}{\sup}\ b(t) \leq \frac{\sqrt{\lambda_{min}^{\hat{P}}}}{\|K\|}$, $\hat{u}$ is admissible.
\end{proof}

\begin{remark}
    The condition $\gamma \gamma_N < \alpha \alpha_N$ in Proposition~\ref{prop: steady state X underactuated} is necessary for the boundedness of $\chi(t)$, which in turn guarantees the admissibility of the control law $\hat{u}(t) = -K\chi(t)$.
    
    Indeed, if $\gamma \gamma_N > \alpha \alpha_N$, then $r_+ - \alpha_N > 0$, which leads to the divergence of the corresponding exponential term in \eqref{eq: X bound not full rank} and hence $\chi(t)$ might not be bounded.
    
    On the other hand, if $\alpha \alpha_N = \gamma \gamma_N$, the same process as in Proposition~\ref{prop: steady state X underactuated} leads to $\|\chi(t)\|_{\hat{P}} \leq \max\left\{0,\ \frac{\gamma z_{max}^{P_N}}{\alpha + \alpha_N} t + h_+ + h_- e^{-(\alpha + \alpha_N) t}\right\}$ for all $t \geq 0$, for some constants $h_\pm \in \mathbb{R}$. The term linear in $t$ grows unbounded since $\gamma z_{max}^{P_N} > 0$ and $\alpha + \alpha_N > 0$. In this case, $\chi(t)$ might not be bounded.
\end{remark}

Note that the perturbation from subsystem~\eqref{eq:split system N} in norm bounds \eqref{eq: X bound not full rank} is modeled by term $z_{max}^{P_N} > 0$ of constant magnitude. Hence, this perturbation cannot be overcome by the linear control $\hat{u}(t) = -K\chi(t)$ when $\chi$ is near $0$. That is why Proposition~\ref{prop: steady state X underactuated} only guarantees the boundedness of $\chi$ and not its resilient stabilizability.

\begin{remark}\label{rmk: X0 = 0}
    If the network is initially at rest when the loss of control authority occurs, i.e., if $\chi(0) = 0$ and $x_N(0) = 0$, then $h_+ < 0$ and $h_- > 0$, so that $m = p + h_- = -h_+ = \frac{-(r_- + \alpha_N)\gamma z_{max}^{P_N}}{(\alpha \alpha_N - \gamma \gamma_N)\sqrt{(\alpha_N - \alpha)^2 + 4\gamma \gamma_N}}$.
\end{remark}

Using bound \eqref{eq: X bound not full rank} in \eqref{eq: x_N bound incomplete}, we can now derive a closed-form bound on $x_N$ as we did in Proposition~\ref{prop: full steady state xN} when $\hat{B}$ was full rank.

\begin{proposition}\label{prop: full steady state xN underactuated}
    If pair $\big(\hat{A} + \hat{D}, \hat{B}\big)$ is controllable, $A_N$ is Hurwitz, $C_N \mathcal{W}_N \nsubseteq B_N \mathcal{U}_N$, $\gamma \gamma_N < \alpha \alpha_N$ and $m \leq \frac{\sqrt{\lambda_{min}^{\hat{P}}}}{\|K\|}$, then
    \begin{equation}\label{eq: x_N bound underactuated}
        \|x_N(t)\|_{P_N} \leq \left\{\def\arraystretch{2}\begin{array}{l}
            \max\left\{ 0,\ e^{-\alpha_N t} \|x_N(0)\|_{P_N} + M(t) \right\} \qquad \text{if} \quad \|\chi(t)\|_{\hat{P}} > 0, \\
            \frac{z_{max}^{P_N}}{\alpha_N} + \left( \|x_N(0)\|_{P_N} - \frac{z_{max}^{P_N}}{\alpha_N} \right) e^{-\alpha_N t} \qquad \text{otherwise,}
        \end{array} \right.
    \end{equation}
    with 
    \begin{equation}\label{eq: M underactuated}
        M(t) = \frac{\alpha z_{max}^{P_N} \big(1 - e^{-\alpha_N t} \big) }{\alpha \alpha_N - \gamma \gamma_N} + e^{-\alpha_N t} \left( \frac{\gamma_N h_+}{r_+}\big( e^{r_+ t} -1 \big) + \frac{\gamma_N h_-}{r_-} \big( e^{r_- t} -1\big)  \right). 
    \end{equation}
\end{proposition}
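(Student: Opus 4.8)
The plan is to mirror the proof of Proposition~\ref{prop: full steady state xN} almost verbatim, substituting the underactuated bound \eqref{eq: X bound not full rank} on $\|\chi(t)\|_{\hat{P}}$ in place of the fully-actuated bound \eqref{eq: X bound}. The situation is in fact simpler, since the hypotheses already impose $\gamma\gamma_N < \alpha\alpha_N$, so only the nondegenerate regime occurs and there is no $\alpha\alpha_N = \gamma\gamma_N$ branch to treat. I would start from the incomplete bound \eqref{eq: x_N bound incomplete} of Proposition~\ref{prop: steady state xN}, namely $\|x_N(t)\|_{P_N} \leq e^{-\alpha_N t}\big( \|x_N(0)\|_{P_N} + \int_0^t e^{\alpha_N\tau}\beta_N(\tau)\,d\tau\big)$, and bound $\beta_N$ using \eqref{eq: beta_N}.

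On the region where $\|\chi(t)\|_{\hat{P}} > 0$, I would insert $b(\tau) = p + h_+ e^{(r_+-\alpha_N)\tau} + h_- e^{(r_--\alpha_N)\tau}$ from \eqref{eq: b(t)} into \eqref{eq: beta_N}, so that $\beta_N(\tau) \leq z_{max}^{P_N} + \gamma_N p + \gamma_N h_+ e^{(r_+-\alpha_N)\tau} + \gamma_N h_- e^{(r_--\alpha_N)\tau}$. Multiplying by $e^{\alpha_N\tau}$ and integrating term by term produces only elementary exponential integrals, yielding $\frac{e^{\alpha_N t}-1}{\alpha_N}\big(z_{max}^{P_N}+\gamma_N p\big) + \frac{\gamma_N h_+}{r_+}\big(e^{r_+ t}-1\big) + \frac{\gamma_N h_-}{r_-}\big(e^{r_- t}-1\big)$. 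Multiplying through by $e^{-\alpha_N t}$ and combining with the $e^{-\alpha_N t}\|x_N(0)\|_{P_N}$ term then gives exactly the first branch of \eqref{eq: x_N bound underactuated} together with \eqref{eq: M underactuated}.

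The one genuinely algebraic step is the identity $\frac{z_{max}^{P_N}+\gamma_N p}{\alpha_N} = \frac{\alpha z_{max}^{P_N}}{\alpha\alpha_N - \gamma\gamma_N}$, which is where the underactuated value $p = \frac{\gamma z_{max}^{P_N}}{\alpha\alpha_N - \gamma\gamma_N}$ (free of the $b_{min}^{\hat{P}}$ correction present in the fully-actuated case of Proposition~\ref{prop: full steady state xN}) is used: substituting this $p$ and using $\alpha\alpha_N - \gamma\gamma_N + \gamma\gamma_N = \alpha\alpha_N$ collapses the sum to $\frac{\alpha z_{max}^{P_N}}{\alpha\alpha_N-\gamma\gamma_N}$, matching the leading coefficient of $M(t)$ in \eqref{eq: M underactuated}.

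For the complementary case $\|\chi(t)\|_{\hat{P}} = 0$, the perturbation term drops and \eqref{eq: beta_N} reduces to $\beta_N(\tau) \leq z_{max}^{P_N}$; the single elementary integral $\int_0^t e^{\alpha_N\tau} z_{max}^{P_N}\,d\tau$ then yields the steady-state branch $\frac{z_{max}^{P_N}}{\alpha_N} + \big(\|x_N(0)\|_{P_N} - \frac{z_{max}^{P_N}}{\alpha_N}\big)e^{-\alpha_N t}$ of \eqref{eq: x_N bound underactuated}. Throughout, the validity of \eqref{eq: X bound not full rank} is inherited directly from Proposition~\ref{prop: steady state X underactuated}, whose admissibility requirement is exactly the hypothesis $m \leq \sqrt{\lambda_{min}^{\hat{P}}}/\|K\|$. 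The main obstacle is purely bookkeeping of the exponential integrals and the single fraction simplification above; there is no new analytic difficulty, since the comparison-lemma argument that produced the $\chi$-bound has already done the heavy lifting.
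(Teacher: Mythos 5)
Your proposal is correct and follows the paper's proof essentially verbatim: both start from \eqref{eq: x_N bound incomplete}, substitute the underactuated bound \eqref{eq: X bound not full rank} into \eqref{eq: beta_N}, integrate the exponentials term by term, collapse $\frac{z_{max}^{P_N}+\gamma_N p}{\alpha_N}$ to $\frac{\alpha z_{max}^{P_N}}{\alpha\alpha_N - \gamma\gamma_N}$ using $p = \frac{\gamma z_{max}^{P_N}}{\alpha\alpha_N - \gamma\gamma_N}$, and treat the $\|\chi(t)\|_{\hat{P}} = 0$ case separately with $\beta_N \leq z_{max}^{P_N}$. Your observation that the hypothesis $\gamma\gamma_N < \alpha\alpha_N$ eliminates the degenerate branch is also consistent with the paper, which likewise treats only the nondegenerate case here.
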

\begin{proof}
    We recall from Proposition~\ref{prop: steady state xN} that $\|x_N(t)\|_{P_N} \leq e^{-\alpha_N t} \left( \|x_N(0)\|_{P_N} + \int_0^t e^{\alpha_N \tau} \beta_N(\tau)\, d\tau \right)$ \eqref{eq: x_N bound incomplete}.
    Following \eqref{eq: beta_N}, we have $\beta_N(t) \leq z_{max}^{P_N} + \gamma_N \|\chi(t)\|_{\hat{P}}$ where we bound $\|\chi(t)\|_{\hat{P}}$ with \eqref{eq: X bound not full rank}
    \begin{align*}
        \int_0^t e^{\alpha_N \tau} \beta_N(\tau)\, d\tau &\leq \int_0^t e^{\alpha_N \tau} \Big( z_{max}^{P_N} + \gamma_N p + \gamma_N h_+ e^{(r_+ - \alpha_N)\tau} + \gamma_N h_- e^{(r_- - \alpha_N)\tau} \Big) d\tau \\
        &= \frac{e^{\alpha_N t} -1}{\alpha_N}\big( z_{max}^{P_N} + \gamma_N p\big) + \frac{\gamma_N h_+}{r_+}\big( e^{r_+ t} -1 \big) + \frac{\gamma_N h_-}{r_-} \big( e^{r_- t} -1\big).
    \end{align*}
    We replace $p$ in
    \begin{equation*}
        \frac{z_{max}^{P_N} + \gamma_N p}{\alpha_N} = \frac{z_{max}^{P_N}(\alpha \alpha_N - \gamma \gamma_N) + \gamma_N \gamma z_{max}^{P_N} }{\alpha_N(\alpha \alpha_N - \gamma \gamma_N)} = \frac{\alpha z_{max}^{P_N} }{\alpha \alpha_N - \gamma \gamma_N} \quad \text{with} \quad p = \frac{\gamma z_{max}^{P_N}}{ \alpha \alpha_N - \gamma \gamma_N}.
    \end{equation*}
    Then, plugging the integral calculated above in \eqref{eq: x_N bound incomplete}, we obtain
    \begin{equation*}
        \|x_N(t)\|_{P_N} \hspace{-1mm} \leq e^{-\alpha_N t} \hspace{-1mm} \left( \hspace{-1mm} \|x_N(0)\|_{P_N} \hspace{-1mm} + \frac{\alpha z_{max}^{P_N} }{\alpha \alpha_N - \gamma \gamma_N}\big(e^{\alpha_N t} -1\big) + \frac{\gamma_N h_+}{r_+}\big( e^{r_+ t} -1 \big) + \frac{\gamma_N h_-}{r_-} \big( e^{r_- t} -1\big)  \hspace{-1mm} \right) \hspace{-1mm},
    \end{equation*}
   which yields \eqref{eq: x_N bound underactuated}.
    
    When $\|\chi(t)\|_{\hat{P}} = 0$, we have $\beta_N(t) \leq z_{max}^{P_N}$, which simplifies \eqref{eq: x_N bound incomplete} as follows
    \begin{align*}
        \|x_N(t)\|_{P_N} &\leq e^{-\alpha_N t} \left( \|x_N(0)\|_{P_N} + \int_0^t e^{\alpha_N \tau} z_{max}^{P_N}\, d\tau \right) = \frac{z_{max}^{P_N}}{\alpha_N} + \left( \|x_N(0)\|_{P_N} - \frac{z_{max}^{P_N}}{\alpha_N} \right) e^{-\alpha_N t}.
    \end{align*}
\end{proof}

Using Propositions~\ref{prop: steady state X underactuated} and \ref{prop: full steady state xN underactuated}, we can now quantify the effect of the loss of control authority over which the network was not resilient.
Without the full rank assumption on $\bar{B}$, we cannot stabilize the network, but we provide a guaranteed bound on its state. This constitutes our solution to Problem~\ref{prob: nonresilient stabilizability} for an underactuated network.

\section{Numerical examples}\label{sec: example}

We will now illustrate the theory established in the preceding sections on two academic examples, on an islanded microgrid \citep{bidram2013distributed, xie2019distributed, bidram2013secondary, guo2014distributed}, and on the IEEE 39-bus system \citep{IEEE_39}.
All the data and codes necessary to run the simulations in this section are available on GitHub\footnote{ \url{https://github.com/Jean-BaptisteBouvier/Network-Resilience}}
.

\subsection{Fully actuated 3-component network}\label{subsec: academic fully actuated}

We start by testing the results of Section~\ref{subsec: fully-actuated} on a simple network constituted of a nonresilient subsystem enduring a partial loss of control authority. This network of states $\chi_1$, $\chi_2$ and $x_N$ follows dynamics
\begin{equation}\label{eq:academic X}
    \dot \chi(t) = \begin{pmatrix} -1 & 0 \\ 0 & -1 \end{pmatrix} \chi(t) + \begin{pmatrix} 2 & 0 \\ 0 & 2 \end{pmatrix} \hat{u}(t) + \begin{pmatrix} 0 & 0.3 \\ 0.3 & 0 \end{pmatrix} \chi(t) + \begin{pmatrix} 0.3 \\ 0.3 \end{pmatrix} x_N(t), \quad \chi(0) = \begin{pmatrix} 1 \\ 1 \end{pmatrix},
\end{equation}
\begin{equation}\label{eq:academic x_N}
    \dot x_N(t) = -x_N(t) + u_N(t) + 2w_N(t) + \begin{pmatrix} 0.3 & 0.3 \end{pmatrix} \chi(t), \qquad x_N(0) = 0,
\end{equation}
with $\hat{u}(t) = \left(\begin{smallmatrix} \hat{u}_1(t) \\ \hat{u}_1(t) \end{smallmatrix}\right) \in [-1, 1]^2$, $u_N(t) \in \mathcal{U}_N = [-1, 1]$ and $w_N(t) \in \mathcal{W}_N = [-1, 1]$.
With the notation of \eqref{eq:split system N} and \eqref{eq: X N-1}, matrices $A_N$ and $\hat{A} + \hat{D}$ are both Hurwitz, and the control matrix $\hat{B}$ is full rank, with
\begin{equation*}
    A_N = -1, \qquad \hat{A} + \hat{D} = \begin{pmatrix} -1 & 0.3 \\ 0.3 & -1 \end{pmatrix}, \qquad \text{and} \qquad \hat{B} = \begin{pmatrix} 2 & 0 \\ 0 & 2 \end{pmatrix}.
\end{equation*}
Additionally, $C_N \mathcal{W}_N = [-2, 2] \nsubseteq B_N \mathcal{U}_N = [-1, 1]$. Thus, all the assumptions of Propositions~\ref{prop: steady state xN}, \ref{prop: steady state X} and \ref{prop: full steady state xN} are verified. To apply these results, we solve Lyapunov equations $A_N^\top P_N + P_N A_N = -Q_N$ and $(\hat{A}+\hat{D})^\top \hat{P} + \hat{P} (\hat{A}+\hat{D}) = -\hat{Q}$ with the function \textit{lyap} on MATLAB:
\begin{equation*}
    Q_N = 1, \quad P_N = 0.5, \qquad \hat{Q} = \begin{pmatrix} 1 & 0 \\ 0 & 1 \end{pmatrix}, \quad \hat{P} = \begin{pmatrix} 0.23 & 0.05 \\ 0.05 & 0.5 \end{pmatrix}.
\end{equation*}
Then, following Proposition~\ref{prop: steady state xN} $\alpha_N = 1$ and $z_{max}^{P_N} = 1$. From Proposition~\ref{prop: steady state X}, $b_{min}^{\hat{P}} = 2$, $\alpha = 0.7$, $\gamma = 0.51$, and $\gamma_N = 0.48$.

The stabilizability conditions of Theorem~\ref{thm: X stabilizable} are satisfied since $\gamma \gamma_N = 0.25 < \alpha \alpha_N = 0.7$ and $\gamma z_{max}^{P_N} = 0.5 < \alpha_N b_{min}^{\hat{P}} = 2$.
To verify that $\chi$ is indeed resiliently stabilizable in finite time by $\hat{B}\hat{u} = \frac{-\chi(t)}{\|\chi(t)\|_{\hat{P}}} b_{min}^{\hat{P}}$, we propagate $\chi(t)$ and $x_N(t)$ using control law $u_N(t) = -1$ and undesirable signal $w_N(t) = 1$. Figure~\ref{fig:academic full} shows the resulting states evolution along with the bounds of Propositions~\ref{prop: steady state xN}, \ref{prop: steady state X} and \ref{prop: full steady state xN}.

\begin{figure}[htb!]
    \centering
    \begin{subfigure}[]{0.49\textwidth}
        \includegraphics[scale=0.6]{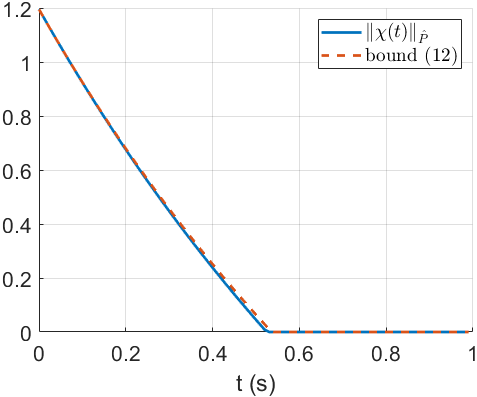}
        \caption{Finite-time resilient stabilization of network state $\chi(t)$ of \eqref{eq:academic X}.}
        \label{fig:academic full X}
    \end{subfigure}\hfill
    \begin{subfigure}[]{0.49\textwidth}
        \includegraphics[scale = 0.6]{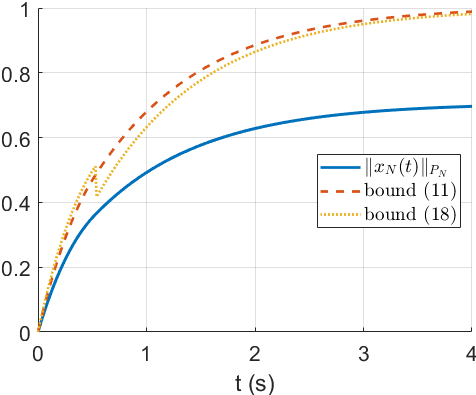}
        \caption{Resiliently bounded malfunctioning state $x_N(t)$ of \eqref{eq:academic x_N}.}
        \label{fig:academic full x_N}
    \end{subfigure}
    \caption{Time evolution of states $\chi$ and $x_N$ along with their bounds \eqref{eq: x_N bound incomplete}, \eqref{eq: X bound} and \eqref{eq: x_N bound}.}
    \label{fig:academic full}
\end{figure}

Figure~\ref{fig:academic full}(\subref{fig:academic full X}) shows the finite-time resilient stabilization of state $\chi$ and its respect of the tight bound~\eqref{eq: X bound}.
Figure~\ref{fig:academic full}(\subref{fig:academic full x_N}) illustrates the initial divergence but overall resilient boundedness of malfunctioning state $x_N$ while respecting both bounds \eqref{eq: x_N bound incomplete} and \eqref{eq: x_N bound}.
As discussed in Remark~\ref{rmk: discontinuous switch}, when $\chi(t)$ reaches $0$, bound \eqref{eq: x_N bound} operates a discontinuous switch.

\subsection{Underactuated 3-component network}\label{subsec: academic underactuated}

To validate the results of Section~\ref{subsec: nonresilient underactuated}, we need $\hat{B}$ not to be full row rank anymore, but the pair $(\hat{A} + \hat{D}, \hat{B})$ must remain controllable. Then, we remove the second column of $\hat{B}$ so that \eqref{eq:academic X} becomes
\begin{equation}\label{eq:academic X underactuated}
    \dot \chi(t) = \begin{pmatrix} -1 & 0.3 \\ 0.3 & -1 \end{pmatrix} \chi(t) + \begin{pmatrix} 2 \\ 0 \end{pmatrix} \hat{u}(t) + \begin{pmatrix} 0.3 \\ 0.3 \end{pmatrix} x_N(t), \qquad \chi(0) = \begin{pmatrix} 1 \\ 1 \end{pmatrix}.
\end{equation}
The MATLAB functions \textit{lqr} and \textit{lyap} choose the following gain matrix $K$ and positive definite matrices $\hat{P}$ and $\hat{Q}$:
\begin{equation*}
    K = \begin{pmatrix} 0.6383 & 0.1521 \end{pmatrix}, \qquad \hat{P} = \begin{pmatrix} 0.22 & 0.04 \\ 0.04 & 0.5 \end{pmatrix} \qquad \text{and} \qquad \hat{Q} = \begin{pmatrix} 1 & 0 \\ 0 & 1 \end{pmatrix}.
\end{equation*}
Then, $\gamma \gamma_N = 0.24 < \alpha \alpha_N = 0.98$ and $\hat{u}(t) := -K\chi(t) \in [-1, 1]$. Thus, the linear feedback of Proposition~\ref{prop: steady state X underactuated} is admissible and its bound \eqref{eq: X bound not full rank} holds as illustrated on Figure~\ref{fig:academic underactuated}(\subref{fig:academic X}).
We calculate bound \eqref{eq: X bound not full rank}, it has a minimum at $T = 1.9\, s$ and its supremum occurs either at $t = 0$ or as $t \rightarrow +\infty$. Figure~\ref{fig:academic underactuated}(\subref{fig:academic X}) validates the prediction and shows that the supremum of $b(t)$ is reached at $t = 0$.

\begin{figure}[htb!]
    \centering
    \begin{subfigure}[]{0.49\textwidth}
        \includegraphics[scale=0.6]{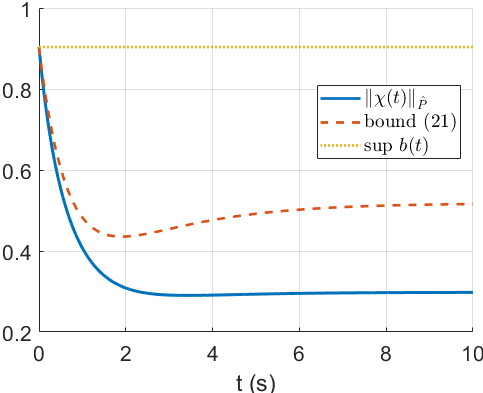}
        \caption{Resiliently bounded network state $\chi$ of \eqref{eq:academic X underactuated}.}
        \label{fig:academic X}
    \end{subfigure}\hfill
    \begin{subfigure}[]{0.49\textwidth}
        \includegraphics[scale = 0.6]{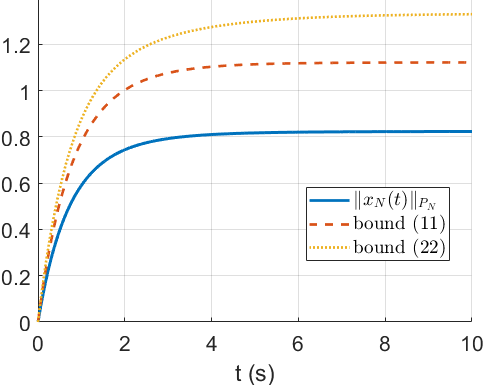}
        \caption{Resiliently bounded malfunctioning state $x_N$ of \eqref{eq:academic X underactuated}.}
        \label{fig:academic x_N}
    \end{subfigure}
    \caption{Illustration of bounds \eqref{eq: x_N bound incomplete}, \eqref{eq: X bound not full rank} and \eqref{eq: x_N bound underactuated} on the states $\chi$ and $x_N$.}
    \label{fig:academic underactuated}
\end{figure}

On the contrary to bound~\eqref{eq: x_N bound} on Figure~\ref{fig:academic full}(\subref{fig:academic full x_N}), bound~\eqref{eq: x_N bound underactuated} on Figure~\ref{fig:academic underactuated}(\subref{fig:academic x_N}) does not switch. Indeed, $\chi$ cannot be brought to $0$ by the linear control $\hat{u}$, as explained after Proposition~\ref{prop: steady state X underactuated}.

The bound of Proposition~\ref{prop: full steady state xN underactuated} is shown on Figure~\ref{fig:academic underactuated}(\subref{fig:academic x_N}) where no switch occurs because $\chi$ cannot be brought to $0$ by the linear control $\hat{u}$ as explained after Proposition~\ref{prop: steady state X underactuated}.

As illustrated on Figure~\ref{fig:academic underactuated}(\subref{fig:academic x_N}), bound~\eqref{eq: x_N bound incomplete} is tighter than bound~\eqref{eq: x_N bound underactuated}. The reason for this difference in conservatism is that \eqref{eq: x_N bound incomplete} uses directly the value of $\chi$, while \eqref{eq: x_N bound underactuated} replaces $\chi$ by its bound \eqref{eq: X bound not full rank}.

To verify the admissibility of the linear control law $\hat{u}(t) = -K \chi(t)$ we cannot use the sufficient condition of Proposition~\ref{prop: steady state X underactuated} as $\sup b(t) = 0.9 > \frac{\sqrt{\lambda_{min}^{\hat{P}}}}{\|K\|} = 0.71$. However, we can see on Figure~\ref{fig:academic KX} that $\|K\chi(t)\| \leq 1$ for all $t \geq 0$ and thus $\hat{u}$ is in fact admissible.
Note that $\hat{u}(t)$ does not converge to 0 since it needs to constantly counteract the destabilizing impact of $x_N(t)$ on $\chi(t)$.

\begin{figure}[htb!]
    \centering
    \includegraphics[scale=0.54]{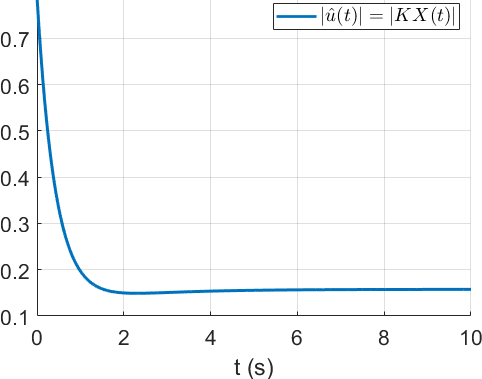}
    \caption{Linear feedback $\hat{u}(t) = -K\chi(t)$.}
    \label{fig:academic KX}
\end{figure}

\subsection{Microgrid test system}\label{subsec: microgrid}

We will now investigate the resilient stabilizability of the islanded microgrid illustrated on Figure~\ref{fig: microgrid} and studied in numerous power system works such as \citep{bidram2013distributed, xie2019distributed, guo2014distributed, bidram2013secondary}.

\begin{figure}[htb!]
    \centering
    \includegraphics[scale=0.3]{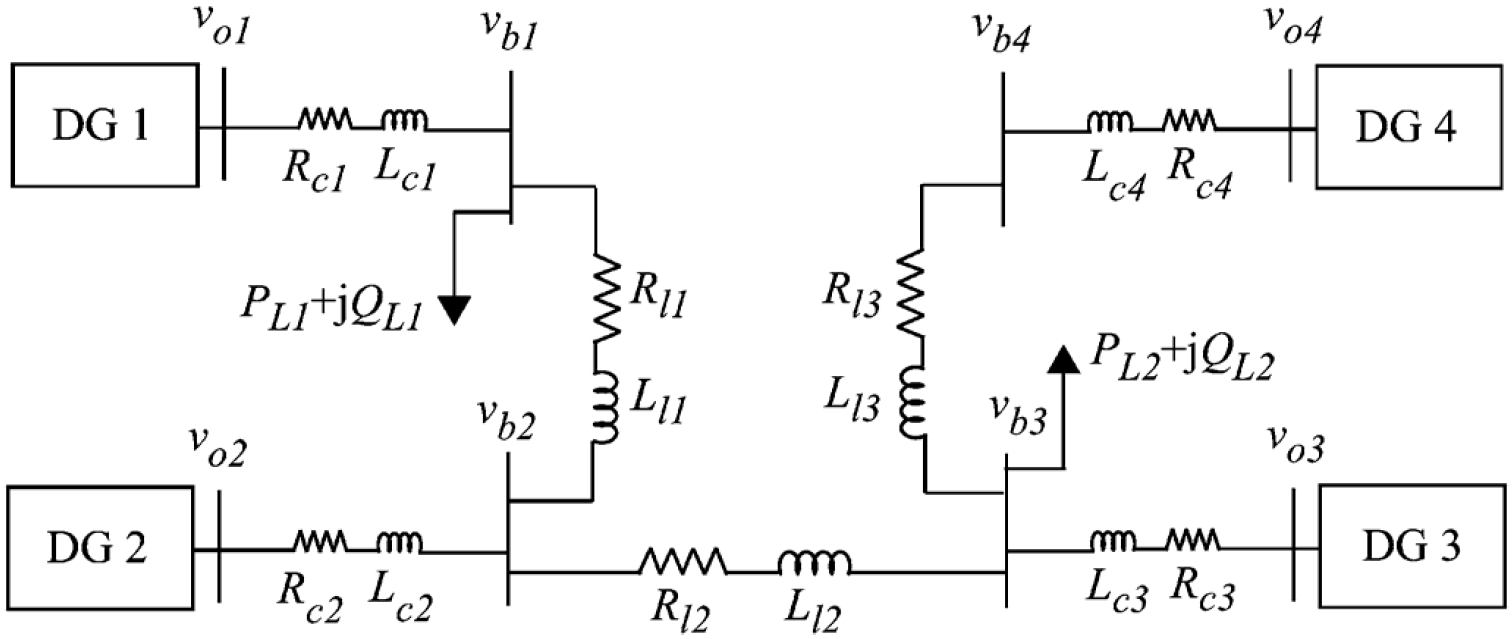}
    \caption{Single-line diagram of the microgrid test system from \citep{bidram2013distributed}.}
    \label{fig: microgrid}
\end{figure}

Distributed generator 1 (DG1) is connected to the leader node DG0 with pinning gain $1$ and all the DGs are connected following the communication digraph of Figure~\ref{fig: digraph}.

\begin{figure}[htbp!]
    \centering
    \begin{tikzpicture}[scale = 1]
        \node at (-2, 0.2) {reference};
        \node at (-2, -0.2) {value (DG0)};
        \draw[<->, very thick] (-1, 0) -- (-0.5, 0);
        
        \draw[] (0,0) circle (5mm) node[]{DG1};
        \draw[<->, very thick] (0.5, 0) -- (1, 0);
        \draw[] (1.5,0) circle (5mm) node[]{DG2};
        \draw[<->, very thick] (2, 0) -- (2.5, 0);
        \draw[] (3,0) circle (5mm) node[]{DG3};
        \draw[<->, very thick] (3.5, 0) -- (4, 0);
        \draw[] (4.5,0) circle (5mm) node[]{DG4};
        
    \end{tikzpicture}
    \caption{Topology of the communication digraph.}
    \label{fig: digraph}
\end{figure}
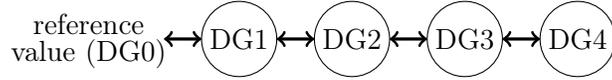

We follow works \citep{bidram2013distributed, xie2019distributed, bidram2013secondary} and employ their input-output feedback linearization of the DG dynamics. Since all DGs aim at synchronizing their voltage to the reference $v_0 = v_{ref}$, we consider as states the voltage difference between neighbors: $x_i := \big[v_{i} - v_{i-1}, \dot v_{i} - \dot v_{i-1}\big]^\top$ for $i \in [\![1,4]\!]$. Then, the objective is to stabilize all the $x_i$ to the origin. After a loss of control authority in DG4, we instead aim at bounding the voltages so that they do not diverge too far from the reference. The linearized microgrid is underactuated but controllable with $\gamma \gamma_q = 0.0399 < \alpha \alpha_q = 0.0401$, so that we can apply Propositions~\ref{prop: steady state X underactuated} and \ref{prop: full steady state xN underactuated}.

\begin{figure}[htbp!]
    \centering
    \includegraphics[scale=0.7]{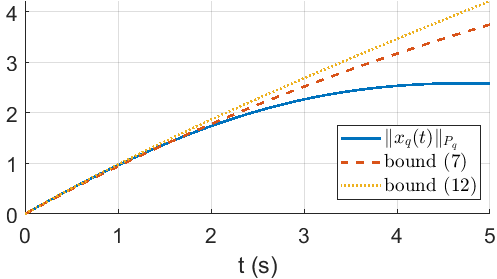}
    \caption{Bounded malfunctioning state $x_N = \big[v_4 - v_3, \dot v_4 - \dot v_3\big]^\top$ with tight bounds~\eqref{eq: x_N bound incomplete} and \eqref{eq: x_N bound underactuated}.}
    \label{fig: microgrid x_q}
\end{figure}

As seen on Figure~\ref{fig: microgrid x_q}, bounds~\eqref{eq: x_N bound incomplete} and \eqref{eq: x_N bound underactuated} are initially tight and only diverge slowly from $\|x_N(t)\|_{P_N}$.

\begin{figure}[htbp!]
    \centering
    \includegraphics[scale=0.7]{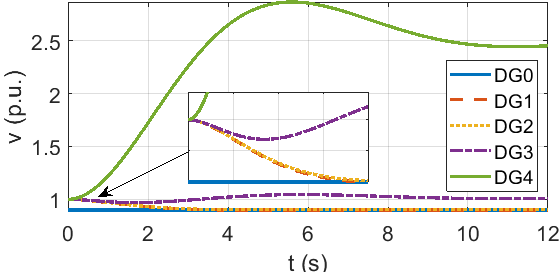}
    \caption{Magnitude of the output voltages of the reference and the four nodes of the microgrid.}
    \label{fig: microgrid V}
\end{figure}

Since DG4 is not resilient, $v_4$ does not converge to the reference $v_0$, but is nonetheless bounded as shown on Figure~\ref{fig: microgrid V}. In turn, DG4 disrupts its neighbor DG3, whose voltage $v_3$ cannot reach $v_0$ either. However, $v_3$ is maintained much closer to $v_0$ than $v_4$ thanks to the resilient controller of Proposition~\ref{prop: steady state X underactuated}. This resilient controller also allows DG1 and DG2 to remain completely oblivious of the loss of control of DG4.

\subsection{Resilient stabilizability of the IEEE 39-bus system}\label{subsec: IEEE 39}

In this section, we will illustrate the results of Section~\ref{subsec: nonresilient underactuated} on the IEEE 39-bus system \citep{IEEE_39} linearized in \citep{Sai_IEEE_39} and represented on Figure~\ref{fig:IEEE 39}.

\begin{figure}[htb!]
    \centering
    \includegraphics[scale=0.8]{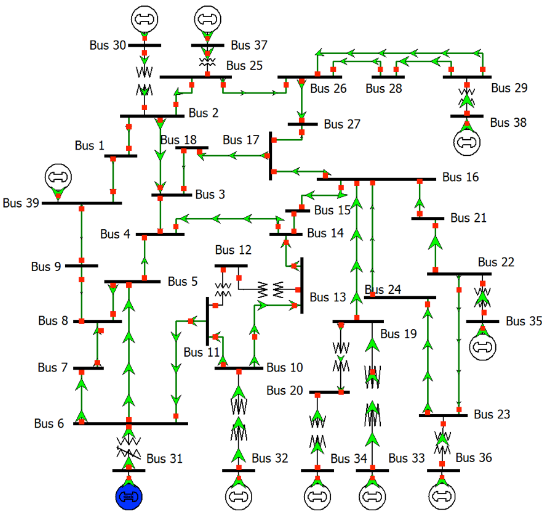}
    \caption{Illustration of the IEEE 39-bus system \citep{IEEE_39} obtained from \url{https://icseg.iti.illinois.edu/ieee-39-bus-system/}.}
    \label{fig:IEEE 39}
\end{figure}

This system is comprised of 29 load buses (1 to 29 on Figure~\ref{fig:IEEE 39}) and 10 generator buses (30 to 39 on Figure~\ref{fig:IEEE 39}). The state of the load buses is described solely by their phase angles $\{\delta_i\}_{i \, \in\, [\![1,29]\!]}$, while the state of the generators is composed of phase angles $\delta_{i \, \in\, [\![30,39]\!]}$ and frequencies $\omega_{i \, \in\, [\![30,39]\!]}$, which leads to 49 states. Only the generator buses possess a control input $u_g$.
Following \citep{Sai_IEEE_39}, the power network equations can be linearized around their nominal operating point after adjustment for the reference bus, chosen to be the first generator, i.e., bus 30. The state vector is then
\begin{equation*}
    x = \big( \{ \delta_i - \delta_{30}\}_{i\, \in\, [\![1,29]\!] \cup [\![31,39]\!]}, \{\omega_i\}_{i \, \in\, [\![30,39]\!]} \big) \in \mathbb{R}^{48}.
\end{equation*}

% which leads to the linearized equations of \citep{Sai_IEEE_39}
% \begin{equation}\label{eq: ODE 39}
%     \begin{pmatrix} \dot \delta_g(t) \\ \dot \omega_g(t) \\ \dot \delta_l(t) \end{pmatrix} = \begin{pmatrix} 0 & I_{10} & 0 \\ -D_\theta & -D_g & R_g \\ R_\theta & 0 & R_l \end{pmatrix} \begin{pmatrix} \theta_g(t) \\ \omega_g(t) \\ \omega_l(t) \end{pmatrix} + \begin{pmatrix} 0 \\ B_g \\ 0  \end{pmatrix} u_g(t), \qquad u_g(t) \in [-1, 1]^{10},
% \end{equation}
% with constant matrices $R_g \in \mathbb{R}^{10 \times 29}$, $R_\theta \in \mathbb{R}^{29 \times 10}$, $R_l \in \mathbb{R}^{29 \times 29}$, and diagonal matrices $D_\theta \in \mathbb{R}^{10 \times 10}$, $D_g \in \mathbb{R}^{10 \times 10}$, and $B_g \in \mathbb{R}^{10 \times 10}$.

After a cyber-attack, the network controller loses control authority over the actuator of generator bus 39, i.e., $x_N = \big( \delta_{39} \ \omega_{39} \big)^\top$ and $w_N = u_{39}$.
Following \citep{Sai_IEEE_39}, the malfunctioning dynamics are then
\begin{equation}\label{eq: malfunctioning 39}
    \begin{pmatrix} \dot \delta_{39}(t) \\ \dot \omega_{39}(t) \end{pmatrix} = \begin{pmatrix} 0 & 1 \\ -18.63 & -11.22 \end{pmatrix} \begin{pmatrix} \delta_{39}(t) \\ \omega_{39}(t) \end{pmatrix} + \begin{pmatrix} 0 \\ 0.222 \end{pmatrix} w_N(t) + D_{N,-} \chi(t).
\end{equation}
As in Section~\ref{subsec: academic underactuated}, we choose the initial states to be $\chi(0) = \mathbf{1}_{46}$ and $x_N(0) = (0 0)^\top$. Since $A_N$ is Hurwitz and $B_N = 0$, the assumptions of Proposition~\ref{prop: steady state xN} are satisfied.
Additionally, pair $(\hat{A} + \hat{D}, \hat{B})$ is controllable so we can find a stabilizing gain matrix $K$ for the network dynamics. However, we cannot apply Proposition~\ref{prop: steady state X underactuated} because the stability condition $\gamma \gamma_N < \alpha \alpha_N$ is not satisfied. Indeed, $\gamma \gamma_N = 6.3 \times 10^4$, while $\alpha \alpha_N = 5.7 \times 10^{-3}$. 
This magnitude difference leads to the exponential divergence of bound~\eqref{eq: X bound not full rank} and of bound~\eqref{eq: x_N bound underactuated}, as seen on Figure~\ref{fig:IEEE}(\subref{fig:IEEE X}) and \ref{fig:IEEE}(\subref{fig:IEEE x_N}), respectively.

\begin{figure}[htb!]
    \centering
    \begin{subfigure}[]{0.49\textwidth}
        \includegraphics[scale=0.6]{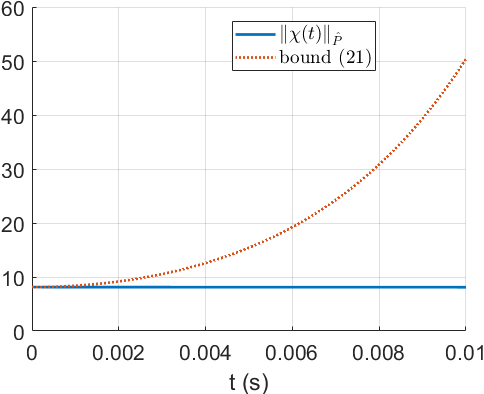}
        \caption{Simulation of network state $\chi$ of the IEEE 39-bus system with exponentially diverging bound~\eqref{eq: X bound not full rank}.}
        \label{fig:IEEE X}
    \end{subfigure}\hfill
    \begin{subfigure}[]{0.49\textwidth}
        \includegraphics[scale = 0.6]{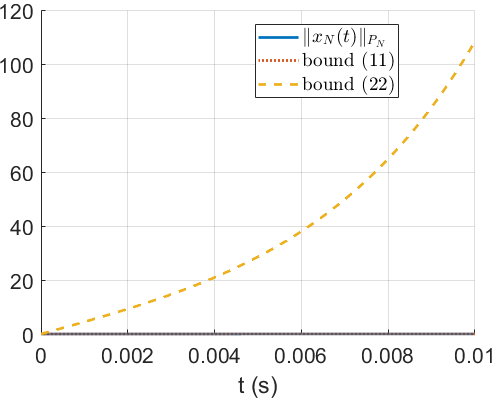}
        \caption{Simulation of malfunctioning state $x_N$ of the IEEE 39-bus system with tight bound~\eqref{eq: x_N bound incomplete} and exponentially diverging bound~\eqref{eq: x_N bound underactuated}.}
        \label{fig:IEEE x_N}
    \end{subfigure}
    \caption{Illustration of bounds \eqref{eq: x_N bound incomplete}, \eqref{eq: X bound not full rank} and \eqref{eq: x_N bound underactuated} on the states $\chi$ and $x_N$.}
    \label{fig:IEEE}
\end{figure}

Note that bound~\eqref{eq: x_N bound incomplete} is much tighter than \eqref{eq: x_N bound underactuated} on Figure~\ref{fig:IEEE}(\subref{fig:IEEE x_N}) because bound~\eqref{eq: x_N bound incomplete} uses 
$$\int_0^t e^{-\alpha_N(t-\tau)}\|D_{N,\_} \chi(\tau)\|_{P_N}d\tau,$$ whereas \eqref{eq: x_N bound underactuated} bounds this integral with exponentially diverging \eqref{eq: X bound not full rank}.
In fact, bound~\eqref{eq: x_N bound incomplete} remains a reasonable bound for malfunctioning state $x_N$ over a much longer time horizon as illustrated on Figure~\ref{fig:IEEE x_N long}.

\begin{figure}[htb!]
    \centering
    \includegraphics[scale = 0.6]{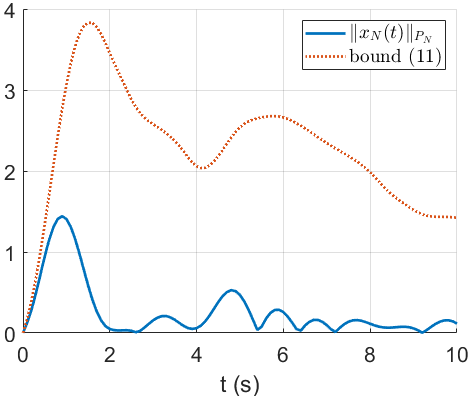}
    \caption{Simulation of malfunctioning state $x_N$ of the IEEE 39-bus system with tight bound~\eqref{eq: x_N bound incomplete}.}
    \label{fig:IEEE x_N long}
\end{figure}

As before, sufficient condition $\underset{t\, \geq\, 0}{\sup}\ b(t) \leq \frac{\sqrt{\lambda_{min}^{\hat{P}}}}{\|K\|}$ of Proposition~\ref{prop: steady state X underactuated} cannot tell whether linear feedback $\hat{u}$ is admissible. However, the choice of $K$ ensures admissibility $\underset{i, t}{\max} |K\chi_i(t)| \leq 1$ as shown on Figure~\ref{fig:IEEE KX}.

\begin{figure}[htb!]
    \centering
    \includegraphics[scale=0.6]{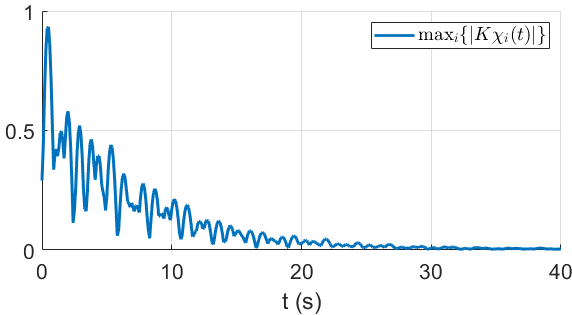}
    \caption{Maximal component of the linear feedback $\hat{u}(t) = -K\chi(t)$.}
    \label{fig:IEEE KX}
\end{figure}

Let us delve a bit deeper into the exponential divergence of bound~\eqref{eq: X bound not full rank}. As mentioned previously, bound~\eqref{eq: X bound not full rank} is not tight because $\gamma \gamma_N = 6.3 \times 10^4$ is orders of magnitude larger than $\alpha \alpha_N = 5.7 \times 10^{-3}$, whereas the stability condition of Proposition~\ref{prop: steady state X underactuated} calls for $\gamma \gamma_N < \alpha \alpha_N$. As discussed after Theorem~\ref{thm: X stabilizable}, this condition carries the intuition that the perturbations arising from the coupling between $x_N$ and $\chi$ should be weaker than their respective stability.
Despite having $\gamma \gamma_N >> \alpha \alpha_N$, the coupling does not destabilize states $x_N$ and $\chi$, which are both bounded, as shown on Figure~\ref{fig:IEEE X x_N}.

\begin{figure}[htb!]
    \centering
    \includegraphics[scale = 0.6]{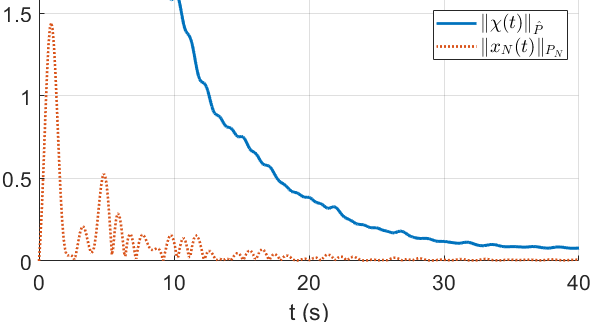}
    \caption{Simulation of network state $\chi$ and malfunctioning state $x_N$ of the IEEE 39-bus system.}
    \label{fig:IEEE X x_N}
\end{figure}

Since the coupling does not destabilize states $\chi$ and $x_N$, the violation of stability condition $\gamma \gamma_N < \alpha \alpha_N$ is in fact due to the failure of parameters $\gamma$ and $\gamma_N$ to characterize the coupling between states $\chi$ and $x_N$.
As shown on Figure~\ref{fig:IEEE 39} each bus is only connected to a small number of other buses. Then, matrix $\hat{D}$ is almost entirely composed of zeros except for a handful of terms per row. Because of this strong coupling with very few nodes, constants $\gamma$ and $\gamma_N$ are very large. However, the sparsity of matrix $\hat{D}$ results in weak coupling of states $\chi$ and $x_N$, rendering $\gamma$ and $\gamma_N$ overly conservative.
This intuition was illustrated on the more densely connected microgrid of Section~\ref{subsec: microgrid}, where the bounds were much tighter. To study sparsely connected networks like the IEEE 39-bus system, we have the intuition that choosing a different norm reflecting the sparsity of matrix $\hat{D}$ would lower the values of $\gamma$ and $\gamma_N$. Doing so would significantly and non-trivially alter all the proofs of Section~\ref{sec: nonresilient}.

\section{Conclusion}\label{sec:conclusion}

This paper investigated the resilient stabilizability of linear networks enduring a loss of control.
We first saw that the overall stabilizability of networks composed exclusively of resilient subsystems depends only on their interconnection.
Then, we focused on networks losing control authority over a nonresilient subsystem.
In this scenario, we showed that under some conditions, the state of underactuated networks can remain bounded and the state of fully actuated networks can be stabilized.
We were able to quantify the maximal magnitude of undesirable inputs that can be applied to a nonresilient subsystem without destabilizing the rest of the network.

We are considering several avenues of future work.
First, building on the nonlinear resilience theory of \citep{JGCD}, we would like to extend our approach to nonlinear networks. Doing so would allow us to study the true nonlinear dynamics of power systems, including the IEEE 39-bus system.
Second, following the discussion at the end of Section~\ref{subsec: IEEE 39} we want to extend this theory to different matrix norms to provide tighter bounds for sparse coupling matrices.
The last avenue of future work would be to relax the assumption of real-time knowledge of the undesirable inputs by the controller. Doing so would allow to account for actuation delays and can possibly be accomplished following the techniques introduced in \citep{JGCD}.

\appendix
\section{Supporting lemmata}\label{apx}

\begin{lemma}\label{lemma: multi losses}
    A loss of control authority over actuators of several subsystems of network~\eqref{eq:N network D} is equivalent to the loss of control authority over a new single subsystem composed of the `stacked' malfunctioning subsystems.
\end{lemma}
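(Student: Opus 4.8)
The plan is to prove the lemma by an explicit change of variables, since the statement is fundamentally a bookkeeping reduction rather than a deep result. First I would let $\mathcal{M} = \{i_1 < \cdots < i_s\} \subseteq [\![1,N]\!]$ denote the indices of the malfunctioning subsystems, with subsystem $i_j$ losing control over $p_{i_j} \geq 1$ of its $m_{i_j}$ actuators. Because the labelling of the subsystems is arbitrary, I would apply the permutation of $[\![1,N]\!]$ that sends $\mathcal{M}$ to the last $s$ indices; relabelling this way again produces a network of the form \eqref{eq:N network D}, now with all losses concentrated in its final $s$ subsystems, and at the level of the concatenated state \eqref{eq: network X} this is merely a block-permutation of coordinates. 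It therefore suffices to stack those $s$ subsystems into a single one carrying all the losses.

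Next I would stack the malfunctioning states into $\xi := (x_{i_1}, \ldots, x_{i_s}) \in \mathbb{R}^{n_\xi}$ with $n_\xi := \sum_j n_{i_j}$, and define the combined data
\begin{equation*}
    A_\xi := \big( \tilde{A}_{j,l} \big)_{j,l}, \qquad \tilde{A}_{j,l} := \begin{cases} A_{i_j} & l = j, \\ D_{i_j, i_l} & l \neq j, \end{cases} \qquad \bar{B}_\xi := \diag\big( \bar{B}_{i_1}, \ldots, \bar{B}_{i_s} \big).
\end{equation*}
The crucial observation is that the couplings $D_{i_j, i_l}$ between two malfunctioning subsystems are absorbed into the off-diagonal blocks of $A_\xi$, which is legitimate because the internal dynamics matrix of a subsystem is unconstrained, whereas the couplings with a surviving subsystem $k \notin \mathcal{M}$ remain genuine couplings: the term entering $\dot{x}_k$ is $\big[ D_{k,i_1} \cdots D_{k,i_s} \big]\xi$, and the term entering $\dot{\xi}$ is the vertical concatenation of the blocks $D_{i_j, k} x_k$. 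Thus the stacked subsystem obeys $\dot{\xi} = A_\xi \xi + \bar{B}_\xi \bar{u}_\xi + \sum_{k} D_{\xi, k}\, x_k$, which is exactly of the form \eqref{eq:subsys N D}; one checks identically that the factorization $D_{i,k} = L_{i,k}F_k$ of \eqref{eq:N network L F} is preserved by setting $F_\xi := \diag(F_{i_1}, \ldots, F_{i_s})$ and concatenating the corresponding $L$ blocks.

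Finally I would account for the actuators. The block-diagonal structure of $\bar{B}_\xi$ partitions its columns subsystem by subsystem, so collecting the $p_{i_j}$ lost columns of each block yields a single matrix $C_\xi$, while the surviving columns form $B_\xi$; the total loss is $p_\xi := \sum_j p_{i_j} \in [\![1, m_\xi]\!]$ with $m_\xi := \sum_j m_{i_j}$, and the admissible input set $\prod_j \bar{\mathcal{U}}_{i_j} = [-1,1]^{m_\xi}$ is again a hypercube, consistent with the standing assumptions. Since the combined stacking-and-permutation map is a linear bijection of $\mathbb{R}^{n_\Sigma}$ carrying trajectories of the original network onto those of the reduced one, every property in Definitions~\ref{def: resilient reachability}--\ref{def: resilient boundedness} transfers verbatim, which is the sense in which the two scenarios are equivalent. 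The only step requiring care is the coupling bookkeeping of the second paragraph---retaining the inter-malfunction couplings inside $A_\xi$ while keeping the surviving couplings as correctly concatenated $D$ blocks---but this is a direct verification once the index partition is fixed.
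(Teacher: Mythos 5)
Your proof is correct and follows the same basic strategy as the paper's: explicitly stack the malfunctioning subsystems and verify that the stacked dynamics again have the form \eqref{eq:subsys N D}. The differences are worth recording, because yours is both more general and, on one point, more careful. The paper treats only two malfunctioning subsystems $i$ and $j$ and asserts the general case is identical up to notation, whereas you handle an arbitrary set $\mathcal{M}$ in one pass; you also add the relabelling permutation and the closing observation that the stacking map is a linear bijection carrying trajectories onto trajectories, which is the cleanest justification of the word ``equivalent'' and of the transfer of Definitions~\ref{def: resilient reachability}--\ref{def: resilient boundedness}. The more substantive difference concerns couplings \emph{between} malfunctioning subsystems: the paper keeps $\hat{A}_c = \diag(A_i, A_j)$ block diagonal and sets the combined neighbor set to $\hat{\mathcal{N}}_c = \mathcal{N}_i \cup \mathcal{N}_j$, so that if $j \in \mathcal{N}_i$ the term $D_{i,j}x_j$ remains in the coupling sum of \eqref{eq: combined}, i.e., the stacked subsystem lists part of its own state among its ``neighbors,'' in tension with the standing convention $i \notin \mathcal{N}_i$ and with the literal form \eqref{eq:subsys N D}. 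Your absorption of the inter-malfunction blocks $D_{i_j,i_l}$ into the off-diagonal blocks of $A_\xi$ resolves exactly this, at no cost, since nothing in the network model constrains the internal dynamics matrix of a subsystem (Hurwitzness and similar properties of $A_N$ are hypotheses of later results, not of the model). Your one step stated without full detail, preservation of the factorization $D_{i,k}=L_{i,k}F_k$, does check out: for a surviving subsystem $k$ one has $\big[ D_{k,i_1} \cdots D_{k,i_s} \big] = \big[ L_{k,i_1} \cdots L_{k,i_s} \big]\diag\big(F_{i_1},\ldots,F_{i_s}\big)$, and $D_{\xi,k}$ is the vertical concatenation of the blocks $L_{i_j,k}F_k$, so both directions factor through $F_\xi$ and $F_k$ as required.
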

\begin{proof}
    If multiple subsystems simultaneously lose control authority over some of their actuators, we can take advantage of the linearity of the dynamics of network~\eqref{eq:N network D} to combine these malfunctioning subsystems into a single macro subsystem whose dynamics have the same form as a single malfunctioning subsystem. 
    We will describe this process when two subsystems simultaneously lose control authority over some of their actuators. Note that the same process works for more than two subsystems, the notations will just become more cumbersome. In the two subsystem case, assume that subsystems $i$ and $j$ are attacked. Then, we can write their dynamics as follows:
    \begin{equation}\label{eq: subsys split}
        \begin{array}{c}
            \dot x_i(t) = A_i x_i(t) + B_i u_i(t) + C_i w_i(t) + \displaystyle\sum_{k\, \in\, \mathcal{N}_i} D_{i,k} x_k(t), \\ 
            \dot x_j(t) = A_j x_j(t) + B_j u_j(t) + C_j w_j(t) + \displaystyle\sum_{k\, \in\, \mathcal{N}_j} D_{j,k} x_k(t).
        \end{array} 
    \end{equation}   
    Let us now define the combined vectors and matrices distinguishable with their subscript $_c$ as follows:
    \begin{equation*}
        \hat{x}_c(t) := \begin{pmatrix} x_i(t) \\ x_j(t) \end{pmatrix}, \qquad \hat{u}_c(t) := \begin{pmatrix} u_i(t) \\ u_j(t) \end{pmatrix}, \qquad \hat{w}_c(t) := \begin{pmatrix} w_i(t) \\ w_j(t) \end{pmatrix},       
    \end{equation*}
    \begin{equation*}
        \hat{A}_c := \begin{pmatrix} A_i & 0 \\ 0 & A_j \end{pmatrix}, \qquad \hat{B}_c := \begin{pmatrix} B_i & 0 \\ 0 & B_j \end{pmatrix}, \qquad \hat{C}_c := \begin{pmatrix} C_i & 0 \\ 0 & C_j \end{pmatrix}.
    \end{equation*}
    We also define the set of combined neighbors as $\hat{\mathcal{N}}_c := \mathcal{N}_i \cup \mathcal{N}_j$. We partition this set as the union of the following disjoint sets: 
    \begin{equation*}
        \hat{\mathcal{N}}_c = (\mathcal{N}_i \cap \mathcal{N}_j)\ \sqcup\ (\mathcal{N}_i - \mathcal{N}_j)\ \sqcup\ (\mathcal{N}_j - \mathcal{N}_i).
    \end{equation*}
    Then, for $k \in \hat{\mathcal{N}}_c$ we define the combined coupling matrix $\hat{D}_{c,k}$ depending on which subset of the partition of $\hat{\mathcal{N}}_c$ the index $k$ belongs to
    \begin{align*}
        \hat{D}_{c,k} &:= \begin{pmatrix} D_{i,k} \\ D_{j,k} \end{pmatrix} \qquad \text{if}\ k \in \mathcal{N}_i \cap \mathcal{N}_j, \\
        \hat{D}_{c,k} &:= \begin{pmatrix} D_{i,k} \\ 0 \end{pmatrix} \qquad \text{if}\ k \in \mathcal{N}_i - \mathcal{N}_j, \\
        \hat{D}_{c,k} &:= \begin{pmatrix} 0 \\ D_{j,k} \end{pmatrix} \qquad \text{if}\ k \in \mathcal{N}_j - \mathcal{N}_i.
    \end{align*}
    With the partition of $\hat{\mathcal{N}}_c$, we have
    \begin{align*}
        \sum_{k\, \in\, \hat{\mathcal{N}}_c} \hat{D}_{c,k} x_k(t) &= \sum_{k\, \in\, \mathcal{N}_i \cap \mathcal{N}_j} \hat{D}_{c,k} x_k(t) + \sum_{k\, \in\, \mathcal{N}_i - \mathcal{N}_j} \hat{D}_{c,k} x_k(t) + \sum_{k\, \in\, \mathcal{N}_j - \mathcal{N}_i} \hat{D}_{c,k} x_k(t) \\
        &= \sum_{k\, \in\, \mathcal{N}_i \cap \mathcal{N}_j} \begin{pmatrix} D_{i,k} \\ D_{j,k} \end{pmatrix} x_k(t) + \sum_{k\, \in\, \mathcal{N}_i - \mathcal{N}_j} \begin{pmatrix} D_{i,k} \\ 0 \end{pmatrix} x_k(t) + \sum_{k\, \in\, \mathcal{N}_j - \mathcal{N}_i} \begin{pmatrix} 0 \\ D_{j,k} \end{pmatrix} x_k(t) \\
        &= \sum_{k\, \in\, \mathcal{N}_i \cap \mathcal{N}_j} \begin{pmatrix} D_{i,k} x_k(t) \\ D_{j,k} x_k(t) \end{pmatrix} + \sum_{k\, \in\, \mathcal{N}_i - \mathcal{N}_j} \begin{pmatrix} D_{i,k}  x_k(t)\\ 0 \end{pmatrix} + \sum_{k\, \in\, \mathcal{N}_j - \mathcal{N}_i} \begin{pmatrix} 0 \\ D_{j,k}  x_k(t)\end{pmatrix} \\
        &= \begin{pmatrix}
            \displaystyle\sum_{k\, \in\, \mathcal{N}_i \cap \mathcal{N}_j} D_{i,k} x_k(t) + \displaystyle\sum_{k\, \in\, \mathcal{N}_i - \mathcal{N}_j} D_{i,k}  x_k(t) + \displaystyle\sum_{k\, \in\, \mathcal{N}_j - \mathcal{N}_i} 0 \\
            \displaystyle\sum_{k\, \in\, \mathcal{N}_i \cap \mathcal{N}_j} D_{j,k} x_k(t) + \displaystyle\sum_{k\, \in\, \mathcal{N}_i - \mathcal{N}_j} 0 + \displaystyle\sum_{k\, \in\, \mathcal{N}_j - \mathcal{N}_i} D_{j,k}  x_k(t)
        \end{pmatrix} = \begin{pmatrix}
            \displaystyle\sum_{k\, \in\, \mathcal{N}_i} D_{i,k} x_k(t) \\
            \displaystyle\sum_{k\, \in\, \mathcal{N}_j} D_{j,k} x_k(t) 
        \end{pmatrix},
    \end{align*}
    since $(\mathcal{N}_i \cap \mathcal{N}_j) \sqcup (\mathcal{N}_i - \mathcal{N}_j) = \mathcal{N}_i$ and $(\mathcal{N}_i \cap \mathcal{N}_j) \sqcup (\mathcal{N}_j - \mathcal{N}_i) = \mathcal{N}_j$.
    Then, \eqref{eq: subsys split} can be written as
    \begin{align}
        \begin{pmatrix} \dot x_i(t) \\ \dot x_j(t) \end{pmatrix} &= \begin{pmatrix}
        A_i x_i(t) + B_i u_i(t) + C_i w_i(t) + \displaystyle\sum_{k\, \in\, \mathcal{N}_i} D_{i,k} x_k(t)\\
        A_j x_j(t) + B_j u_j(t) + C_j w_j(t) + \displaystyle\sum_{k\, \in\, \mathcal{N}_j} D_{j,k} x_k(t)
        \end{pmatrix} = \dot{\hat{x}}_c(t) \nonumber \\
        % &= \begin{pmatrix}
        % A_i x_i(t) + B_i u_i(t) + C_i w_i(t) + \displaystyle\sum_{k\, \in\, \mathcal{N}_i \cap \mathcal{N}_j} D_{i,k} x_k(t) + \displaystyle\sum_{k\, \in\, \mathcal{N}_i - \mathcal{N}_j} D_{i,k} x_k(t) + \displaystyle\sum_{k\, \in\, \mathcal{N}_j - \mathcal{N}_i} 0 x_k(t) \\
        % A_j x_j(t) + B_j u_j(t) + C_j w_j(t) + \displaystyle\sum_{k\, \in\, \mathcal{N}_i \cap \mathcal{N}_j} D_{j,k} x_k(t) + \displaystyle\sum_{k\, \in\, \mathcal{N}_i - \mathcal{N}_j} 0 x_k(t) + \displaystyle\sum_{k\, \in\, \mathcal{N}_j - \mathcal{N}_i} D_{j,k} x_k(t) \end{pmatrix} \\
        &= \begin{pmatrix} A_i & 0 \\ 0 & A_j \end{pmatrix} \begin{pmatrix} x_i(t) \\ x_j(t) \end{pmatrix} + \begin{pmatrix} B_i & 0 \\ 0 & B_j \end{pmatrix} \begin{pmatrix} u_i(t) \\ u_j(t) \end{pmatrix} + \begin{pmatrix} C_i & 0 \\ 0 & C_j \end{pmatrix} \begin{pmatrix} w_i(t) \\ w_j(t) \end{pmatrix} + \begin{pmatrix}
            \displaystyle\sum_{k\, \in\, \mathcal{N}_i} D_{i,k} x_k(t) \\
            \displaystyle\sum_{k\, \in\, \mathcal{N}_j} D_{j,k} x_k(t) 
        \end{pmatrix} \nonumber \\
        &= \hat{A}_c \hat{x}_c(t) + \hat{B}_c \hat{u}_c(t) + \hat{C}_c \hat{w}_c(t) + \sum_{k\, \in\, \hat{\mathcal{N}}_c} \hat{D}_{c,k} x_k(t) = \dot{\hat{x}}_c(t). \label{eq: combined}
    \end{align}
    The combination of the two malfunctioning subsystems of \eqref{eq: subsys split} into a single one \eqref{eq: combined} allows us to study the loss of a single subsystem without losing generality in the number of malfunctioning subsystems. 
\end{proof}

\begin{lemma}\label{lemma: controllability radius}
    $\mu_{\bar{B}}(A) \geq \mu(A, \bar{B})$.
\end{lemma}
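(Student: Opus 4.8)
The plan is to exploit the fact that $\mu_{\bar{B}}(A)$ minimizes over a strictly smaller class of admissible perturbations than $\mu(A,\bar{B})$: the former forbids perturbing $\bar{B}$, whereas the latter permits it. Concretely, every $\Delta A$ admissible for $\mu_{\bar{B}}(A)$ gives rise to the joint perturbation $(\Delta A, \Delta\bar{B}) = (\Delta A, 0)$ admissible for $\mu(A,\bar{B})$, so the inequality should follow purely from monotonicity of an infimum over nested feasible sets. First I would fix any $\Delta A$ with $(A + \Delta A, \bar{B})$ uncontrollable, i.e.\ any element feasible for the minimization defining $\mu_{\bar{B}}(A)$, and verify that $(\Delta A, 0)$ is feasible for $\mu(A,\bar{B})$: indeed the perturbed pair $(A + \Delta A, \bar{B} + 0)$ coincides with $(A + \Delta A, \bar{B})$, hence is uncontrollable.

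The only step requiring a small verification is that the augmented matrix has the same norm as $\Delta A$ alone, i.e.\ $\|\Delta A, 0\| = \|\Delta A\|$. This holds for the induced $2$-norm of the notation section because appending zero columns does not introduce any new nonzero singular value: writing a unit vector as $v = (v_1, v_2)$, we have $\big[\Delta A \;\; 0\big] v = \Delta A v_1$, so that $\|\Delta A, 0\| = \max_{\|v\|=1} \|\Delta A v_1\| = \|\Delta A\|$, the maximum being attained by placing all the weight in the $v_1$ block. (The same identity holds for the Frobenius norm, so the argument is insensitive to the precise norm used in the definitions.)

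Combining these two observations, $\mu(A,\bar{B}) \leq \|\Delta A, 0\| = \|\Delta A\|$ holds for \emph{every} $\Delta A$ feasible for $\mu_{\bar{B}}(A)$; taking the infimum over all such $\Delta A$ yields $\mu(A,\bar{B}) \leq \mu_{\bar{B}}(A)$, as claimed. Phrasing the final step as an infimum rather than invoking an attained minimizer sidesteps any question of whether the minimum defining $\mu_{\bar{B}}(A)$ is achieved. I do not expect a genuine obstacle here: the result is a nested-feasible-set monotonicity statement, and the only non-formal ingredient is the elementary norm identity of the second paragraph.
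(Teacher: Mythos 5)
Your proof is correct and takes essentially the same route as the paper: both arguments embed each perturbation $\Delta A$ feasible for $\mu_{\bar{B}}(A)$ as the pair $(\Delta A, 0)$ feasible for $\mu(A,\bar{B})$ and conclude by monotonicity of the minimum over nested feasible sets (the paper phrases this as the inclusion $\mathcal{S}_A \subseteq \mathcal{S}_{AB}$). Your explicit verification that $\|\Delta A,\, 0\| = \|\Delta A\|$ is a detail the paper leaves implicit, but it does not change the substance of the argument.
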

\begin{proof}
    Define sets
    \begin{align*}
        \mathcal{S}_A &:= \big\{ ( \Delta A, 0) \in \mathbb{R}^{n \times n} \times \mathbb{R}^{n \times m} : (A+ \Delta A, \bar{B} + 0)\ \text{is uncontrollable}\big\}, \\
        \mathcal{S}_{AB} &:= \big\{ ( \Delta A, \Delta \bar{B}) \in \mathbb{R}^{n \times n} \times \mathbb{R}^{n \times m} : (A+ \Delta A, \bar{B} + \Delta \bar{B})\ \text{is uncontrollable}\big\}.
    \end{align*}
    Notice that $\mathcal{S}_A \subseteq \mathcal{S}_{AB}$. Therefore,
    \begin{equation*}
        \min \big\{ \| \Delta A, \Delta \bar{B}\| : (\Delta A, \Delta \bar{B}) \in \mathcal{S}_A \big\} \geq \min \big\{ \| \Delta A, \Delta \bar{B}\| : (\Delta A, \Delta \bar{B}) \in \mathcal{S}_{AB} \big\},
    \end{equation*}
    i.e., $\mu_{\bar{B}}(A) \geq \mu(A, \bar{B})$.
\end{proof}

The following result relates the norms induced by two positive definite matrices of different sizes.

\begin{lemma}\label{lemma: P norm change}
    Let $D \in \mathbb{R}^{m \times n}$, $P \in \mathbb{R}^{m \times m}$ and $Q \in \mathbb{R}^{n \times n}$. If $P = P^\top \succ 0$ and $Q \succ 0$, then $\|Dx\|_P \leq \|x\|_Q \sqrt{\frac{\lambda_{max}^{D^\top P D}}{\lambda_{min}^Q}}$ for all $x \in \mathbb{R}^n$.
\end{lemma}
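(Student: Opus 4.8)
The plan is to reduce the inequality to a pair of Rayleigh-quotient estimates, one for each matrix. First I would expand the left-hand side using the definition $\|y\|_P := \sqrt{y^\top P y}$, writing $\|Dx\|_P^2 = (Dx)^\top P (Dx) = x^\top \big(D^\top P D\big) x$. Because $P = P^\top \succ 0$, the matrix $M := D^\top P D \in \mathbb{R}^{n \times n}$ is symmetric and positive semidefinite, so its eigenvalues are real and nonnegative and $\lambda_{max}^{D^\top P D} \geq 0$ is well defined. The standard Rayleigh-quotient upper bound for a symmetric matrix then gives $x^\top M x \leq \lambda_{max}^{D^\top P D}\, x^\top x$ for every $x \in \mathbb{R}^n$.

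Next I would control $x^\top x$ in terms of $\|x\|_Q$. Since $Q \succ 0$, the smallest-eigenvalue bound $x^\top Q x \geq \lambda_{min}^Q\, x^\top x$ holds with $\lambda_{min}^Q > 0$, so dividing yields $x^\top x \leq \frac{1}{\lambda_{min}^Q}\, x^\top Q x = \frac{1}{\lambda_{min}^Q}\|x\|_Q^2$. Chaining the two estimates produces $\|Dx\|_P^2 \leq \frac{\lambda_{max}^{D^\top P D}}{\lambda_{min}^Q}\|x\|_Q^2$, and taking square roots — legitimate because both sides are nonnegative — gives exactly the claimed inequality $\|Dx\|_P \leq \|x\|_Q \sqrt{\frac{\lambda_{max}^{D^\top P D}}{\lambda_{min}^Q}}$.

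There is no genuine obstacle here: the statement is a two-step sandwich of Rayleigh quotients, one opening up the $P$-norm and one collapsing the $Q$-norm. The only points that merit a moment's care are verifying that $D^\top P D$ is symmetric — so that its spectrum is real and the Rayleigh bound applies verbatim even though $D$ is rectangular — and that its largest eigenvalue is nonnegative, which guarantees the square root on the right-hand side is real. Both follow immediately from $P \succ 0$, while the strict positivity of $\lambda_{min}^Q$ needed for the division follows from $Q \succ 0$. Hence the proof amounts to a short, self-contained computation.
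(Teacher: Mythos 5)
Your proof is correct and follows essentially the same route as the paper's: both chain the Rayleigh-quotient upper bound for the symmetric matrix $D^\top P D$ with the lower bound $x^\top Q x \geq \lambda_{min}^Q\, x^\top x$ to pass from $\|Dx\|_P$ to $\|x\|_Q$. The only cosmetic difference is that you work with squared quantities and take the square root at the end, while the paper manipulates the square roots directly; the substance is identical.
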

\begin{proof}
    Applying the Rayleigh quotient inequality \citep{matrix_computations} to symmetric matrices $Q$ and $D^\top P D$ yields, 
    \begin{equation*}
        \lambda_{min}^Q \leq \frac{x^\top Q x}{x^\top x} \quad \text{and} \quad \frac{x^\top D^\top P D x}{x^\top x} \leq \lambda_{max}^{D^\top P D},
    \end{equation*}
    for all $x \in \mathbb{R}^n$, $x\neq 0$. Then,
    \begin{align*}
        \|Dx\|_P &= \sqrt{x^\top D^\top P D x} \leq \sqrt{\lambda_{max}^{D^\top P D} } \sqrt{x^\top x} \\
        &\leq \sqrt{ \frac{\lambda_{max}^{D^\top P D}}{\lambda_{min}^Q} } \sqrt{x^\top Q x} = \|x\|_Q \sqrt{ \frac{\lambda_{max}^{D^\top P D}}{\lambda_{min}^Q} }.
    \end{align*}
\end{proof}

Since $(x,y) \mapsto x^\top P y$ defines a scalar product for any $P \succ 0$, it verifies the Cauchy-Schwarz inequality \citep{matrix_computations}. We provide here a more constructive proof of this result for the reader.

\begin{lemma}[Cauchy-Schwarz inequality for the $P$-norm]\label{lemma: CS}
    Let $P \in \mathbb{R}^{n \times n}$, $P \succ 0$ and $x \in \mathbb{R}^n$, $y \in \mathbb{R}^n$. Then, $x^\top P y \leq \|x\|_P \|y\|_P$.
\end{lemma}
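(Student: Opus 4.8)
The plan is to exploit the quadratic nonnegativity of the $P$-norm, namely the elementary fact that $\|x - ty\|_P^2 \geq 0$ for every real $t$, and to extract the inequality from the minimizer of the resulting quadratic polynomial in $t$. This is the standard constructive route to the Cauchy--Schwarz inequality, and it applies verbatim here because $P \succ 0$ makes $\|\cdot\|_P$ a genuine norm. First I would dispose of the degenerate case $y = 0$: then $x^\top P y = 0 = \|x\|_P\|y\|_P$ and the claim holds with equality. So from now on assume $y \neq 0$, which, since $P \succ 0$, gives $\|y\|_P^2 = y^\top P y > 0$ and legitimizes the division performed below.

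Next, using the symmetry $P = P^\top$ so that $x^\top P y = y^\top P x$, I would expand
\begin{equation*}
    g(t) := \|x - t y\|_P^2 = (x - ty)^\top P (x - ty) = \|x\|_P^2 - 2t\,(x^\top P y) + t^2\,\|y\|_P^2 \geq 0
\end{equation*}
for all $t \in \mathbb{R}$. The key step is to evaluate $g$ at its minimizer: since the leading coefficient $\|y\|_P^2$ is positive, $g$ attains its minimum at $t^\star = (x^\top P y)/\|y\|_P^2$, and substituting yields
\begin{equation*}
    0 \leq g(t^\star) = \|x\|_P^2 - \frac{(x^\top P y)^2}{\|y\|_P^2}.
\end{equation*}
Multiplying through by $\|y\|_P^2 > 0$ gives $(x^\top P y)^2 \leq \|x\|_P^2\,\|y\|_P^2$, and taking nonnegative square roots produces $|x^\top P y| \leq \|x\|_P\,\|y\|_P$, whence in particular $x^\top P y \leq \|x\|_P\,\|y\|_P$, the desired bound.

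There is no genuine obstacle here, as the argument is routine; the proof is offered only as a self-contained reminder. The sole points that warrant care are the symmetry of $P$, which is what merges the two cross terms into the single coefficient $-2t\,(x^\top P y)$, and the separate treatment of the case $y = 0$, which ensures the division by $\|y\|_P^2$ is always well defined. One could equivalently phrase the last step via the discriminant of $g$, but evaluating at $t^\star$ makes the bound fully explicit.
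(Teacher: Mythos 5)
Your proof is correct, but it follows a different route from the paper's. You run the classical quadratic argument from first principles: expand $\|x-ty\|_P^2 \geq 0$, use the symmetry of $P$ to merge the cross terms, and evaluate at the minimizer $t^\star$ (equivalently, the discriminant condition), with the degenerate case $y=0$ handled separately. The paper instead factors $P = M^\top M$ (possible since $P \succ 0$), writes $x^\top P y = (Mx)^\top (My)$, and invokes the already-known Euclidean Cauchy--Schwarz inequality together with the identity $\|Mx\| = \|x\|_P$. The paper's reduction is shorter and exhibits explicitly how the $P$-norm is just the Euclidean norm after the change of coordinates $x \mapsto Mx$, but it leans on the existence of the factorization $P = M^\top M$ and on the Euclidean inequality as a black box. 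Your argument is fully self-contained, needs no factorization, and actually yields the slightly stronger two-sided bound $|x^\top P y| \leq \|x\|_P \|y\|_P$; the price is the extra bookkeeping (the $y=0$ case and the explicit use of $P=P^\top$, which the paper's statement leaves implicit in the notation $P \succ 0$). Both are valid; neither has a gap.
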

\begin{proof}
    Since $P \succ 0$, there exists a matrix $M \in \mathbb{R}^{n \times n}$ such that $P = M^\top M$ \citep{matrix_computations}. Then,
    \begin{equation*}
        x^\top P y = x^\top M^\top M y = (Mx)^\top My \leq \|Mx\| \ \|My\|,
    \end{equation*}
    by the Cauchy-Schwarz inequality applied to the Euclidean norm on $\mathbb{R}^n$ \citep{matrix_computations}. Note that
    \begin{equation*}
        \|Mx\| = \sqrt{ (Mx)^\top Mx} = \sqrt{x^\top M^\top M x} = \sqrt{x^\top Px} = \|x\|_P.
    \end{equation*}
    Similarly, $\|My\| = \|y\|_P$. Thus, $x^\top P y \leq \|x\|_P \|y\|_P$.
\end{proof}

We now show how the non-resilience of subsytem~\eqref{eq:split system N} translates to a positive $z_{max}^{P_N}$.

\begin{lemma}\label{lemma: z_max > 0}
    With $P_N \succ 0$ and $z_{max}^{P_N} = \underset{w_N\, \in\, \mathcal{W}_N}{\max} \big\{ \underset{u_N\, \in\, \mathcal{U}_N}{\min} \|C_N w_N + B_N u_N\|_{P_N} \big\}$, we have $-C_N \mathcal{W}_N \nsubseteq B_N \mathcal{U}_N \iff z_{max}^{P_N} > 0$.
\end{lemma}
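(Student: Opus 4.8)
The plan is to prove the logically equivalent statement $z_{max}^{P_N} = 0 \iff -C_N \mathcal{W}_N \subseteq B_N \mathcal{U}_N$ and then negate both sides, using that $z_{max}^{P_N} \geq 0$ always holds since it is an extremum of nonnegative quantities. The whole argument hinges on a single pointwise observation: for a fixed $w_N \in \mathcal{W}_N$, the inner minimum $\min_{u_N \in \mathcal{U}_N} \|C_N w_N + B_N u_N\|_{P_N}$ vanishes if and only if there is some $u_N \in \mathcal{U}_N$ with $B_N u_N = -C_N w_N$, that is, if and only if $-C_N w_N \in B_N \mathcal{U}_N$.

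First I would justify that the inner minimum is attained: $\mathcal{U}_N = [-1,1]^{m_N - p_N}$ is compact and $u_N \mapsto \|C_N w_N + B_N u_N\|_{P_N}$ is continuous, being the composition of an affine map with the $P_N$-norm, so the Weierstrass theorem guarantees a minimizer. Since $\|\cdot\|_{P_N}$ is a genuine norm (as $P_N \succ 0$), this minimum equals zero exactly when $C_N w_N + B_N u_N = 0$ for some admissible $u_N$, which is precisely the membership $-C_N w_N \in B_N \mathcal{U}_N$.

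With this pointwise equivalence in hand, the two directions follow directly. For the forward direction, if $z_{max}^{P_N} = 0$ then, because $z_{max}^{P_N}$ is the maximum over $w_N \in \mathcal{W}_N$ of the nonnegative inner minima, every inner minimum must itself equal zero; hence $-C_N w_N \in B_N \mathcal{U}_N$ for every $w_N \in \mathcal{W}_N$, i.e.\ $-C_N \mathcal{W}_N \subseteq B_N \mathcal{U}_N$. Conversely, if $-C_N \mathcal{W}_N \subseteq B_N \mathcal{U}_N$, each inner minimum is zero by the pointwise equivalence, so their maximum $z_{max}^{P_N}$ is zero. Negating this equivalence and recalling $z_{max}^{P_N} \geq 0$ yields the claimed statement $z_{max}^{P_N} > 0 \iff -C_N \mathcal{W}_N \nsubseteq B_N \mathcal{U}_N$.

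The only mild subtlety — and the step I would be most careful about — is the ``distance zero iff membership'' implication, which silently uses that $B_N \mathcal{U}_N$ is closed; this holds because it is the image of the compact set $\mathcal{U}_N$ under the continuous linear map $B_N$, hence compact and in particular closed. I would also remark, but not belabor, that the outer maximum over the compact set $\mathcal{W}_N$ is attained because $w_N \mapsto \min_{u_N \in \mathcal{U}_N} \|C_N w_N + B_N u_N\|_{P_N}$ is a continuous (indeed Lipschitz) function of $w_N$, being the $P_N$-distance from the point $-C_N w_N$ to the fixed set $B_N \mathcal{U}_N$; this is convenient but not strictly necessary, since the argument above only relies on the elementary fact that a maximum of nonnegative numbers is zero if and only if each of them is zero.
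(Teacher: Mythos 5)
Your proof is correct and follows essentially the same route as the paper's: both arguments reduce to the pointwise fact that, by compactness of $\mathcal{U}_N$ and continuity of the $P_N$-norm, the inner minimum is attained and hence vanishes if and only if $-C_N w_N \in B_N \mathcal{U}_N$. The paper states the two implications directly ($\subseteq \Rightarrow z_{max}^{P_N}=0$ and $\nsubseteq \Rightarrow z_{max}^{P_N}>0$) while you prove the equivalence $z_{max}^{P_N}=0 \iff -C_N\mathcal{W}_N \subseteq B_N\mathcal{U}_N$ and negate, which is the same argument up to contraposition.
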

\begin{proof}
    If $-C_N \mathcal{W}_N \subseteq B_N \mathcal{U}_N$, then for all $w_N \in \mathcal{W}_N$, there exists $u_N \in \mathcal{U}_N$ such that $C_N w_N + B_N u_N = 0$. Hence, $\underset{u_N\, \in\, \mathcal{U}_N}{\min} \big\{ \|C_N w_N + B_N u_N\|_{P_N} \big\} = 0$ for all $w_N \in \mathcal{W}_N$, i.e., $z_{max}^{P_N} = 0$.
    
    On the other hand, if $-C_N \mathcal{W}_N \nsubseteq B_N \mathcal{U}_N$, there exists $w_N \in \mathcal{W}_N$ such that $C_N w_N + B_N u_N \neq 0$ for all $u_N \in \mathcal{U}_N$. The function $u_N \mapsto \|C_N w_N + B_N u_N\|_{P_N}$ is continuous, nonnegative and $\mathcal{U}_N$ is compact, hence it reaches a minimum which cannot be null on $\mathcal{U}_N$, i.e., $\underset{u_N\, \in\, \mathcal{U}_N}{\min} \big\{ \|C_N w_N + B_N u_N\|_{P_N} \big\} > 0$. Then, $z_{max}^{P_N} > 0$.
\end{proof}

\begin{lemma}\label{lemma: ODE calculations}
    Detailed calculations for the proof of Proposition~\ref{prop: steady state X}.
\end{lemma}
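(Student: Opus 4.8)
The plan is to solve the second-order linear constant-coefficient ODE \eqref{eq: non-homogeneous ODE} by the method of undetermined coefficients, distinguishing the non-resonant case $\alpha\alpha_N \neq \gamma\gamma_N$ from the resonant case $\alpha\alpha_N = \gamma\gamma_N$. First I would treat the homogeneous part $\ddot s + (\alpha - \alpha_N)\dot s - \gamma\gamma_N s = 0$, whose characteristic polynomial is $r^2 + (\alpha - \alpha_N)r - \gamma\gamma_N = 0$. Its roots are
\begin{equation*}
    r_\pm = \frac{1}{2}\Big(\alpha_N - \alpha \pm \sqrt{(\alpha - \alpha_N)^2 + 4\gamma\gamma_N}\Big),
\end{equation*}
exactly as in \eqref{eq: r_pm}. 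Since $\gamma\gamma_N \geq 0$, the discriminant is nonnegative, so both roots are real and distinct whenever $\gamma\gamma_N > 0$, giving the homogeneous solution $h_+ e^{r_+ t} + h_- e^{r_- t}$.

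Second, I would seek a particular solution matching the forcing term $(\gamma z_{max}^{P_N} - \alpha_N b_{min}^{\hat P})e^{\alpha_N t}$. The decisive observation is that evaluating the characteristic polynomial at $r = \alpha_N$ gives $\alpha_N^2 + (\alpha - \alpha_N)\alpha_N - \gamma\gamma_N = \alpha\alpha_N - \gamma\gamma_N$, which is nonzero precisely in the non-resonant case. There the ansatz $s_p = p\,e^{\alpha_N t}$ succeeds: substitution yields $p(\alpha\alpha_N - \gamma\gamma_N) = \gamma z_{max}^{P_N} - \alpha_N b_{min}^{\hat P}$, so $p = \frac{\gamma z_{max}^{P_N} - \alpha_N b_{min}^{\hat P}}{\alpha\alpha_N - \gamma\gamma_N}$, recovering the value in \eqref{eq: r_pm}, and the general solution is $s(t) = p\,e^{\alpha_N t} + h_+ e^{r_+ t} + h_- e^{r_- t}$.

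Third, in the resonant case $\alpha\alpha_N = \gamma\gamma_N$, the forcing exponent $\alpha_N$ is itself a root of the characteristic equation, so the plain exponential ansatz fails. By Vieta's formulas the product of the roots equals $-\gamma\gamma_N = -\alpha\alpha_N$; since one root is $\alpha_N$, the other must be $-\alpha$, which explains the form $h_+ e^{\alpha_N t} + h_- e^{-\alpha t}$ of the homogeneous part. I would then take the modified ansatz $s_p = p\,t\,e^{\alpha_N t}$; on substituting, the $t e^{\alpha_N t}$ terms cancel by resonance, while matching the $e^{\alpha_N t}$ terms gives $p(\alpha + \alpha_N) = \gamma z_{max}^{P_N} - \alpha_N b_{min}^{\hat P}$, i.e.\ $p = \frac{\gamma z_{max}^{P_N} - \alpha_N b_{min}^{\hat P}}{\alpha + \alpha_N}$, as claimed in Proposition~\ref{prop: steady state X}.

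Finally, in each case I would determine the constants $h_\pm$ from the two initial conditions $s(0) = v(0) = \|\chi(0)\|_{\hat P}$ and $\dot s(0) = f\big(0, s(0)\big)$, solving the resulting $2\times 2$ linear system to obtain the explicit expressions for $h_\pm$ stated in the proof of Proposition~\ref{prop: steady state X}. I do not anticipate any genuine obstacle here: the argument is the textbook solution of a forced linear ODE, and the only point requiring care is correctly detecting the resonance $\alpha\alpha_N = \gamma\gamma_N$ and identifying the second characteristic root as $-\alpha$ via Vieta's formulas before applying the undetermined-coefficient ansatz.
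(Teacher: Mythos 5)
Your proposal is correct and follows essentially the same route as the paper's proof: undetermined coefficients on \eqref{eq: non-homogeneous ODE}, the same characteristic roots $r_\pm$, the same particular solutions $p\,e^{\alpha_N t}$ and $p\,t\,e^{\alpha_N t}$ in the non-resonant and resonant cases, and the same two initial conditions $s(0)=v(0)$ and $\dot s(0)=f\big(0,s(0)\big)$ to fix $h_\pm$. The only cosmetic difference is that you identify the resonant roots $\alpha_N$ and $-\alpha$ via Vieta's formulas, whereas the paper computes the discriminant $(\alpha-\alpha_N)^2+4\alpha\alpha_N=(\alpha+\alpha_N)^2$ directly.
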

\begin{proof}
    We first study the linear homogeneous differential equation associated with \eqref{eq: non-homogeneous ODE}, i.e.,
    \begin{equation}\label{eq: homogeneous ODE}
        \ddot s(t) + (\alpha - \alpha_N) \dot s(t) - \gamma \gamma_N s(t) = 0.
    \end{equation}
    Solutions of \eqref{eq: homogeneous ODE} can be written as $s_h(t) = e^{rt}$ with $r \in \mathbb{C}$. Plugging $s_h$ in \eqref{eq: homogeneous ODE} leads to the quadratic equation $r^2 + (\alpha - \alpha_N) r - \gamma \gamma_N = 0$ after diving by $e^{rt}$. The solutions of this quadratic equation are $r_{\pm} = \frac{1}{2} \big( \alpha_N - \alpha \pm \sqrt{(\alpha - \alpha_N)^2 + 4\gamma \gamma_N} \big)$. Notice that the discriminant is nonnegative, since $\gamma \geq 0$ and $\gamma_N \geq 0$, so both $r_\pm \in \mathbb{R}$.
    We also need a particular solution of the non-homogeneous equation \eqref{eq: non-homogeneous ODE}. We take $p \in \mathbb{R}$ such that $s_p(t) = p e^{\alpha_N t}$ and plug it in \eqref{eq: non-homogeneous ODE} to obtain
    \begin{align*}
        & \big( p \alpha_N^2 + (\alpha - \alpha_N) p \alpha_N - \gamma \gamma_N p - \gamma z_{max}^{P_N} + \alpha_N b_{min}^{\hat{P}} \big)e^{\alpha_N t} = 0, \\
        \text{i.e.,} \quad &p = \frac{\gamma z_{max}^{P_N} - \alpha_N b_{min}^{\hat{P}}}{ \alpha_N^2 + (\alpha - \alpha_N) \alpha_N - \gamma \gamma_N} = \frac{\gamma z_{max}^{P_N} - \alpha_N b_{min}^{\hat{P}}}{ \alpha \alpha_N - \gamma \gamma_N}.
    \end{align*}
    Let us first treat the case where $\alpha \alpha_N \neq \gamma \gamma_N$, so that $p$ is well-defined. In this case, the general solution of \eqref{eq: non-homogeneous ODE} is $s(t) = p e^{\alpha_N t} + h_+ e^{r_+ t} + h_- e^{r_- t}$ with $h_\pm \in \mathbb{R}$ two constants to choose. Since we obtained our solution by solving $\ddot s(t) = \frac{d f}{dt}\big(t, s(t) \big)$ instead of $\dot s(t) = f\big( t, s(t) \big)$, we have an additional initial condition to satisfy: $\dot s(0) = f\big(0, s(0) \big)$.
    
    Now we can apply the Comparison Lemma of \citep{Khalil} stating that if $\dot s(t) = f\big( t, s(t) \big)$, $f$ is continuous in $t$ and locally Lipschitz in $s$ and $s(0) = v(0)$, then $\dot v(t) \leq f\big( t, v(t) \big)$ implies $v(t) \leq s(t)$ for all $t \geq 0$. Using $\|\chi(t)\|_{\hat{P}} = y(t) = e^{-\alpha_N t} v(t) \leq e^{-\alpha_N t} s(t)$, we finally obtain \eqref{eq: X bound}.
    To determine the value of the constants $h_\pm$, we use the initial conditions $s(0) = v(0) = y(0)$ and $\dot s(0) = f \big( 0, s(0) \big)$, i.e., 
    \begin{equation*}
        p + h_+ + h_- = \|\chi(0)\|_{\hat{P}} \quad \text{and} \quad \alpha_N p + h_+ r_+ + h_- r_- = (\alpha_N - \alpha) \|\chi(0)\|_{\hat{P}} - b_{min}^{\hat{P}} + \gamma \|x_N(0)\|_{P_N}.
    \end{equation*}
    We can solve these equations as
    \begin{equation*}
        h_{\pm} = \frac{(\alpha_N - \alpha - r_{\mp})\|\chi(0)\|_{\hat{P}} + \gamma \|x_N(0)\|_{P_N} -b_{min}^{\hat{P}} + (r_{\mp} - \alpha_N)p}{\pm\sqrt{(\alpha - \alpha_N)^2 + 4\gamma \gamma_N}}.
    \end{equation*}
    
    In the case $\alpha \alpha_N = \gamma \gamma_N$, the discriminant of the quadratic equation arising from the homogeneous differential equation is $(\alpha - \alpha_N)^2 + 4\alpha \alpha_N = (\alpha + \alpha_N)^2$, which yields $r_+ = \alpha_N$ and $r_- = -\alpha$. Hence $e^{\alpha_N t}$ is an homogeneous solution and cannot be a particular solution of the non-homogeneous differential equation \eqref{eq: non-homogeneous ODE}. Instead, we try $s_p(t) = p t e^{\alpha_N t}$ as a particular solution. We calculate its derivatives $\dot s_p(t) = p(1 + \alpha_N t) e^{\alpha_N t}$, $\ddot s_p(t) = p(2\alpha_N + \alpha_N^2 t) e^{\alpha_N t}$ and plug it in \eqref{eq: non-homogeneous ODE}. After dividing by $e^{\alpha_N t}$ we obtain
    \begin{align*}
         0 &= p(2\alpha_N + \alpha_N^2 t) + (\alpha - \alpha_N)p(1 + \alpha_N t) - \alpha \alpha_N pt - \gamma z_{max}^{P_N} + \alpha_N b_{min}^{\hat{P}} \\
         &= p( 2\alpha_N + \alpha - \alpha_N) + pt\big( \alpha_N^2 +\alpha_N (\alpha - \alpha_N) - \alpha \alpha_N \big) -  \gamma z_{max}^{P_N} + \alpha_N b_{min}^{\hat{P}},
    \end{align*}
    i.e., $p = \frac{\gamma z_{max}^{P_N} - \alpha_N b_{min}^{\hat{P}}}{\alpha + \alpha_N}$. In this case $p$ is well-defined since $\alpha > 0$ and $\alpha_N > 0$. The general solution is then $s(t) = pt e^{\alpha_N t} + h_+ e^{\alpha_N t} + h_- e^{-\alpha t}$ with $h_\pm \in \mathbb{R}$ two constants. Applying the Comparison Lemma of \citep{Khalil} as above, we obtain $\|\chi(t)\|_{\hat{P}} = y(t) = e^{-\alpha_N t} v(t) \leq e^{-\alpha_N t} s(t)$, which yields \eqref{eq: X bound degenerate}. The initial conditions $s(0) = y(0)$ and $\dot s(0) = f\big(0, s(0) \big)$ lead to
    \begin{equation*}
        h_+ + h_- = \|\chi(0)\|_{\hat{P}} \quad \text{and} \quad p + h_+ \alpha_N - h_- \alpha = (\alpha_N - \alpha) \|\chi(0)\|_{\hat{P}} - b_{min}^{\hat{P}} + \gamma \|x_N(0)\|_{P_N}.
    \end{equation*}
    We can solve these equations as
    \begin{equation*}
        h_{\pm} = \frac{\frac{1}{2}\big(-\alpha_N - \alpha \pm 3(\alpha - \alpha_N) \big)\|\chi(0)\|_{\hat{P}} \mp \gamma \|x_N(0)\|_{P_N} \pm b_{min}^{\hat{P}} \pm p}{\alpha_N + \alpha}.
    \end{equation*}
\end{proof}

\begin{spacing}{0.8}
\bibliographystyle{IEEEtran}
\bibliography{references}
\end{spacing}

\end{document}